\documentclass[11pt]{article}
\usepackage{fullpage}

\usepackage{latexsym}
\usepackage{amsmath}
\usepackage{amssymb}
\usepackage{amsthm}
\usepackage{hyperref}
\usepackage{cite}
\usepackage{graphicx}
\usepackage{color}
\usepackage{enumitem}
\setlist[description]{font=\normalfont\itshape\textbullet\space}


\newtheorem{theorem}{Theorem}
\newtheorem{theorem*}{Theorem}
\newtheorem{corollary}[theorem]{Corollary}
\newtheorem{lemma}[theorem]{Lemma}

\newtheorem{proposition}[theorem]{Proposition}
\newtheorem{definition}[theorem]{Definition}
\newtheorem{claim}[theorem]{Claim}
\newtheorem{fact}[theorem]{Fact}

\theoremstyle{definition}
\newtheorem{problem}[theorem]{Problem}
\newtheorem{example}[theorem]{Example}
\newtheorem{remark}[theorem]{Remark}



\renewcommand{\vec}[1]{\mathbf{#1}}
\newcommand{\poly}{\mathrm{poly}}

\newcommand{\GL}{\mathrm{GL}}
\newcommand{\E}{\mathbb{E}}
\newcommand{\F}{\mathbb{F}}
\newcommand{\K}{\mathbb{K}}
\newcommand{\Z}{\mathbb{Z}}
\newcommand{\Q}{\mathbb{Q}}
\newcommand{\R}{\mathbb{R}}
\newcommand{\C}{\mathbb{C}}
\newcommand{\qH}{\mathbb{H}}
\newcommand{\N}{\mathbb{N}}

\renewcommand{\hom}{\mathrm{Hom}}
\newcommand{\Ann}{\mathrm{Ann}}
\newcommand{\diag}{\mathrm{diag}}

\newcommand{\rk}{\mathrm{rk}}

\newcommand{\im}{\mathrm{im}}

\newcommand{\cB}{\mathcal{B}}
\newcommand{\cC}{\mathcal{C}}

\newcommand{\Adj}{\mathrm{Adj}}
\newcommand{\Rad}{\mathrm{Rad}}

\renewcommand{\iff}{\Leftrightarrow}

\newcommand{\fdchar}{\mathrm{char}}

\newcommand{\cA}{\mathfrak{A}}

\newcommand{\vecB}{\vec{B}}
\newcommand{\vecC}{\vec{C}}
\newcommand{\vzero}{\mathbf{0}}

\newcommand{\vecf}{\vec{f}}
\newcommand{\vecg}{\vec{g}}

\def\A{{\mathfrak A}}
\def\D{{\mathcal D}}
\def\Calg{{\mathcal C}}
\def\B{{\mathfrak B}}

\title{Algorithms based on $*$-algebras, 
and their applications 
to isomorphism of polynomials with one secret, group isomorphism, and polynomial 
identity testing\footnote{A preliminary version of this paper appeared in SODA 
2018 as \cite{IQ18}.}
}

\author{
G\'abor Ivanyos\thanks{Institute for Computer Science and Control, Hungarian 
Academy of Sciences, 
Budapest, Hungary 
({\tt Gabor.Ivanyos@sztaki.mta.hu}).} 
\and
 Youming Qiao\thanks{Centre for Quantum Software and Information, 
 University of Technology Sydney, Australia ({\tt Youming.Qiao@uts.edu.au}) 
 }
}

\date{\today}

\begin{document}

\maketitle

\begin{abstract} 
We consider two basic algorithmic 
problems concerning tuples 
of (skew-)symmetric matrices. 
The first problem asks to decide, given two tuples of (skew-)symmetric matrices 
$(B_1, \dots, B_m)$ and $(C_1, \dots, C_m)$, whether 
there exists an invertible matrix $A$ such that for every $i\in\{1, \dots, m\}$, 
$A^tB_iA=C_i$. We show that this problem can be solved in randomized polynomial 
time over finite fields of odd size, the reals, and the 
complex numbers. 
The second problem asks to decide, given a tuple of square matrices $(B_1, 
\dots, B_m)$, whether there exist invertible matrices $A$ and 
$D$, such that for every $i\in\{1, \dots, m\}$, $AB_iD$ is (skew-)symmetric.
We show that this problem can be solved in deterministic polynomial time over 
fields of characteristic not $2$. For 
both problems we exploit the structure 
of the underlying $*$-algebras 
(algebras with an involutive anti-automorphism), and 
utilize results and methods from the module isomorphism problem.

Applications of our results range from multivariate 
cryptography, group 
isomorphism, to polynomial identity testing. Specifically, these results imply 
efficient 
algorithms for the following 
problems. (1) Test 
isomorphism of quadratic forms with one secret over a finite field of odd 
size. This problem belongs to a family of problems that serves as the security 
basis of certain 
authentication schemes proposed  
by Patarin (Eurocrypt 1996). 
(2) Test 
isomorphism of $p$-groups of class 2 and exponent $p$ ($p$ odd) with order 
$p^\ell$ 
in time polynomial in the group order, when the commutator 
subgroup is of order $p^{O(\sqrt{\ell})}$. 
(3) Deterministically reveal two families of singularity witnesses caused by the 
skew-symmetric 
structure. This represents a natural next step for the polynomial  
identity testing 
problem, 
in the 
direction set up by the recent resolution of the non-commutative rank problem 
(Garg-Gurvits-Oliveira-Wigderson, 
FOCS 2016; Ivanyos-Qiao-Subrahmanyam, ITCS 2017).
\end{abstract}

\section{Introduction}

We consider two basic algorithmic problems concerning tuples of (skew-)symmetric 
matrices. For convenience, for $\epsilon\in \{1, -1\}$, we say an $n\times n$ 
matrix $B$ is $\epsilon$-symmetric, if $B^t=\epsilon B$. Clearly, when 
$\epsilon=1$ (resp. $\epsilon=-1$), $B$ is symmetric (resp. skew-symmetric).

The first problem asks to decide, given two tuples of $n\times n$
$\epsilon$-symmetric matrices $(B_1, \dots, B_m)$ and $(C_1, \dots, C_m)$, whether 
there exists an invertible $n\times n$ matrix $A$, such that $\forall i\in[m]$, 
$A^tB_iA=C_i$. 
We call this problem \emph{the isometry problem for $\epsilon$-symmetric matrix 
tuples}. 
We show that this problem can be solved in randomized polynomial time when the 
underlying field is a finite field of odd size, the field of real numbers, or 
the field of complex numbers. 

The second problem asks to decide, given a tuple of $n\times n$ matrices $(B_1, 
\dots, B_m)$, whether there exist invertible $n\times n$ matrices $A$ and $D$, 
such that 
$\forall i\in[m]$, $AB_iD$ is $\epsilon$-symmetric. 
We call this problem the \emph{$\epsilon$-symmetrization problem for matrix 
tuples}. 
We show that this problem can be solved in deterministic polynomial 
time, as long as the underlying field is not of characteristic $2$. 

At first sight, these two problems seem to be of interest mostly in computer 
algebra. 
However, as we explain below, these results are motivated by, and therefore have 
applications to, three seemingly 
unrelated research 
topics. These are multivariate cryptography, group isomorphism problem, and 
polynomial 
identity testing problem, which are traditionally studied in cryptography, 
computational 
group theory, and algebraic complexity theory, respectively. The algorithm for 
isometry testing of $\epsilon$-symmetric matrix tuples leads to substantial 
improvements over recent algorithms from multivariate cryptography and group 
isomorphism \cite{BFP15,BMW15}. In particular, the algorithm for isometry testing 
of symmetric matrix tuples completely settles the so-called 
Isomorphism of Quadratic Polynomials with One Secret problem over finite fields of 
odd 
size \cite{Pat96}. The 
algorithm for the $\epsilon$-symmetrization problem represents a
natural next step for the polynomial identity testing problem in the direction set 
up by the recent 
resolution of the non-commutative rank problem \cite{GGOW,IQS16a,IQS17}.

The algorithms for the isometry problem and the $\epsilon$-symmetrization problem 
share two key ingredients in common. The 
first one is to utilize the structure of $*$-algebras, that is algebras with an 
involutive anti-automorphism, underlying these problems. 
More specifically, given 
a field $\F$, an $\F$-algebra $\cA$ with 
anti-automorphism $*:\cA\to\cA$ of order at most 2 is termed as a $*$-algebra. We 
refer the reader to Section~\ref{sec:prel} for more details on the structure of 
$*$-algebras.
Our use of $*$-algebras 
is inspired by the works of J. B. Wilson, who 
pioneered the use of $*$-algebras in computing with $p$-groups 
\cite{Wil09,Wil09b,BW12}. The second one is 
the results and methods from the module isomorphism problem, which asks to decide, 
given two tuples of matrices $(B_1, \dots, B_m)$, $(C_1, \dots, C_m)$, whether 
there exists an invertible matrix $A$, such that $\forall i\in[m]$, $AB_i=C_iA$. 
This problem admits two deterministic efficient algorithms by \cite{CIK97,IKS10} 
and 
\cite{BL08}. These results and the techniques are used frequently in 
both algorithms. 

In this introduction, we first elaborate on the 
applications, from
Section~\ref{subsubsec:crypto} to~\ref{subsubsec:pit}. Since the 
applications span across three different 
areas, in order to provide the contexts for readers with different backgrounds, we 
shall not refrain from including 
certain 
background 
information, despite that it is well-known for researchers in the 
respective area. In Section~\ref{subsec:result}, we formally present the 
results, explain more on the two key ingredients shared by
both algorithms, and describe some open problems. 

We now set up some notation. $\F$, $\E$, and $\K$ are used to 
denote fields. $\F_q$ denotes the finite field of size $q$, 
$\R$ the real field, and $\C$ the complex field. Unless otherwise stated, we work 
with fields of characteristic not 
$2$. 
$M(n, \F)$ denotes the linear 
space 
of $n\times n$ matrices over $\F$, and $\GL(n, \F)$ the group of invertible 
matrices in $M(n, \F)$. $S^\epsilon(n, 
\F)$ denotes the linear space of $n\times n$ $\epsilon$-symmetric matrices over 
$\F$. 
We may write $M(n, q)$, $\GL(n, q)$, and $S^\epsilon(n, q)$ for $M(n, \F_q)$, 
$\GL(n, \F_q)$, and $S^\epsilon(n, \F_q)$, respectively. 
A \emph{matrix space} is a linear 
subspace of $M(n, \F)$, and $\langle\cdot\rangle$ denotes linear span. 
Let 
$\vecB=(B_1, \dots, B_m)\in M(n, \F)^m$ be a matrix tuple. 
For $A, D\in M(n, \F)$, $A\vecB D:=(AB_1D, 
\dots, AB_mD)$ and $\vecB^t:=(B_1^t, \dots, B_n^t)$.


\subsection{Multivariate cryptography}\label{subsubsec:crypto}

In 1996, Patarin proposed a family of 
asymmetric cryptography schemes based on equivalence of polynomials in 
\cite{Pat96}, which can be used for identification and signature schemes. 
One scheme in this family is based on 
the assumed hardness of the following problem. 
\begin{problem}{\sc (Isomorphism of Quadratic Forms with One Secret (IQF1S))}
Let $\vecf=(f_1, \dots, f_m)$ and $\vecg=(g_1, \dots, g_m)$ be two tuples of 
homogeneous 
quadratic polynomials in $n$ variables $\{x_1, \dots, x_n\}$ over a finite field 
$\F$. 
Decide if there exists $A\in \GL(n, \F)$ such that $\forall k\in[m]$, 
$f_k^A=g_k$, where $A=(a_{i,j})_{i,j\in[n]}$ acts on $\{x_1, \dots, 
x_n\}$ by sending $x_i$ to $\sum_{j\in[n]}a_{i,j}x_j$.
\end{problem}
For readers familiar with Patarin's work \cite{Pat96}, IQF1S is 
Patarin's Isomorphism of Polynomials with One Secret (IP1S) restricting to 
quadratic polynomials, which asks the same question but for possibly inhomogeneous 
quadratic polynomials and affine transformations.\footnote{Patarin's formulation 
is known to reduce to the 
formulation here \cite[Proposition 5]{BFP15}.} Such a restriction is well 
justified from the practical viewpoint, as it minimizes the public-key storage and 
improves the actual performance, so this has been studied most in the 
literature. Since Patarin's introduction of these problems, 
IQF1S and several related problems have been intensively studied 
\cite{PGC98,GMS03,Per05,FP06,Kay11,BFFP11,MPG13,BFV13,PFM14,BFP15}. 

Most notably, in \cite{BFP15}, Berthomieu et al.~presented an efficient
randomized algorithm for IQF1S
under the conditions that (1) $\vecf$ satisfies a regularity 
condition, namely that there exists a non-degenerate form in the linear span 
of $f_i$'s, (2) the underlying field is large enough and of characteristic not 
$2$, and (3) the desired solution 
may be from an extension field 
\cite[Theorem 
2]{BFP15}. They further observed that, it seems that most known 
algorithms on IQF1S would fail on the irregular instances, and proposed 
the complexity of such instances as an open question \cite[Sec. 1, 
Open Question]{BFP15}.

By the classical correspondence between quadratic forms and symmetric matrices, 
it is easy to see the equivalence between IQF1S and the isometry problem of tuples 
of 
symmetric matrices. Our algorithm for the latter problem then translates to a 
complete solution of IQF1S over 
finite fields of odd size, answering \cite[Sec. 1, Open Question]{BFP15} for such 
fields.
\begin{theorem}\label{thm:crypto}
Let $\F$ be a finite field of odd size. There exists a randomized polynomial-time 
algorithm that solves the Isomorphism of Quadratic Forms with One Secret problem 
over $\F$.
\end{theorem}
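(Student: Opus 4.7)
The plan is to derive Theorem~\ref{thm:crypto} as a direct corollary of the isometry algorithm for symmetric matrix tuples stated in the first sentence of the abstract. The key observation is that over a field of characteristic not $2$, homogeneous quadratic polynomials in $n$ variables are in bijection with symmetric $n\times n$ matrices, and that the variable substitution described in the IQF1S problem translates into exactly the congruence action $B\mapsto A^tBA$.

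Concretely, I would first fix the correspondence $f\leftrightarrow B_f\in S^1(n,\F)$ defined by $f(\vec{x})=\vec{x}^tB_f\vec{x}$, where for $f=\sum_{i\le j}c_{ij}x_ix_j$ we set $(B_f)_{ii}=c_{ii}$ and $(B_f)_{ij}=(B_f)_{ji}=c_{ij}/2$ for $i<j$. The division by $2$ is where oddness of the characteristic is used; it makes the map $f\mapsto B_f$ an $\F$-linear bijection between homogeneous quadratic forms and symmetric matrices. Next I would compute the effect of the substitution: if $A=(a_{ij})\in\GL(n,\F)$ sends $x_i$ to $\sum_{j}a_{ij}x_j$, then viewing $\vec{x}$ as a column vector this substitution replaces $\vec{x}$ by $A\vec{x}$, so
\[
f^A(\vec{x})=f(A\vec{x})=(A\vec{x})^tB_f(A\vec{x})=\vec{x}^t(A^tB_fA)\vec{x},
\]
which shows $B_{f^A}=A^tB_fA$. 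Consequently, setting $B_k:=B_{f_k}$ and $C_k:=B_{g_k}$, the condition $f_k^A=g_k$ for all $k\in[m]$ is equivalent to $A^tB_kA=C_k$ for all $k\in[m]$.

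Thus an instance $(\vecf,\vecg)$ of IQF1S over $\F$ produces, in linear time, an equivalent instance $(\vecB,\vecC)\in S^1(n,\F)^m\times S^1(n,\F)^m$ of the isometry problem for symmetric matrix tuples, and conversely any $A$ solving the latter yields a solution to the former. Since $\F$ is a finite field of odd size, the randomized polynomial-time algorithm for isometry of $\epsilon$-symmetric matrix tuples (with $\epsilon=1$) claimed in the abstract applies, and the theorem follows.

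Because the correspondence and the translation of the group action are completely routine, there is no real obstacle in this argument: the entire content of Theorem~\ref{thm:crypto} is absorbed by the main isometry algorithm, and the only ingredient one must be careful about is the invertibility of $2$, which is guaranteed by the odd-size hypothesis. In the write-up I would simply make this reduction explicit and cite the forthcoming symmetric-isometry result.
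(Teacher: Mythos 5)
Your reduction is correct and is exactly the route the paper takes: IQF1S is translated via the standard correspondence $f\leftrightarrow B_f$ (using invertibility of $2$) into the isometry problem for tuples of symmetric matrices, whose congruence action $B\mapsto A^tB_fA$ matches the variable substitution, and then Theorem~\ref{thm:isometry}(1) is invoked. The paper states this correspondence without spelling out the computation $B_{f^A}=A^tB_fA$, which you supply correctly.
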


Furthermore, there has been a large body of works which aim to build 
public key cryptography schemes based on the hardness of solving systems of 
quadratic polynomials over finite fields. This approach is regarded as one 
candidate for 
post-quantum cryptography, in particular as a signature scheme \cite{NIST16}. We 
refer the reader to the thesis of Wolf 
\cite{Wol05} 
for an overview, and the recent article \cite{PCDY17} and references therein
for recent advances in this area. IQF1S 
and  
related problems play an important role in such schemes. As 
pointed out in \cite[Sec. 2.6.1]{Wol05}, though often not explicitly stated, it 
seems crucial to assume that 
IQF1S and related 
problems are difficult to ensure the security of these 
schemes. Theorem~\ref{thm:crypto} then suggests that the 
``one-secret'' versions of such schemes based on quadratic polynomials may not be 
secure.

\subsection{Group isomorphism problem}\label{subsubsec:gpi}

Group isomorphism problem (GpI) asks to 
decide 
whether two finite groups of order $n$ are isomorphic. It has been studied 
for several 
decades in both Computational Group Theory (CGT) and Theoretical Computer Science. 
The difficulty of this problem depends crucially on how we represent the groups in 
the algorithms. If the goal is to obtain an algorithm running in time $\poly(n)$, 
then we may assume that we have at our disposal the Cayley 
(multiplication) table of the group, as the Cayley table can be recovered from 
most 
reasonable 
models for computing with finite 
groups in time $\poly(n)$. 
Therefore, we restrict our discussion mostly to this very redundant 
model, 
%
which is meaningful mainly because we do not know a $\poly(n)$-time 
or even an $n^{o(\log 
n)}$-time algorithm \cite{Wil14} ($\log$ to the base $2$), despite that a simple 
$n^{\log n+O(1)}$-time 
algorithm has been known for decades \cite{FN70,Mil78}.
The past few years have witnessed a resurgence of activity on algorithms for this 
problem with worst-case analyses in terms of the group order;
%
we refer 
the 
reader to \cite{GQ17} which 
contains a survey of these algorithms.

%

It is long believed that $p$-groups (groups of a prime power order) form 
the bottleneck case for GpI. 
In fact, the decades-old quest for a polynomial-time algorithm has focused on 
class-$2$ $p$-groups, with little success. 
Even
if we restrict further to $p$-groups of class 2 and exponent $p$, the 
problem 
is still difficult.
Recently, some impressive progress on such $p$-groups was 
made on the CGT side, as seen in the works of Wilson, Brooksbank, and their 
collaborators \cite{Wil09,LW12,BMW15}. 

Most notably, a main result in 
\cite{BMW15} is a polynomial-time algorithm for $p$-groups of class $2$ and 
exponent $p$, when the commutator subgroup is of order $p^2$, in the model of 
quotients of permutation groups \cite{KL90}. This of course settles the same case 
in the 
Cayley table model. In fact, the same class of groups in the Cayley table model 
can be handled using one specific technique called the Pfaffian isomorphism test 
in \cite[Sec. 6.2]{BMW15}.
Still, despite 
all the progress, an efficient algorithm for $p$-groups of class $2$ and exponent 
$p$, with the commutator subgroup of order even $p^3$, was not 
known in the Cayley table model. Since we now have an efficient algorithm to test 
isometry of tuples of skew-symmetric matrices, the following result can be 
established.

\begin{theorem}\label{cor:p_group}
Let $p$ be an odd prime, and let two $p$-groups of class $2$ and exponent 
$p$ of order $p^\ell$, $G$ and $H$, be given by Cayley tables. If the 
commutator subgroup of $G$ is of order 
$p^{O(\sqrt{\ell})}$, then there exists a deterministic\footnote{The deterministic 
here is due to the last statement on derandomization of Theorem~\ref{thm:isometry} 
(1). That statement applies to the setting here, because the underlying field if 
$\F_p$ and our target is an algorithm with running time $p^{O(\sqrt{\ell})}$.} 
polynomial-time algorithm 
to 
test whether $G$ and $H$ are isomorphic. 
\end{theorem}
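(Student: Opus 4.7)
The plan is to reduce the group isomorphism question, via the classical Baer correspondence, to the pseudo-isometry problem for tuples of skew-symmetric matrices over $\F_p$, and then to reduce that pseudo-isometry problem to many instances of the skew-symmetric isometry problem handled by Theorem~\ref{thm:isometry}(1).

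Since $p$ is odd and $G$ has class $2$ and exponent $p$, both $V := G/G'$ and $W := G'$ are elementary abelian, hence $\F_p$-vector spaces; set $s := \dim W$ and $n := \dim V = \ell - s$, where by hypothesis $s = O(\sqrt{\ell})$. The commutator map descends to an alternating bilinear map $\phi_G : V \times V \to W$, and fixing bases of $V$ and $W$ yields a tuple $\vecB = (B_1, \ldots, B_s)$ of skew-symmetric $n \times n$ matrices over $\F_p$; the same construction applied to $H$ produces a tuple $\vecD$ of the same shape, and both tuples can be read off the Cayley tables in time $\poly(p^\ell)$. The Baer correspondence (see, e.g., \cite{Wil09}) asserts that $G \cong H$ if and only if $\phi_G$ and $\phi_H$ are \emph{pseudo-isometric}, i.e.\ there exist $A \in \GL(n, \F_p)$ and $C = (c_{ij}) \in \GL(s, \F_p)$ with $A^t D_i A = \sum_{k=1}^{s} c_{ik} B_k$ for every $i \in [s]$.

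To bridge the gap between pseudo-isometry (the two-sided equivalence we need) and plain isometry (the one-sided equivalence supplied by Theorem~\ref{thm:isometry}(1)), I would enumerate over the commutator-side transformation $C$. For each $C \in \GL(s, \F_p)$ form the tuple $C \cdot \vecB := \bigl(\sum_k c_{1k} B_k,\, \ldots,\, \sum_k c_{sk} B_k\bigr)$; then by the previous paragraph $G \cong H$ if and only if $\vecD$ and $C \cdot \vecB$ are isometric as tuples of skew-symmetric matrices for some $C$. The algorithm invokes the isometry tester of Theorem~\ref{thm:isometry}(1) on $(\vecD,\, C \cdot \vecB)$ once for each $C$ and outputs YES iff any call succeeds.

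For complexity, the enumeration ranges over $|\GL(s, \F_p)| \leq p^{s^2} = p^{O(\ell)} = \poly(p^\ell)$ matrices, and each isometry call runs in $\poly(p^\ell)$ time; the derandomization remark in the footnote to the theorem statement handles the probabilistic component, since the underlying field is $\F_p$ and a per-call overhead of $p^{O(\sqrt{\ell})}$ is easily absorbed. The main obstacle, and the reason for the hypothesis $|G'| = p^{O(\sqrt{\ell})}$, is precisely that Theorem~\ref{thm:isometry}(1) solves only isometry and not pseudo-isometry: without a more intrinsic treatment of the $\GL(s, \F_p)$-action on the commutator side, brute enumeration over $\GL(s, \F_p)$ appears unavoidable, and the square-root threshold is exactly where this enumeration still fits inside a polynomial-in-$p^\ell$ budget.
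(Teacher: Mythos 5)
Your proposal is correct and follows essentially the same route as the paper: apply Baer's correspondence to reduce to pseudo-isometry of skew-symmetric tuples, enumerate the $\GL(s,\F_p)$-action on the commutator side at cost $p^{s^2}=p^{O(\ell)}$, and invoke Theorem~\ref{thm:isometry}(1) (with its derandomization) on each resulting isometry instance. No gaps.
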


We explain how to obtain Theorem~\ref{cor:p_group} from our result. While the 
following 
reduction is well-known in CGT, we include it 
here for readers from other areas. Given a 
class $2$ and exponent $p$ $p$-group $G$, let $[G, G]$ denote its commutator 
subgroup. 
Due to the exponent $p$ and class $2$
condition, we have $G/[G,G]\cong \Z_p^n$ and $[G,G]\cong \Z_p^m$ for some $n$ and 
$m$ such that $n+m=\ell$. Fixing bases of $G/[G,G]$ and $[G,G]$, and taking the 
commutator bracket, we 
obtain a skew-symmetric bilinear map $b_G:\F_p^n\times \F_p^n\to \F_p^m$, 
represented by $\vecB\in S^{-1}(n, p)^m$. For $H$ to be isomorphic to $G$, it is 
necessary that $\dim_{\Z_p}(H/[H,H])=\dim_{\Z_p}(G/[G,G])$ and 
$\dim_{\Z_p}([H,H])=\dim_{\Z_p}([G,G])$, so by the same construction we obtain 
another $\vecC\in S^{-1}(n, p)^m$. We then need the following definition.
%
\begin{definition}\label{def:pseudo_isometry}
Given $\vecB=(B_1, \dots, 
B_m)$ and $\vecC=(C_1, \dots, C_m)$ from $S^\epsilon(n, \F)$, 
$\vecB$ and $\vecC$ are 
\emph{pseudo-isometric}, if there exists $X\in\GL(n, \F)$ such that $\langle 
X^tB_1X, \dots, X^tB_mX\rangle=\langle C_1, \dots, C_m\rangle$.
\end{definition}
The key connection then is Baer's correspondence, which, put in this context, 
gives that $G$ and $H$ are isomorphic if and only if $\vecB$ and $\vecC$ are 
pseudo-isometric \cite{Baer}.
By the condition that $m=O(\sqrt{\ell})$, we can enumerate all bases of $\vecC$ at 
a multiplicative cost of $p^{m^2}=p^{O(\ell)}$, and for each fixed basis, apply 
the algorithm for isometry testing. This gives 
Theorem~\ref{cor:p_group}.

As Brooksbank and Wilson have communicated to us, our algorithm may be 
useful in some models studied in CGT. Also, in 
multivariate cryptography, the 
problem Isomorphism of Quadratic Forms with Two Secrets (IQF2S) just asks to test 
the pseudo-isometry of tuples of symmetric matrices. 
Formally, the 
IQF2S problem 
asks to decide, given $\vecB, \vecC\in S^1(n, \F)$, whether they are 
pseudo-isometric.
Therefore a result analogous to Theorem~\ref{cor:p_group} can be obtained for 
IQF2S.

\subsection{Polynomial identity testing}\label{subsubsec:pit}

Fix $\epsilon\in\{1, -1\}$. Let us see
how to cast the $\epsilon$-symmetrization problem as an instance of the polynomial 
identity testing problem. Given $\vecB=(B_1, \dots, 
B_m)\in M(n, \F)^m$, there 
exist invertible matrices $A, D$ such that $\forall i\in[m]$, $AB_iD$ is 
$\epsilon$-symmetric if and only if $\forall i\in[m]$,  
$D^{-t}AB_i=D^{-t}(AB_iD)D^{-1}$ is 
$\epsilon$-symmetric. Therefore we can reduce to finding an invertible matrix $E$ 
such that $\forall i\in[m]$, $EB_i$ is $\epsilon$-symmetric. Suppose for now that 
$E$ is a matrix of variables. The equations
$\forall i\in[m], EB_i=\epsilon B_i^tE^t$ 
set up a system of linear forms in these variables. Let $C_1, \dots, C_\ell$ be a 
linear basis of the solution space, and $\cC$ be the matrix space $\langle 
C_1,\dots, C_\ell\rangle\leq M(n, \F)$. The problem then 
becomes to decide whether $\cC$
contains an invertible matrix. 
To decide whether a matrix space, given by a linear basis, contains only 
non-invertible 
matrix is known as the
symbolic determinant identity testing (SDIT) problem, which is equivalent to the 
polynomial identity testing (PIT) for weakly skew arithmetic circuits 
\cite{Tod92}\footnote{An 
arithmetic circuit is weakly skew if each product gate is of fan-in 2 and has at 
least one child such that the subcircuit rooted at it is separate from the other 
parts of the circuit \cite{Tod92,MP08}. The computation power 
of weakly skew circuit is known to be equivalent to the model of symbolic 
determinants, and between arithmetic formulas and arithmetic circuits.}. 

When $|\F|=\Omega(n)$, SDIT admits a randomized efficient algorithm via the 
Schwartz-Zippel lemma. To devise a deterministic efficient algorithm for SDIT is a 
major problem in algebraic complexity theory due to its implication to 
arithmetic circuit lower bounds. Specifically, in \cite{CIKK15} (building on 
\cite{ik2004}), Carmosino et al. 
show that such an algorithm implies the existence of a 
polynomial family such that its graph is in $\mathrm{NE}$, but it cannot be 
computed by polynomial-size arithmetic circuits. Such a lower bound is generally 
considered to be beyond current techniques, and would be recognized as a 
breakthrough if established. The research into PIT has 
received quite a lot of attention since early 2000's (see the surveys 
\cite{Sax09,sy2010,Sax13}).

Our algorithm for the $\epsilon$-symmetrization problem then provides a 
deterministic solution 
to this specific instance of SDIT. Our motivation to look at this problem at 
the first place was from the recent resolution of the non-commutative rank 
problem  by Garg et al. \cite{GGOW} and Ivanyos et al. 
\cite{IQS16a,IQS17}, and the intricate relation between the non-commutative rank 
problem 
and SDIT, which we explain 
below. 

A matrix 
space $\cB\leq M(n, \F)$ is non-singular, if $\cB$ contains an invertible matrix, 
and 
singular otherwise. SDIT then asks to decide whether a matrix space is 
singular. To obtain an arithmetic circuit lower bound 
via \cite{CIKK15}, it is actually enough to put SDIT in $\mathrm{NP}$, that is, to 
find a small witness  that helps to testify 
the singularity 
of singular matrix 
spaces. One such 
singularity witness, which is the reminiscent of the ``shrunk subset'' as in 
Hall's marriage theorem for bipartite graphs, and closely related to the linear 
matroid 
intersection problem \cite{Lovasz}, is the following. For $\cB\leq M(n, \F)$, 
$U\leq \F^n$ is a shrunk subspace of $\cB$, if 
$\dim(U)>\dim(\cB(U))$ where 
$\cB(U)=\langle B(U) : B\in \cB\rangle$. The decision 
version of the 
non-commutative rank problem then 
asks to decide whether $\cB$ 
has a shrunk subspace. Deterministic efficient algorithms for the 
non-commutative rank problem were 
recently devised in \cite{GGOW} (over $\Q$) and in \cite{IQS16a,IQS17} (over any 
field). 

A direct consequence of settling the non-commutative rank problem on SDIT 
is that we can restrict our attention to 
those singular matrix spaces without a shrunk subspace, which we call exceptional 
spaces. As described by Lov\'asz in \cite{Lovasz} (see also 
\cite{PrimitiveII,EH88}), the skew-symmetric structure naturally yields two 
families 
of exceptional spaces. 
To introduce them we need the following definition. Two matrix spaces $\cB, 
\cC\leq M(n, 
\F)$ 
are \emph{equivalent}, if there exist $A, D\in \GL(n, \F)$ such that $A\cB D=\cC$ 
(equal as subspaces). Note that whether a matrix space is singular is preserved by 
the equivalence relation. We now list the two families from \cite{Lovasz}.
\begin{enumerate}
\item[(1)] If $n$ is odd and $\cB\leq M(n, 
\F)$ is equivalent to a subspace in 
$S^{-1}(n, \F)$, then $\cB$ is singular, as every skew-symmetric matrix is of 
even rank. 
\item[(2)] Given $C_1, \dots, C_n\in S^{-1}(n, \F)$, let $\cC\leq M(n, \F)$ 
consist of 
all the 
matrices of the form $[C_1v, C_2v, \dots, C_nv]$ over $v\in \F^n$. Since 
$v^t[C_1v, C_2v, \dots, C_nv]=[v^tC_1v, v^tC_2v, \dots, 
v^tC_nv]=0$, $\cC$ is singular, and we call such $\cC$ a skew-symmetric induced 
matrix space. If $\cB$ is equivalent to a skew-symmetric induced 
matrix space, then $\cB$ 
is singular as well. Note that w.l.o.g. we can assume that $\cB$ is a subspace of 
$M(n, \F)$ of dimension $n$.
\end{enumerate}

These two families of exceptional matrix spaces can be deterministically 
recognized as follows.
\begin{theorem}\label{thm:pit}
Let $\F$ be a field of characteristic not $2$. Given $\cB=\langle B_1, \dots, 
B_m\rangle\leq M(n, \F)^m$, there exists a deterministic 
polynomial-time algorithm that decides whether $\cB$ is equivalent to a subspace 
in $S^{-1}(n, \F)$, or a skew-symmetric induced matrix space. 
\end{theorem}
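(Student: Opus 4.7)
The plan is to reduce both parts of Theorem~\ref{thm:pit} to the $(-1)$-symmetrization algorithm established earlier in the paper.

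For Part (1), I would observe that $\cB$ is equivalent to a subspace of $S^{-1}(n, \F)$ precisely when there exist $A, D \in \GL(n, \F)$ with $A B_i D \in S^{-1}(n, \F)$ for every basis element $B_i$ of $\cB$; this is exactly the $(-1)$-symmetrization problem applied to a basis of $\cB$, which the paper's algorithm solves in deterministic polynomial time.

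For Part (2), the key idea is that skew-symmetric induced matrix spaces and subspaces of skew-symmetric matrices correspond to $3$-tensors that are skew in different pairs of indices, and a single permutation of tensor indices converts one family into the other. Assuming, as the paper notes, that $\cB$ has dimension $n$, I would fix a basis $B_1, \ldots, B_n$ and associate to $\cB$ the tensor $T^\cB_{abc} := (B_a)_{bc}$ in $(\F^n)^{\otimes 3}$. Matrix-space equivalence of $\cB$, together with the implicit freedom to change its basis, then corresponds to $\GL(n, \F)^3$-equivalence of such tensors, with the three copies of $\GL(n, \F)$ acting independently on the three factors (basis change, left multiplication, right multiplication). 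A direct calculation yields the dictionary: a dimension-$n$ subspace of $S^{-1}(n, \F)$ corresponds to a tensor skew in the row-column indices $(b, c)$, while a skew-symmetric induced matrix space $\cC$ with basis $M^{(k)}$ defined by $(M^{(k)})_{ji} = (C_i)_{jk}$ satisfies $(M^{(k)})_{ji} + (M^{(j)})_{ki} = 0$, and hence corresponds to a tensor skew in the basis-row indices $(a, b)$.

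I would then define $\cB^\sigma := \langle B_1^\sigma, \ldots, B_n^\sigma \rangle$ with $(B_a^\sigma)_{bc} := (B_c)_{ba}$, which implements the index swap $\sigma : (a, b, c) \mapsto (c, b, a)$ at the tensor level. Since $\sigma$ commutes with the $\GL(n, \F)^3$-action up to a relabeling of the three $\GL$ factors, and since $\sigma$ carries tensors skew in $(a, b)$ bijectively onto tensors skew in $(b, c)$, one obtains the key equivalence: $\cB$ is equivalent to a skew-symmetric induced matrix space if and only if $\cB^\sigma$ is equivalent to a subspace of $S^{-1}(n, \F)$. Running the Part~(1) algorithm on $\cB^\sigma$ then settles Part (2). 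The main bookkeeping hurdle is non-degeneracy: $\cB^\sigma$ has dimension $n$ exactly when $\cB$ has trivial common right nullspace, and the analogous condition is needed on the skew-induced side for the dimensions to match in the equivalence; since common null spaces are equivalence invariants and can be computed in polynomial time, these degenerate cases can be detected and peeled off separately before invoking the main reduction, preserving the deterministic polynomial-time guarantee.
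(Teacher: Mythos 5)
Your proposal is correct and follows essentially the same route as the paper: part (1) is a direct call to the skew-symmetrization algorithm on a basis of $\cB$, and part (2) uses exactly the paper's index permutation (Lov\'asz's observation, $(B'_a)_{bc}=(B_c)_{ba}$, which is your $\sigma$) to reduce to part (1). Your tensor formalism simply spells out the verification the paper leaves implicit, and your remark on peeling off the degenerate (dimension $<n$) cases is a slightly more careful treatment of a point the paper dismisses with ``w.l.o.g.''.
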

We explain how Theorem~\ref{thm:pit} follows from our $\epsilon$-symmetrization 
algorithm. 
The case (1) is straightforward: apply the 
skew-symmetrization algorithm to the given
linear basis of $\cB$. In case (2), suppose $B_i=[b_{i,1}, \dots, b_{i, n}]$ where 
$b_{i, j}\in \F^n$, $j\in[n]$ are the 
columns of $B_i$. Following an observation of Lov\'asz in \cite{Lovasz}, 
construct 
$B'_i=[b_{1,i}, \dots, b_{n,i}]$ for $i\in[n]$. It can be verified that $\cB$ is 
equivalent 
to some $\cC$ of the form described in (2) if and only if $\cB'=\langle B'_1, 
\dots, B'_n\rangle$ is equivalent to a subspace in $S^{-1}(n, \F)$. We can then 
apply the skew-symmetrization algorithm to $(B'_1, \dots, B'_n)$ to conclude.

\subsection{Results and techniques}\label{subsec:result}

\paragraph{Statement of the results.} We first
define three equivalence relations for 
matrix tuples. 
\begin{definition}\label{def:three}
Let 
$\vecB=(B_1, \dots, B_m), 
\vecC=(C_1, \dots, C_m)\in M(n, \F)^m$. $\vecB$ and 
$\vecC$ 
are \emph{conjugate}, if $\exists A\in \GL(n, \F)$, such that $A\vecB
=\vecC A$. They are \emph{equivalent}, if $\exists A, D\in \GL(n, \F)$, such that 
$A\vecB 
=\vecC D$. They are \emph{isometric}, denoted as $\vecB\sim \vecC$, if $\exists 
A\in \GL(n, \F)$, such that 
$A^t\vecB 
A=\vecC$; such an $A$ is called an isometry from $\vecB$ to $\vecC$.
\end{definition}

We show that testing whether two $\epsilon$-symmetric matrix tuples are isometric 
can be solved efficiently over $\F_q$ with $q$ odd, $\R$, and $\C$. Note that the 
algorithm 
for $\F_q$ is probabilistic.

\begin{theorem}\label{thm:isometry}
\begin{enumerate}
\item (Finite fields of odd size) Given $\vecB, \vecC\in 
S^\epsilon(n, q)^m$ with $q$ odd, there exists a randomized polynomial-time 
algorithm that 
decides whether $\vecB$ and $\vecC$ are isometric. If $\vecB$ and $\vecC$ are  
isometric, the algorithm also computes an explicit isometry in $\GL(n, q)$. This 
algorithm can be derandomized at the price of running in time $\poly(n, m, 
\log q, p)$ where $p=\fdchar(\F_q)$.
\item (The real field $\R$) Let $\E\subseteq \R$ be a number field. Given $\vecB, 
\vecC\in S^\epsilon(n, \E)^m$, there exists a deterministic polynomial-time 
algorithm 
that decides whether $\vecB$ and $\vecC$ are isometric over some number field $\K$ 
such that $\E\subseteq\K\subseteq \R$. If $\vecB$ and $\vecC$ are indeed 
isometric, the algorithm also computes an explicit isometry, represented as a 
product of matrices, where each matrix is over some extension field of $\E$ of 
extension degree $\poly(n, m)$. 
\item (The complex field $\C$) Let $\E$ be a number field. Given $\vecB, 
\vecC\in S^\epsilon(n, \E)^m$, there exists a deterministic polynomial-time 
algorithm 
that decides whether $\vecB$ and $\vecC$ are isometric over some number field $\K$ 
such that $\E\subseteq\K$. If $\vecB$ and $\vecC$ are indeed 
isometric, the algorithm also computes an explicit isometry, represented as a 
product of matrices, where each matrix is over some extension field of $\E$ of 
extension degree $\poly(n, m)$.  
\end{enumerate}
\end{theorem}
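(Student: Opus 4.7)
The plan is to reduce the isometry problem for $\epsilon$-symmetric tuples to a question about finding elements with prescribed behavior in the naturally associated $*$-adjoint algebra, and then to solve it using a Wedderburn--Malcev decomposition together with the module isomorphism algorithms of \cite{CIK97,IKS10,BL08}. I would first form the adjoint algebra $\cA(\vecB) = \{(X,Y) \in M(n,\F)^2 : XB_i = B_iY \text{ for all } i\}$. Because each $B_i$ is $\epsilon$-symmetric, the map $(X,Y) \mapsto (Y^t, X^t)$ is an involution on $\cA(\vecB)$, making it into a $*$-algebra. Any isometry $A$ with $A^t B_i A = C_i$ in particular induces an equivalence $A^t \vecB = \vecC A^{-1}$, so as a first step I would run the module isomorphism algorithm to decide whether such an equivalence exists at all, and if so produce $A_0, D_0 \in \GL(n,\F)$ with $A_0 \vecB = \vecC D_0$. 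The remaining task is to decide whether, within the coset of all such $(A_0, D_0)$, one can enforce $A_0 = (D_0)^{-t}$; this coset-level question is controlled precisely by the $*$-algebra structure of $\cA(\vecB)$ and reduces to finding an element $u \in \cA(\vecB)$ with $u^* u$ equal to a prescribed element (the ``unitary with prescribed norm'' question).

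To handle the unitary problem I would compute a Wedderburn--Malcev decomposition $\cA(\vecB) = \cA_{ss} \oplus \Rad(\cA(\vecB))$ compatible with the involution, and further decompose the semisimple summand into simple $*$-algebras. Over the fields of interest, every simple $*$-algebra is of classical type, and the prescribed-norm question collapses into a well-understood classical-group problem: automatic for symplectic factors; equivalence of quadratic forms (controlled by the discriminant together with a search for isotropic vectors) for orthogonal factors; and a norm equation in a quadratic extension for unitary factors. Having solved the problem in $\cA_{ss}$, I would lift the solution through $\Rad(\cA(\vecB))$ by a standard successive-approximation argument, solving a linear equation at each step modulo a descending filtration by powers of the radical.

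The main obstacle is the orthogonal case over finite fields: isotropic-vector search and square-root extraction over $\F_q$ of odd size are intrinsically randomized, which accounts for the randomness in part (1) of the theorem. The derandomization at cost $\poly(p)$ comes from the fact that once the characteristic $p$ is small, the underlying root-finding can be handled by exhaustive search in $\F_p$ combined with Hensel-type lifting. Over $\R$ and $\C$ starting from a number field $\E$, the simple $*$-algebra components of $\cA_{ss}$ may be division algebras that do not split over $\E$, so care is needed to ensure that the extensions into which the isometry components are forced have degree $\poly(n,m)$. This will be handled through effective versions of Schur's lemma and splitting algorithms for algebras over number fields, and the final isometry will be presented as a product of matrices over such controlled extensions, matching the output format required in parts (2) and (3).
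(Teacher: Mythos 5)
Your proposal follows essentially the same route as the paper: reduce via the module isomorphism algorithm to a twisted equivalence, recast the residual question as finding $X$ with $X^*X=E$ for a prescribed $*$-symmetric $E$ in the adjoint $*$-algebra, decompose that algebra via Wedderburn theory into simple $*$-algebras of classical type, solve the resulting single-classical-form isometry problems (which is indeed where the randomness over $\F_q$ enters, via square roots), and lift through the radical by successive approximation — your $*$-compatible Wedderburn--Malcev complement is precisely the device the paper uses (via Taft's theorem) to control extension degrees over $\R$ and $\C$. The only minor omission is that over $\F_q$ the computation of the Wedderburn decomposition itself (R\'onyai's algorithm, which needs polynomial factorization) is a second source of randomness, also removable at cost $\poly(p)$.
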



%

We call $\vecB\in M(n, \F)^m$ 
\emph{$\epsilon$-symmetrizable}, if $\vecB$ is equivalent to a tuple of 
$\epsilon$-symmetric matrices. Our second main result concerns the problem of 
testing whether a matrix tuple is $\epsilon$-symmetrizable. 
\begin{theorem}\label{thm:sym}
Let $\F$ be a field of characteristic not $2$. 
Given $\vecB\in M(n, \F)^m$, there exists a deterministic 
algorithm that decides whether $\vecB$ is $\epsilon$-symmetrizable, and if it is, 
computes $A, 
D\in\GL(n, \F)$ such that $A\vecB D\in S^\epsilon(n, \F)^m$. The 
algorithm uses 
polynomially many 
arithmetic operations. Over a number field the final data as well
as all the intermediate data have size polynomial in the input data size, 
hence the algorithm runs in polynomial time.
\end{theorem}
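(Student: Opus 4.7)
The plan is to combine the reduction to SDIT already sketched in Section~\ref{subsubsec:pit} with the structure theory of $*$-algebras. As noted there, $\vecB$ is $\epsilon$-symmetrizable if and only if the linear subspace
\[
\mathcal{Y} \;:=\; \{\,E \in M(n,\F) \;:\; EB_i = \epsilon B_i^t E^t \text{ for all } i\in[m]\,\}
\]
contains an invertible matrix (given such an $E$, take $A=E$ and $D=I$). Rather than attacking this as a generic SDIT instance, I plan to recognize $\mathcal{Y}$ as the Hermitian part of a $*$-algebra naturally associated to $\vecB$, and then solve the resulting \emph{invertible-Hermitian-element} problem using the structure theory of $*$-algebras in characteristic not $2$.

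The first step is to test the necessary condition that $\vecB$ is equivalent (in the sense of Definition~\ref{def:three}) to $\epsilon\vecB^t$, using the deterministic equivalence algorithm derived from module isomorphism~\cite{CIK97,IKS10,BL08}. If the test fails, output ``no''; otherwise obtain an explicit pair $(E_0, F_0)\in \GL(n,\F)^2$ with $E_0\vecB = \epsilon\vecB^t F_0$. The affine space $\{(E,F)\in M(n,\F)^2 : E\vecB = \epsilon\vecB^t F\}$ is then parameterized, via $(X,Y)\mapsto (E_0X,F_0Y)$, by the adjoint algebra $\Adj(\vecB) := \{(X,Y) : X\vecB = \vecB Y\}$, and the extra constraint $F = E^t$ cutting $\mathcal{Y}$ out of this space translates into the fixed-point condition for an involutive anti-automorphism $\tau$ of $\Adj(\vecB)$ built from $(E_0, F_0)$ and matrix transpose; verification that $\tau^2 = \mathrm{id}$ is a direct calculation using $E_0\vecB = \epsilon\vecB^t F_0$. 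The upshot is that $(\Adj(\vecB), \tau)$ is a $*$-algebra and the bijection $(X,Y)\mapsto E_0X$ carries its invertible $\tau$-Hermitian elements precisely onto the invertible elements of $\mathcal{Y}$.

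Now apply $*$-algebra structure theory to $(\Adj(\vecB), \tau)$. Deterministically compute the Jacobson radical $J := \Rad(\Adj(\vecB))$, which is automatically $\tau$-stable, and the Wedderburn decomposition of the semisimple quotient into simple two-sided ideals; $\tau$ acts on these ideals as a permutation of order at most $2$. A size-$2$ orbit $\{S_1, S_2 = \tau(S_1)\}$ trivially contains invertible Hermitian elements of the form $(s, \tau(s))$ for any invertible $s \in S_1$. A size-$1$ orbit is a simple algebra equipped with an involution of orthogonal, symplectic, or unitary type, and the existence of an invertible Hermitian element reduces to non-degeneracy of a canonical bilinear or sesquilinear form on the simple module, which is checkable and constructible by linear algebra. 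Hermitian invertible elements lift through the radical using the $\tfrac12$-trick available in characteristic not $2$; assembling these lifts produces an invertible Hermitian element of $\Adj(\vecB)$ whenever one exists, from which we read off $E$, and hence $A = E$, $D = I$.

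The main technical obstacle is performing the Wedderburn decomposition of $\Adj(\vecB)$ both deterministically and $\tau$-equivariantly, while keeping bit-complexity polynomial over number fields. These tasks are exactly what the deterministic module-isomorphism machinery of~\cite{CIK97,IKS10,BL08} was designed to handle, and it is invoked throughout; $\tau$-equivariance is maintained by always working with $\tau$-stable ideals and $\tau$-stable idempotents. All remaining operations are linear algebra on subspaces of $M(n,\F)$, so every intermediate datum has size polynomial in $n$, $m$, and the input bit length, yielding the claimed polynomial running time.
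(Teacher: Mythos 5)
There is a genuine gap at the heart of your construction: the map $\tau$ you build on $\Adj(\vecB)$ from $(E_0,F_0)$ is not an involutive anti-automorphism in general, and arranging for it to be one is equivalent to already having solved the problem. Concretely, writing a general solution of $E\vecB=\epsilon\vecB^tF$ as $(E_0X,F_0Y)$ with $(X,Y)\in\Adj(\vecB)$, the constraint $F=E^t$ becomes $Y=F_0^{-1}X^tE_0^t$, and the natural candidate $\tau(X,Y)=(E_0^{-1}Y^tF_0^t,\,F_0^{-1}X^tE_0^t)$ is indeed an involutive linear map with the right fixed-point set, but it reverses multiplication only when $E_0=F_0^t$, i.e.\ only when $E_0$ itself already lies in $\mathcal{Y}$. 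The obstruction can also be seen abstractly: any anti-automorphism of a unital algebra fixes the identity, so if $\tau$ were a genuine algebra involution whose fixed points biject with $\mathcal{Y}$ via $(X,Y)\mapsto E_0X$, then $E_0=E_0\cdot 1$ would automatically be an invertible element of $\mathcal{Y}$ and there would be nothing left to prove. This is exactly the difficulty the paper flags in Section~\ref{subsec:sym_outline}: for a non-symmetric $\vecB$ the algebra $\Adj(\vecB)$ is \emph{not} naturally a $*$-algebra; only $\Adj(\vecB')$ for the unknown $\epsilon$-symmetric $\vecB'$ with $\vecB=E\vecB'$ is. Your subsequent appeals to $*$-algebra structure theory ($\tau$-stable radical, $\tau$-equivariant Wedderburn decomposition, lifting Hermitian elements through the radical) therefore have no object to apply to.

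The paper's route around this is quite different and is the real content of the proof: it works with the computable algebra $\Adj(\vecB)$, whose second-component projection coincides with that of the hidden $\Adj(\vecB')$, and observes that the right ideal $X\Adj(\vecB')$ generated by the hidden symmetric element $X=ZE$ can nevertheless be computed as $ZL^\epsilon(\epsilon\vecB^t)\Adj(\vecB)$ without knowing $E$ or $\vecB'$. It then runs a greedy rank-augmentation in the style of the module-isomorphism algorithms, measuring progress by the dimension of these right ideals modulo the radical; correctness rests on Lemma~\ref{lem:key}, which guarantees that a $*$-symmetric zero-divisor in a semisimple $*$-algebra always admits a $*$-symmetric partner that enlarges the right ideal. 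Two further components are needed for the theorem as stated over an arbitrary field of characteristic not $2$, neither of which your proposal addresses: replacing the radical (which is not known to be deterministically computable over all such fields) by associated projective right ideals, and a separate argument via Lemma~\ref{lem:star1} for small fields, where the scalar set needed for the rank-increment step is unavailable. If you wish to pursue a structural rather than greedy approach, the missing ingredient is a non-circular way to produce the $*$-structure in the first place.
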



%


\paragraph{Two key ingredients.}
Let us first review the concept of $*$-algebras, and see how to get a 
$*$-algebra from a tuple of $\epsilon$-symmetric matrices. Recall that, a 
$*$-algebra $A$ is an algebra with $*:A\to A$ being 
an anti-automorphism of order at most $2$. $*$-algebras have been studied since 
1930's \cite{Alb39} (see \cite{Lew06} for a recent survey). 
Let $M(n, \F)^{op}$ be the 
opposite full matrix algebra, which is the ring consisting of all matrices in 
$M(n, \F)$ with the multiplication $\circ$ as $A\circ B=BA$. $*$-algebras arise 
from $\epsilon$-symmetric matrix tuples by considering 
the \emph{adjoint algebra} of $\vecB\in S^\epsilon(n, \F)^m$, which consists of 
$\{(A, 
D)\in M(n, 
\F)^{op} \oplus M(n, \F) | A^t\vecB=\vecB D\}$, with a natural involution $*$ 
as 
$(A, 
D)^*=(D, A)$. 
%

We then turn to the module isomorphism problem (MI). Given $\vecB, \vecC\in M(n, 
\F)^m$, MI asks if $\vecB$ and $\vecC$ are conjugate. 
This problem is termed as module isomorphism, as the matrix
tuple $\vecB=(B_1, \dots, B_m)$ can be viewed as a linear representation of a 
finitely 
generated 
algebra generated by $m$ elements. Two deterministic polynomial-time algorithms 
for MI have been devised in \cite{CIK97,IKS10} and \cite{BL08}. Note that MI may 
also 
be cast as an instance of the polynomial identity testing problem like the  
$\epsilon$-symmetrization problem. 

%
%

\paragraph{More comparison with previous works.}
Some comparisons with previous works were already stated in 
Section~\ref{subsubsec:crypto} and~\ref{subsubsec:gpi}. We now add some more 
details on the technical side. 
In Section~\ref{subsubsec:crypto}, we mentioned the work of 
Berthomieu et al.~\cite{BFP15} which solves the IQF1S 
possibly over an extension field, for regular instances and large 
enough fields. Here we seek ``rational'' solutions (i.~e.~those
over the given base field) in the finite case and seek solutions
over a real extension field. 
An interesting observation is that the algorithm of 
Berthomieu et al.~may be cast as working 
with a $*$-algebra, but in a much restricted setting. We 
explain this in detail in Appendix~\ref{app:compare}.
In Section~\ref{subsubsec:gpi}, we described how our result, when applied to 
$p$-group 
isomorphism, compares to 
the result of Brooksbank et al.~\cite{BMW15}. The relevant technique there, 
called 
the 
Pfaffian isomorphism test \cite[Sec. 6.2]{BMW15}, is completely different from 
ours, 
and seems quite restricted to pairs of skew-symmetric matrices. 


The work
\cite{BW12} by Brooksbank and 
Wilson is the most important precursor to our Theorem~\ref{thm:isometry}. 
In \cite{BW12}, the main result, rephrased in our 
setting, is an efficient 
algorithm that, given $\vecB\in S^\epsilon(n, q)^m$ with $q$ odd, computes a 
generating set for the group $\{X\in\GL(n, q) \mid  X^t\vecB X=\vecB\}$. This is 
exactly the 
``automorphism version'' of the isometry problem. 
However, unlike many other 
isomorphism problems, the isometry problem is not known to reduce to this 
automorphism version. This is similar to the module isomorphism problem: the 
automorphism version of 
MI asks to compute a generating set of the unit group in a matrix algebra, which 
was solved in \cite{BO08}. The ideas and the techniques for the unit group 
computation in \cite{BO08} and for MI in \cite{CIK97,IKS10,BL08} are totally 
different. So Theorem~\ref{thm:isometry} cannot be 
easily deduced as a corollary from \cite{BW12}.

\paragraph{Generalizations of the main results.} 
Theorem~\ref{thm:isometry} can be generalized to the following setting.
Following \cite{BW12}, for an linear automorphism $\theta\in \GL(W)$
we call a
bilinear map over a field $\F$, $b:V\times V\to W$ $\theta$-\emph{Hermitian}, 
if 
for all $u, v\in V$, $b(u, v)=\theta(b(v, u))$. Obviously, nontrivial
Hermitian maps exist only if $\theta^2$ is the identity.
Hermitian 
bilinear maps subsume symmetric bilinear maps ($\theta$ being the identity matrix) 
and skew-symmetric bilinear maps ($\theta$ being $-1$ times the identity matrix). 
It allows for (after 
fixing bases of $V$ and $W$) a tuple of mixed symmetric and skew-symmetric 
matrices. In fact, by a change of basis of $W$, we may always assume that $\theta$ 
is a diagonal matrix with $1$ and $-1$'s on the diagonal and 
in our arguments and algorithms 
we only need
the replace $\epsilon$ by a tuple $(\epsilon_1,\ldots,\epsilon_m)$
and equations of type $B_i^t=\epsilon B_i$ by $B_i^t=\epsilon_i B_i$.
 Furthermore, the concept captures 
Hermitian forms by \cite[Sec. 3.1]{BW12}: for a Hermitian form $b:V\times V \to 
\F_{q^2}$ where $V\cong \F_{q^2}^n$, we can represent it as a pair of bilinear 
forms over $\F_q$, $b_1, b_2: V'\times V'\to \F_q$ where $V'\cong \F_q^{2n}$, and  
$\theta\in \GL(2, q)$ corresponds to the field involution 
$\alpha\to \alpha^q$ for 
$\alpha\in \F_{q^2}$. Hermitian complex or quaternionic matrices are also
included: assume that $D$ is a finite 
dimensional division algebra over $\F$ with involution 
$\overline{\,\cdot\,}:D \rightarrow D$, such that $\F$ coincides
with the subfield of the center of $D$ consisting of the elements
fixed by $\overline{\,\cdot\,}$. Then the map $*$ sending a matrix to
the transpose of its elementwise $\overline{\,\cdot\,}$-conjugate 
is an involution on $M(n,D)$, and the matrices invariant under $*$
are called $*$-Hermitian. Indeed, let $d$ be the dimension
of $D$ over $\F$. Then we can interpret $D$ and $D^n$ as vector spaces
of dimension $d$ resp.~$dn$ over $\F$, and a matrix in $M(n,D)$ as 
an $\F$-bilinear map from $D^n\times D^n$ to $D$. Then $*$-Hermitian
matrices are interpreted as Hermitian bilinear maps
for $\overline{\,\cdot\,}$. (Naturally, an $m$-tuple of $*$-Hermitian
matrices become a Hermitian map from $D^n\times D^n$ to $D^{m}$.)

Interestingly, Theorem~\ref{thm:isometry} allows us to solve the isometry problem 
for a tuple of 
arbitrary matrices over $\F_q$ with $q$ odd, $\R$, or $\C$. 
Given $\vecB, \vecC\in M(n, \F)^m$, we can construct 
$\vecB'=(\frac{1}{2}(B_1+B_1^t), \dots, 
\frac{1}{2}(B_m+B_m^t), 
\frac{1}{2}(B_1-B_1^t), \dots, \frac{1}{2}(B_1-B_1^t))$, and similarly 
$\vecC'$. Here we use the fact that we work over fields of characteristic not $2$. 
Then it is easy to verify that $\vecB\sim \vecC$ if and only if 
$\vecB'\sim \vecC'$. 
Indeed, if $A\in \GL(n, \F)$ 
satisfies that $A^tB_iA=C_i$, 
then $A$ also satisfies that $A^t(\frac{1}{2}(B_i\pm 
B_i^t))A=\frac{1}{2}(A^tB_iA\pm A^tB_i^tA)=\frac{1}{2}(C_i\pm C_i^t)$. On the 
other hand, if $A^t(\frac{1}{2}(B_i+B_i^t))A=\frac{1}{2}(C_i+ C_i^t)$ and 
$A^t(\frac{1}{2}(B_i-B_i^t))A=\frac{1}{2}(C_i- C_i^t)$, summing these two we get 
that $A^tB_iA=C_i$. 
Combining with the observation from the last paragraph, we 
have the following. 
\begin{corollary} 
The statement of 
Theorem~\ref{thm:isometry} holds for $\vecB, \vecC\in M(n, \F_q)^m$, $M(n, 
\E)^m$ with a number field $\E\subseteq \R$, or $M(n, \E)^m$ with a number field 
$\E$. 
\end{corollary}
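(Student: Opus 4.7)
The plan is to give a simple reduction from the general isometry problem to the Hermitian (mixed signature) version of Theorem~\ref{thm:isometry} discussed in the preceding paragraph. Given $\vecB, \vecC \in M(n, \F)^m$, I would first split each $B_i$ as $B_i = B_i^+ + B_i^-$, where $B_i^+ = \tfrac{1}{2}(B_i + B_i^t) \in S^1(n,\F)$ and $B_i^- = \tfrac{1}{2}(B_i - B_i^t) \in S^{-1}(n,\F)$, and analogously for $\vecC$. From these I would form the $2m$-tuples
\[
\vecB' = (B_1^+, \ldots, B_m^+, B_1^-, \ldots, B_m^-), \qquad \vecC' = (C_1^+, \ldots, C_m^+, C_1^-, \ldots, C_m^-),
\]
so that $\vecB'$ and $\vecC'$ are tuples of mixed symmetric and skew-symmetric matrices, i.e.\ a Hermitian-type tuple with signature $(\underbrace{1,\ldots,1}_{m},\underbrace{-1,\ldots,-1}_{m})$.

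The core claim is that $\vecB \sim \vecC$ if and only if $\vecB' \sim \vecC'$, under exactly the same invertible matrix $A \in \GL(n,\F)$. This is straightforward: conjugation $B \mapsto A^t B A$ commutes with the transpose via $(A^t B A)^t = A^t B^t A$, so it preserves the decomposition into symmetric and skew-symmetric parts; conversely, if $A$ simultaneously transports the symmetric parts and the skew-symmetric parts, summing recovers $A^t B_i A = C_i$. This is exactly the verification already carried out in the paragraph preceding the corollary, and it uses only $\fdchar(\F) \neq 2$.

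To conclude, I would invoke the generalization of Theorem~\ref{thm:isometry} to Hermitian (mixed-signature) tuples, noted in the paragraph above the corollary, which says that the same randomized (over $\F_q$, $q$ odd) or deterministic (over $\R$ or $\C$) polynomial-time algorithm works after replacing the single scalar $\epsilon$ by the tuple $(\epsilon_1,\ldots,\epsilon_{2m})$ and the equations $B_i^t = \epsilon B_i$ by $B_i^t = \epsilon_i B_i$. Applying this algorithm to $\vecB'$ and $\vecC'$ decides isometry and, when they are isometric, returns an explicit $A$, which by the claim above is also an isometry from $\vecB$ to $\vecC$. The output guarantees regarding the underlying field of $A$ (extension degrees, rationality, etc.) transfer verbatim from the Hermitian version.

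There is essentially no hard step here: the reduction is linear in the input size, and the only ingredient beyond elementary linear algebra is the previously noted Hermitian generalization of Theorem~\ref{thm:isometry}. The one thing worth being careful about is that the Hermitian extension of the algorithm must indeed preserve the three regimes of Theorem~\ref{thm:isometry} (finite fields of odd size, real number fields, general number fields) with the same complexity and output guarantees; this is exactly the content of the remark on signatures in the preceding paragraph, so no new work is required.
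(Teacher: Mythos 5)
Your reduction is exactly the paper's own argument: split each matrix into its symmetric and skew-symmetric parts (using $\fdchar(\F)\neq 2$), form the mixed $2m$-tuple, verify that the same $A$ works in both directions, and invoke the $\theta$-Hermitian generalization of Theorem~\ref{thm:isometry} stated in the preceding paragraph. The proposal is correct and takes essentially the same approach as the paper.
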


Theorem~\ref{thm:sym} can also be generalized to
transforming bilinear maps to $\theta$-Hermitian ones, including
the case of tuples of complex and quaternionic matrices.

\paragraph{Some open problems.} There are two immediate open problems left. 

The 
first one is to extend both of our 
results to 
fields of characteristic $2$. While presenting the algorithm for the isometry 
problem in 
Section~\ref{sec:iso}, we indicate explicitly in each step whether the 
characteristic not $2$ is required, and one may want to examine those steps where 
the characteristic not $2$ condition is crucial. For the $\epsilon$-symmetrization 
problem, one may want to start with examining the key lemma, Lemma~\ref{lem:key}, 
in the setting of characteristic-$2$ fields. 

The second one is to solve the 
isometry test 
problem 
over a number field without going to extension fields. To extend our current 
approach to deal with the second problem involves certain 
number-theoretic 
obstacles even over $\Q$. 
Namely, our present method relies on 
representing
a simple algebra explicitly as a full matrix 
algebra over a division ring, but there is a randomized reduction 
from factoring squarefree integers 
to this task for a central simple algebra of dimension $4$ over $\Q$ 
assuming the Generalized Riemann Hypothesis \cite{Ron87}. Even
deciding whether a four dimensional non-commutative simple algebra
over $\Q$ is isomorphic to $M(2,\Q)$ is equivalent to
deciding quadratic residuosity modulo composite numbers.
This kind of obstacles appears to be inherent: a ternary quadratic form
over $\Q$ is isotropic 
if and only if an associated non-commutative simple 
algebra of dimension four over $\Q$ is isomorphic to $M(2,\Q)$.
Now consider an indefinite symmetric $3$ by $3$ matrix 
$B$ with rational entries having determinant $d$.
Then the ternary quadratic form with Gram matrix $B$
is either anisotropic or isometric to the form
having matrix 
$$\begin{pmatrix}
0 & 1 & 0\\
1 & 0 & 0 \\
0 & 0 & -d
\end{pmatrix}.$$
Thus over $\Q$, the isometry problem a single 
ternary quadratic 
form is at least as hard as deciding whether an algebra is
isomorphic to $M(2,\Q)$. Actually, there is a randomized
polynomial time reduction from testing whether a simple
algebra over a number field $\F$ is isomorphic with a full matrix
algebra over $\F$ to factoring integers, see~\cite{Ron92} 
and~\cite{IR93}.
However, for the constructive
version of isomorphisms with full matrix algebras
such a reduction is only known for the case $M(n,K)$ where
$n$ is bounded by a constant, and $K$ is from a finite collection
of number fields~\cite{IRS12}.
Therefore, to determine the relation between the complexity
of the isometry problem and that of factoring,   
it might be useful to devise an alternative approach 
which gets around constructing explicit isomorphisms 
with full matrix algebras. 

\paragraph{Future directions.} Given Theorem~\ref{thm:isometry}, the next target 
is of course to study IQF2S and isomorphism testing of $p$-groups of class $2$ and 
exponent $p$. 
For these two problems, the first goal would be to design, for $\vecB\in 
S^\epsilon(n, 
q)^m$, an algorithm in time $q^{O(n+m)}$. In the context of $p$-groups of class 
$2$ and exponent $p$, this amounts to solve isomorphism testing for this group 
class in time polynomial in the group order, which seems a difficult problem 
already. By Theorem~\ref{thm:isometry}, this target seems most 
difficult when $m$ and $n$ are comparable, say $m=n$. 
One idea may be to reduce to the parameters $m'$ and $n'$ such that 
$m'=O(n^{1/2})$ and $n'=\poly(n)$, so that we can use Theorem~\ref{thm:isometry} 
to get an algorithm in time $q^{O(n)}$. It is also noteworthy that recently, 
Yinan Li 
and the second author devised an algorithm for $m=\Theta(n)$ in 
\emph{average-case} 
time $q^{O(n)}$ \cite{LQ17}; 
the average-case analysis is done in a random model 
for linear spaces of skew-symmetric matrices over finite fields, that can be 
viewed as a linear 
algebraic analogue of the Erd\H{o}s-R\'enyi model for random graphs. 

Theorem~\ref{thm:pit} represents a natural step in the direction for derandomizing 
SDIT set up by the resolution of the non-commutative rank problem 
\cite{GGOW,IQS16a,IQS17}. While most research activities on PIT and SDIT put 
constraints on the structural properties of the arithmetic circuits 
\cite{Sax09,sy2010,Sax13}, 
this direction puts constraints on the singularity witnesses which are inspired by 
geometric considerations \cite{EH88} and/or combinatorial considerations 
\cite{Lovasz}. At present, we are not aware of an explicit connection between 
these two different styles of constraints. It is an interesting question as to 
whether these geometric and/or combinatorial considerations can be made more 
systematic to yield a formal strategy to attack SDIT.

\paragraph{Organization of the article.} In 
Section~\ref{sec:prel}, we present certain preliminaries, including those 
structural results of $*$-algebras that are relevant to us. In 
Sections~\ref{sec:iso}, we give a 
detailed description of the algorithm for 
Theorems~\ref{thm:isometry}. In Section~\ref{sec:sym}, we show that for the 
$\epsilon$-symmetrization problem, how to handle the cases when the 
Jacobson radical is not known to be efficiently computable, or the field is too 
small, finishing the proof of Theorem~\ref{thm:sym}.

\section{Preliminaries}\label{sec:prel}

\paragraph{Notation.} For $n\in\N$, $[n]:=\{1, \dots, n\}$. For a field $\F$, 
$\fdchar(\F)$ denotes the 
characteristic of $\F$. $\vzero$ 
is the zero vector. For $B\in M(n, \F)$, $i, j\in[n]$, $S, T\subseteq [n]$, $B(i, 
j)$ is the $(i,j)$th entry of 
$B$, $B(S, T)$ is the submatrix indexed by row indices in $S$ 
and column indices in $T$. We use $I_n$ to denote the $n\times n$ 
identity matrix, and 
$\langle \cdot \rangle$ to denote the linear span. 
The vector space 
$\F^n$ consists of length-$n$ \emph{column} vectors over $\F$.

Given a quadratic field extension $\F/\F'$, for $\alpha\in \F$, its conjugation
$\overline{\alpha}$ is the image of $\alpha$ under the quadratic field 
involution.  When $\F=\C$ and $\F'=\R$ this is simply the complex conjugation. We 
use $\qH$ to 
denote the quaternion division algebra over $\R$, and $i, j, k$ be the 
fundamental quaternion units. For $\alpha=a+bi+cj+dk\in \qH$, 
its conjugation, denoted also by $\overline{\alpha}$, is $a-bi-cj+dk$.
Given $A\in M(n, \F)$ or $M(n, \qH)$, $\overline{A}$ denotes 
the 
matrix obtained by applying conjugation to every entry of $A$. For 
$\epsilon\in\{1, -1\}$ and $A\in M(n, \F)$ or $M(n, \qH)$, $A$ is 
$\epsilon$-Hermitian, if $\overline{A}^t=\epsilon A$.
%

We will also meet matrices over division rings, and therefore, for a division ring 
$D$, the notation $M(n, 
D)$ (for the full $n\times n$ matrix ring over $D$) and $\GL(n, D)$ (for the group 
of units 
in $M(n, D)$).

\paragraph{Representation of fields and field extensions.} For the isometry 
problem, we 
assume the input matrices are over a field $\E$ such that 
$\E$ is a finite extension of its prime field $\F$ (so $\F$ is either a field of 
prime order or $\Q$). Therefore $\E$ is a 
finite-dimensional algebra over $\F$. If $\dim_\F(\E)=d$, then $\E$ is the 
extension of $\F$ by a single generating element $\alpha$, so $\E$ can be 
represented by the minimal polynomial of $\alpha$ over $\F$, together with an 
isolating interval for $\alpha$ in the case of
$\R$, or an isolating rectangle for $\alpha$ in the case of $\C$.
When we say that we work over $\R$ (resp. $\C$), the input is given as over a 
number 
field $\E\subseteq \R$ (resp. $\E\subseteq \C$). The algorithm is then allowed to 
work with 
extension fields of $\E$ in $\R$ (resp. $\C$), as long as the extension degrees 
are polynomially 
bounded. On the other hand, if we say that we work with a number field, we 
usually assume that we do not need to work with further extensions. 

For the $\epsilon$-symmetrization problem, we work with the arithmetic model, 
namely the fundamental steps are basic field operations, and the complexity is 
determined by counting the number of such basic operations. Furthermore, over 
number 
fields we are also concerned with the bit complexity. So when we say that some 
procedure works over any field, we mean that the procedure uses polynomially 
arithmetic operations, and when over number fields, $\R$ or $\C$, the bit 
complexity is also polynomial. 

\paragraph{Tuples of matrices.} A matrix tuple is an element in $M(n, \F)^m$, and 
an $\epsilon$-symmetric matrix tuple is an element in $S^\epsilon(n, \F)^m$. We 
will mostly use $\vecB$, $\vecC$ to denote matrix tuples. Given $\vecB=(B_1, 
\dots, B_m)\in M(n, \F)^m$, define its kernel, $\ker(\vecB)$, as 
$\cap_{i\in[m]}\ker(B_i)$, and its image, $\im(\vecB)$, as $\langle 
\cup_{i\in[m]}\im(B_i)\rangle$. $\vecB\in M(n, \F)^m$ is \emph{non-degenerate}, if 
$\ker(\vecB)=\vzero$, and $\im(\vecB)=\F^n$. For $\vecB\in S^\epsilon(n, \F)^m$, 
due to the 
$\epsilon$-symmetric condition, it can be verified easily that $\im(\vecB)=\{ v\in 
\F^n : \forall u\in \ker(\vecB), u^tv=0\}$. So $\vecB\in S^\epsilon(n, \F)^m$ is 
non-degenerate if and only if $\ker(\vecB)=\vzero$.

Given $\vecB=(B_1, \dots, B_m)\in M(n, 
\F)^m$, $\vecB^t=(B_1^t, \dots, B_m^t)$. Given $\alpha\in \F$, 
$\alpha\vecB=(\alpha B_1, \dots, \alpha B_m)$. So for $\vecB\in S^\epsilon(n, 
\F)$, $\vecB^t=\epsilon\vecB$. Given $A, D\in M(n, \F)$, 
$A\vecB D=(AB_1D, \dots, AB_mD)$. 
Given $\vecB, \vecC\in M(n, \F)^m$, $\vecB$ and $\vecC$ are 
\emph{conjugate}, if there exists $A\in \GL(n, \F)$ such that $A\vecB=\vecC A$. 
$\vecB$ and $\vecC$ are \emph{equivalent}, if there exists $A, D\in\GL(n, 
\F)$ such that $A\vecB=\vecC D$. 
The classical module isomorphism problem asks to decide whether $\vecB$ and 
$\vecC$ are conjugate. 
\begin{theorem}[{\cite{CIK97,BL08,IKS10}}]\label{thm:mi}
Let $\vecB$ and $\vecC$ be from $M(n, \F)^m$. There exists a deterministic 
algorithm that 
decide whether $\vecB$ and $\vecC$ are conjugate. The algorithm uses polynomially 
many arithmetic operations. Over number fields 
the bit 
complexity of 
the algorithm is also polynomial. 
\end{theorem}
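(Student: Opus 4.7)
The plan is to cast the problem as one of module decomposition and then invoke the Krull--Schmidt theorem. View $\F^n$ as a left module over the subalgebra $\cA_\vecB := \F\langle B_1, \ldots, B_m\rangle \subseteq M(n,\F)$ generated by the $B_i$, and analogously for $\vecC$. A matrix $A$ satisfies $A\vecB = \vecC A$ precisely when it is a homomorphism of these modules, so the question is whether there is an invertible such homomorphism.

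First I would compute the linear space $\cH := \{A \in M(n,\F) : AB_i = C_iA, \forall i \in [m]\}$ of intertwiners by solving a homogeneous linear system, and similarly the commutant $\End(\vecB) := \{X : XB_i = B_iX, \forall i\}$, which is a finite-dimensional $\F$-algebra acting on $\cH$ on the right; $\vecB$ and $\vecC$ are conjugate iff $\cH$ contains a unit. Then I would compute the Jacobson radical $\Rad(\End(\vecB))$, pass to the semisimple quotient, and lift a complete system of primitive orthogonal idempotents $e_1, \ldots, e_r \in \End(\vecB)$ with $1 = e_1 + \cdots + e_r$. This yields a decomposition $\F^n = e_1\F^n \oplus \cdots \oplus e_r\F^n$ of $\vecB$ into indecomposable $\cA_\vecB$-submodules. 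Doing the same for $\vecC$ produces $f_1, \ldots, f_s \in \End(\vecC)$, and by Krull--Schmidt the two tuples are conjugate if and only if, after a suitable reindexing, $r = s$ and $e_i\F^n \cong f_i\F^n$ as modules for each $i$.

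To test whether indecomposable pieces $e_i\F^n$ and $f_j\F^n$ are isomorphic I would work inside the larger algebra $\End(\vecB \oplus \vecC)$: an isomorphism corresponds to an element of the off-diagonal block $f_j \End(\vecB\oplus\vecC) e_i$ whose image in the semisimple quotient is a unit of the local component indexed by $(i,j)$, which reduces to a rank/invertibility test inside a matrix algebra over a division ring. Once a matching between the indecomposable summands is found, the individual local isomorphisms assemble block by block into an invertible $A \in \cH$. Managing bit complexity over a number field follows by representing radicals, idempotents, and bases of the components by polynomial-size data and by expressing every lifted idempotent as a polynomial of bounded degree in the original generators.

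The main obstacle is the deterministic computation of $\Rad(\End(\vecB))$ and of a primitive orthogonal idempotent decomposition, with polynomially bounded bit complexity over $\Q$ and in small positive characteristic. In characteristic zero or large characteristic the classical trace-form approach of Friedl--R\'onyai suffices, but in small characteristic one must detect nilpotents lying in the kernel of every trace form on the regular representation by separate ingredients; this is exactly where the cited algorithms \cite{CIK97,IKS10,BL08} concentrate their technical work, and where any alternative proof would have to spend most of its effort.
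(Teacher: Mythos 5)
There is a genuine gap, and it sits exactly where you file it away as ``technical work'': the deterministic computation of a complete system of primitive orthogonal idempotents of $\End(\vecB)$, equivalently an explicit decomposition of $\F^n$ into indecomposable submodules. Computing the radical is indeed unproblematic (Dickson's trace criterion in characteristic zero, R\'onyai's deterministic algorithm over finite fields), but splitting the semisimple quotient into simple components and, above all, finding a primitive idempotent inside a simple component is \emph{not} known to be achievable deterministically in polynomial time, and in several regimes it is provably as hard as problems one does not expect to solve. Over $\F_q$ it requires factoring polynomials over $\F_q$, so one only gets a Las Vegas algorithm or deterministic time polynomial in the characteristic (compare the derandomization caveat in Theorem~\ref{thm:algebra_finite}). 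Over $\Q$ the situation is worse: if the $B_i$ span (the regular representation of) a four-dimensional simple algebra, then even deciding whether $\F^n$ is decomposable as a module --- a prerequisite for your Krull--Schmidt matching --- amounts to deciding whether that algebra is isomorphic to $M(2,\Q)$, which is equivalent to quadratic residuosity modulo composites, and constructing the splitting reduces (under GRH, randomized) factoring of squarefree integers to it; this is precisely the obstacle discussed in the paper's ``open problems'' paragraph. So your route passes through a subproblem strictly harder than the theorem you are trying to prove, and the closing claim that the cited papers ``concentrate their technical work'' there is a misreading: they never compute such a decomposition.

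What the cited algorithms actually do is avoid indecomposable decompositions altogether. The Chistov--Ivanyos--Karpinski / Ivanyos--Karpinski--Saxena route keeps the intertwiner space $\cH$ and greedily drives one of its elements to maximal rank via the rank-increment lemma (Lemma~\ref{lem:inc_rank}); the delicate point is that local progress is guaranteed not for $\cH$ viewed as $n\times n$ matrices but in a different representation --- left multiplication on the hom/endomorphism module --- with progress measured modulo the radical or, when the radical is unavailable, via associated projective right ideals generated by idempotents that need not be primitive. This is exactly the strategy the present paper adapts in Section~\ref{sec:sym}, including the passage to a sufficiently large set of scalars and the small-field workaround. Brooksbank--Luks give another decomposition-free method. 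To repair your proposal you would have to replace the primitive-idempotent/Krull--Schmidt step by such a unit-finding argument; as written, that step is not a routine citation but the point where the approach fails.
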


\paragraph{Structure of algebras.}
The proofs in this paper rely heavily on structure of
finite dimensional algebras, so we recall in nutshell some 
of the most important notions and facts from their theory. Classical references 
include \cite{Pierce}, and a concise introduction can be found in \cite[Sec. 
5]{AB95}.
All the algebras we consider are finite dimensional associative algebras
over some field $\F$. An ideal is a linear subspace
of $\cA$ closed under multiplication by elements of $\cA$, both
from the left and from the right. Left ideals are subspaces
closed under multiplication by elements of $\cA$ from the left, right
ideals are defined analogously. In this context, an ideal or, more
generally, a subalgebra $S$ is nilpotent when $S^n$, the subspace spanned
by products of length $n$ of element from $S$ are zero for some $n$.
An algebra $\cA$ has a largest nilpotent ideal
$\Rad(\cA)$, called the Jacobson radical, 
also simply referred to 
as the 
radical in this paper. We 
will make use of an 
alternative
characterization of the radical, namely, it is the intersection of the maximal 
right ideals (or the intersection of maximal left ideals).

An algebra 
is simple when it contains no proper and 
non-trivial 
(two-sided)
ideals. 
A semisimple algebra is 
isomorphic to a direct sum
of simple algebras. 
The factor algebra $\cA/\Rad(\cA)$
is semisimple. In a finite dimensional
algebra over a field, every nonzero element is either a unit (i.,e.,
has a multiplicative inverse) or a zero divisor (can be multiplied
by nonzero elements from both sides to obtain zero). In a division
algebra, also known as a skewfield, every nonzero element is a unit. 
A simple algebra is isomorphic to a full matrix algebra 
over a division algebra \cite[Theorem 17 on pp.129]{AB95}. 
Over finite fields all the division 
algebras are actually commutative, 
or in other words, they are 
fields; this is known as Wedderburn's little theorem.
Over an algebraically closed field there is even only one division algebra, that is
the base field itself \cite[Lemma 14 on pp. 127]{AB95}.
The structural results summarized above are also known as Wedderburn's theory, and 
a concise introduction can be found in \cite[Sec. 5]{AB95}.

An idempotent is a nonzero element $e$ with $e^2=e$. A semisimple
algebra necessarily contains at least one idempotent: the identity element.
Non-nilpotent algebras also contain idempotents (but not necessarily
identity elements). 
This follows 
from the following fact.
\begin{fact}\label{fact:non-nil}
Let
$\cA$ be a non-nilpotent algebra over a field $\F$. Every basis of $\cA$ contains 
a 
non-nilpotent element.
\end{fact}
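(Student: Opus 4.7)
The plan is to prove the contrapositive: if every element of some basis $a_1,\dots,a_d$ of $\cA$ is nilpotent, then $\cA$ is nilpotent as an algebra. Since a basis spans $\cA$, the hypothesis is equivalent to saying the nilpotent elements of $\cA$ linearly span $\cA$, so it suffices to establish the slightly stronger statement: any finite-dimensional $\F$-algebra linearly spanned by its nilpotent elements must itself be nilpotent.

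The first step is to pass to the semisimple quotient $S := \cA/\Rad(\cA)$. Since the quotient map sends nilpotents to nilpotents and preserves the spanning property, while $\cA$ is nilpotent if and only if $S = 0$ (as $\Rad(\cA)$ is the largest nilpotent ideal), it suffices to show that no nonzero finite-dimensional semisimple $\F$-algebra can be linearly spanned by its nilpotent elements.

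By Wedderburn, decompose $S \cong \bigoplus_j M(n_j, D_j)$ into simple components. Each projection $S \twoheadrightarrow M(n_j, D_j)$ preserves nilpotency and spanning, so I reduce further to treating a single simple algebra $M(n, D)$. The case $n = 1$ is immediate since a division algebra has no nonzero nilpotents. For $n \geq 2$, I would use the reduced trace $\mathrm{trd}: M(n, D) \to Z(D)$, which is an $\F$-linear map that vanishes on every nilpotent element (since after base change to a splitting field of $D$ all eigenvalues are zero), yet is not identically zero. Consequently all nilpotents lie in the proper $\F$-subspace $\ker(\mathrm{trd}) \subsetneq M(n, D)$, contradicting the spanning hypothesis.

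The main obstacle is the last step, certifying that the trace-like functional is nontrivial. In positive characteristic the naive value $\mathrm{tr}(I) = n$ may vanish (as in $M(p,\F_p)$), so one works instead with the reduced trace, which remains nontrivial because a central simple algebra is always separable over its center and thereby has non-degenerate reduced trace bilinear form. In the settings relevant to this paper (characteristic zero or finite base fields), every finite-dimensional simple algebra is separable over its center, so this nontriviality is routine.
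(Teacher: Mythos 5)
Your proof is correct, and its core idea is the same as the paper's: a (full or reduced) trace functional is linear, kills nilpotents, and is not identically zero on a nonzero simple algebra, so nilpotents cannot span. The difference is in how you reach the simple case. The paper first extends scalars to the algebraic closure $\K$ of $\F$; there the non-nilpotent algebra $\K\otimes\cA$ has a full matrix algebra $M(n,\K)$ as a quotient, and one only needs the elementary fact that the ordinary matrix trace vanishes on nilpotents but not identically. You instead stay over $\F$, pass to $\cA/\Rad(\cA)$, apply Wedderburn to get components $M(n_j,D_j)$, and then must invoke the reduced trace of a central simple algebra together with the non-degeneracy of its trace form to handle nontrivial division algebras in arbitrary characteristic. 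Both routes are sound; yours proves the classical Wedderburn nil-basis theorem in its contrapositive form without leaving the base field, at the cost of heavier machinery (reduced traces, separability of central simple algebras), whereas the paper's base-change trick makes the division algebras disappear and keeps the argument at the level of ordinary traces. One small remark: your separate treatment of $n=1$ is not needed, since the reduced-trace argument covers $D$ itself as well, but it does no harm.
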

\begin{proof}
Let $\K$ be the algebraic closure
of $\F$. Observe that $\cA$, as $=\F\otimes_\F\cA$, is embedded into
$\K\otimes \cA=:\overline \cA$. Then $\overline \cA$ is a
non-nilpotent $\K$-algebra and hence it has a full matrix algebra
as a factor. The image of an $\F$-basis of $\cA$ under the composition
of the embedding of $\cA$ into $\overline \cA$ with the natural projection
to this factor gives a system that spans a full matrix algebra over
$\K$. Now observe that a full matrix algebra cannot be spanned by
nilpotent matrices: nilpotent matrices have zero traces but there
exist matrices with nonzero trace even in positive characteristic. It follows that 
this $\F$-basis must contain at least one non-nilpotent element.
\end{proof}
The proof of Fact~\ref{fact:non-nil} shows that it is easy to find a non-nilpotent 
element in
a non-nilpotent algebra. 
Back to our task of locating an idempotent, let 
$y$ be a 
non-nilpotent element. Then the 
(commutative) subalgebra generated by $x=y^n$ for sufficiently large $n$ 
(say $n=\dim\cA$) has an identity element $e$, which is necessarily idempotent. 
To see this, note that 
the action of $y$ by left multiplication on the vector 
space $\cA$ yields the Fitting decomposition $\cA_0\oplus \cA_1$, such that 
$\cA_0=\ker(y^n)$ and $\cA_1=\im(y^n)$ for a large enough $n$. Consider the 
restriction of $y^n$ on $\cA_1$; its characteristic polynomial $f$ has a nonzero 
constant term $\alpha$. Then $(f(y^n)-\alpha)/\alpha$ is 
an element of the subalgebra generated by $y^n$ that gives an identity $e$ on 
$\cA_1$. Now observe that $y^n\in \cA_1$, so indeed $ey^{nk}=y^{nk}$ for any $k\in 
\N$. 
While the above argument shows the existence of $e$, a more straightforward way to 
compute this $e$ would be to express 
$e$ as a linear 
combination
of powers of $x$ whose coefficients are variables. Then $e$ can be computed in 
polynomial time, by solving a 
system of linear equations expressing the condition $ex=x$. 

A matrix representation of an algebra $\cA$ is 
a homomorphism of $\cA$ into a matrix algebra.
There is a straightforward
linear representation over $\F$ at hand, the so-called
left regular representation,  as follows. Let $V(\cA)$ be the vector space 
supporting the algebra $\cA$. Then $a\in \cA$ naturally acts as a linear
map on $V(\cA)$ as $\ell_a$ by sending $x\in V(\cA)$ to $ax$. The properties of
the algebra operations (most notably, though not exclusively,
 associativity of multiplication) ensure that $\ell:\cA\to\hom(V(\cA),V(\cA))$ by 
 sending $a$ to $\ell_a$ 
 is a 
homomorphism from $\cA$
into the algebra of $\F$-linear transformations of $\cA$. It is
an embedding when $\cA$ has an identity element.
We remark that image of $\ell_a$ is the right ideal $a\cA$ generated by
$a$, while its kernel is the right annihilator 
$\Ann_r(a)=\{x\in \cA:ax=0\}$ of $\cA$. It is straightforward
to check that $\Ann_r(a)$ is also a right ideal.

\paragraph{Structure of $*$-algebras.} We collect basic facts about 
$*$-algebras here. A classical reference for $*$-algebras is Albert's book 
\cite{Alb39}.
Fix a field $\F$, and let $\cA$ be 
an $\F$-algebra, e.g. an algebra over $\F$. Given an 
anti-automorphism $*:\cA\to\cA$ of order at most 2, $(\cA, 
*)$ is termed as a $*$-algebra. We will always assume that for an $\F$-algebra 
$\cA$, $*$ fixes $\F$, that is $\alpha^*=\alpha$ for $\alpha\in \F$.  An element 
$a\in \cA$ is \emph{$*$-symmetric} if 
$a^*=a$, and \emph{$*$-unitary} if $a^*a=1$. A 
$*$-homomorphism between $(\cA, 
*)$ and $(\cA', \circ)$ is an algebra homomorphism $\phi:\cA\to\cA'$ such that 
$\phi(a^*)=\phi(a)^\circ$. An ideal 
$I\subseteq \cA$ is an $*$-ideal, if $I^*=I$. 
The Jacobson radical of $\cA$, 
denoted as $\Rad(\cA)$, is the largest nilpotent ideal of $\cA$ as an 
$\F$-algebra. 
It is straightforward to verify that $\Rad(\cA)$ is a $*$-ideal. 
A $*$-algebra is $*$-simple, if it does not contain non-trivial $*$-ideals. 
Note that for a $*$-algebra $(S, *)$, if $S$ is simple, then it must 
be $*$-simple.
The semisimple $\cA/\Rad(\cA)$, with the induced 
involution (again denoted as $*$), is $*$-isomorphic to $(S_1, *)\oplus (S_2, *) 
\oplus \dots \oplus 
(S_k, *)$, where each $(S_i, *)$ is a $*$-simple algebra. 


A $*$-simple algebra $(S, *)$ over $\F$ falls into two categories. 
Either $S$ is a simple 
algebra, or $S$ is a direct sum of two anti-isomorphic simple algebras with $*$ 
interchanging the two summands \cite[Chap. X.3]{Alb39}. We shall refer to the 
latter as \emph{exchange type}, and its structure is easy to describe: an 
exchange-type 
$*$-simple algebra 
$(S, *)$ is 
$*$-isomorphic to $(M(n, D)\oplus M(n, D)^{op}, \circ)$, where $\circ$ is an 
involution 
sending $(A, B)$ to $(\phi^{-1}(B), \phi(A))$ for some algebra automorphism $\phi$ 
of 
$M(n, D)$.

When $S$ is simple, a general 
result regarding the possible forms of involutions is \cite[Chap. X.4, Theorem 
11]{Alb39}. 
We can explicitly list these forms for $\F_q$ with $q$ odd, $\R$, and $\C$ as 
follows. 

Over $\F_q$ with $q$ odd, finite simple $*$-algebras are classified as follows 
(see also \cite[Sec. 3.3]{BW12}). To start with, recall that a finite simple 
algebra $S$
over $\F_q$ is isomorphic to $M(n, \F_{q'})$ where $\F_{q'}$ is an extension field 
of $\F_q$. So without loss of generality we may 
assume $S=M(n, \F_{q'})$. Then any 
involution $*$ on $M(n, \F_{q'})$ is in one of the following forms. 
\begin{description}
\item[Orthogonal type] For $X\in 
M(n, \F_{q'})$, $X^*=A^{-1}X^tA$ for some $A\in\GL(n, \F_{q'})$, $A=A^t$.
\item[Symplectic type] For $X\in 
M(n, \F_{q'})$, $X^*=A^{-1}X^tA$ for some $A\in\GL(n, \F_{q'})$, $A=-A^t$.
\item[Hermitian type] $\F_{q'}$ is a quadratic extension of a subfield $\F_{q''}$. 
For $X\in M(n, 
\F_{q'})$, $X^*=A^{-1}\overline{X}^tA$ for some $A\in\GL(n, \F_{q'})$, 
$\overline{A}^t=A$.
\end{description} 

Over $\R$, finite simple $*$-algebras are classified as follows (see also 
\cite{Lew77}). To start with, recall that, by a theorem of Frobenius (see 
e.g. \cite{Pal68}), a finite simple algebra $S$ over 
$\R$ is isomorphic to either $M(n, \R)$, $M(n, \C)$, or $M(n, \qH)$. So without 
loss of generality we may assume $S$ is one of the above. Then any involution $*$ 
on $S$ is in one of the following forms. Note that each type corresponds to 
a classical group as in \cite{Wey97}.
\begin{description}
\item[Orthogonal type] $S=M(n, \R)$. For $X\in M(n, \R)$, $X^*=A^{-1}X^tA$, 
$A\in\GL(n, \R)$, $A=A^t$.
\item[Symplectic type] $S=M(n, \R)$. For $X\in M(n, \R)$, $X^*=A^{-1}X^tA$, $A\in 
\GL(n, \R)$, $A=-A^t$.
\item[Complex orthogonal type] $S=M(n, \C)$. For $X\in M(n, \C)$, 
$X^*=A^{-1}X^tA$, $A\in \GL(n, \C)$, $A=A^t$.
\item[Complex symplectic type] $S=M(n, \C)$. For $X\in M(n, \C)$, 
$X^*=A^{-1}X^tA$, $A\in \GL(n, \C)$, $A=-A^t$.
\item[Unitary type] $S=M(n, \C)$. For $X\in M(n, \C)$, 
$X^*=A^{-1}\overline{X}^tA$, $A\in \GL(n, \C)$, $A=\overline{A}^t$.
\item[Quaternion unitary type] $S=M(n, \qH)$. For $X\in M(n, \qH)$, 
$X^*=A^{-1}\overline{X}^tA$, $A\in \GL(n, \qH)$, $A=\overline{A}^t$.
\item[Quaternion orthogonal type] $S=M(n, \qH)$. For $X\in M(n, \qH)$, 
$X^*=A^{-1}\overline{X}^tA$, $A\in\GL(n, \qH)$, $A=-\overline{A}^t$.
\end{description}
On $\C$, $\overline{\,\cdot \,}$ denotes the standard conjugation 
$a+bi\mapsto a-bi$, 
while on $\qH$ it is $a+bi+cj+dk\mapsto a-bi-cj-dk$. 

Over $\C$, finite simple $*$-algebras are classified as follows. To start with, 
recall that a finite simple algebra $S$ over $\C$ is isomorphic to $M(n, \C)$, 
because the only 
finite dimensional division algebra over an algebraically closed field is the 
field itself.
So 
without loss of generality we may assume $S$ is $M(n, \C)$. Then any involution 
$*$ on $S$ is in one of the following forms. 
\begin{description}
\item[Orthogonal type] For $X\in M(n, \C)$, 
$X^*=A^{-1}X^tA$, $A\in \GL(n, \C)$, $A=A^t$.
\item[Symplectic type] For $X\in M(n, \C)$, 
$X^*=A^{-1}X^tA$, $A\in \GL(n, \C)$, $A=-A^t$.
\end{description}



\paragraph{Adjoint algebras of $\epsilon$-symmetric matrix tuples.} 
We first present the formal definition.
\begin{definition}\label{def:adjoint}
Let $\F$ be a field and fix $\epsilon\in\{1,-1\}$. For $\vecB=(B_1, \dots, B_m)\in 
S^\epsilon(n, \F)^m$, the \emph{adjoint algebra} of $\vecB$, denoted 
as $\Adj(\vecB)$, is $\{(A, D)\in M(n, \F)^{op} \oplus M(n, \F) | \forall 
i\in[m], A^tB_i=B_iD\}$. $\Adj(\vecB)$ is a $*$-algebra over $\F$ with $(A, 
D)^*=(D, A)$.
\end{definition}
Note that it is a subalgebra of $M(n, \F)^{op}\oplus M(n, \F)$, $\F$ embeds in as 
$(\alpha I_n, \alpha I_n)$ for $\alpha\in \F$, and $*$ fixes $\F$. If $\vecB$ is 
non-degenerate then the projection of $\Adj(\vecB)$ to either $M(n, \F)^{op}$ or 
$M(n, \F)$ is faithful. Therefore, \emph{in the non-degenerate case}, we 
can 
identify $(\Adj(\vecB), *)$ as a 
subalgebra of $M(n, \F)$ consisting of $\{D\in M(n, \F) \mid \exists A\in M(n, \F) 
\text{ s.t. } \forall i\in[m], A^tB_i=B_iD\}$, and for $D\in \Adj(\vecB)$, $D^*$ 
is just the (unique) solution of $\forall i\in[m], A^tB_i=B_iD$. In particular we 
have $A^t\vecB=\vecB A^*$.

Note that a linear basis of the 
adjoint algebra of a tuple of $\epsilon$-symmetric matrices can be computed 
efficiently by 
solving a system of linear forms. The $*$-map is also easily implemented. 

\section{Proof of Theorem~\ref{thm:isometry}}\label{sec:iso}

\subsection{An outline of the main algorithm for 
Theorem~\ref{thm:isometry}.}\label{subsec:isometry_outline}

Let $\F$ be a field. Recall that we have 
$\vecB=(B_1, \dots, B_m)$ and 
$\vecC=(C_1, \dots, C_m)\in S^\epsilon(n, \F)^m$. The goal is to 
decide if there exists $F\in\GL(n, \F)$ such that $\forall i\in[m]$, $F^tB_iF=C_i$.
The main steps of the algorithm are as follows.  

\begin{enumerate}
\item {\it Reduce to the non-degenerate case.} If $\vecB$ is 
degenerate, that is 
$\cap_{i\in[m]}\ker(B_i)\neq \vzero$, we can 
reduce to the non-degenerate case by restricting to the non-degenerate part. See 
Section~\ref{subsec:I}.
\item {\it Solve the twisted equivalence problem.} In this step we test whether 
$\vecB$ and $\vecC$ are ``twisted equivalent'', that is, whether there 
exist $A, D\in\GL(n, q)$ such that $A^t\vecB=\vecC D$. This problem 
can be solved efficiently by reducing to the module isomorphism problem. See 
Section~\ref{subsec:II}.
\item {\it Reduce to decomposing a symmetric element in a $*$-algebra.} At the 
beginning of this step we know that $\vecB$ and $\vecC$ are twisted equivalent 
under some 
$A, D\in\GL(n, q)$. Note that if $D=A^{-1}$ then we are done. If not, the hope is 
to transform $A$ and $D$ appropriately to get an invertible matrix $F$ such that 
$\vecB$ and $\vecC$ 
are twisted equivalent under $F$ and $F^{-1}$, if such an $F$ exists. Let 
$E=A^{-1}D^{-1}$. Since $\vecC$ is non-degenerate, the adjoint 
algebra of $\vecC$ can be defined alternatively as a 
subalgebra of $M(n, \F)$, $\cA=\Adj(\vecC):=\{D\in M(n, \F) \mid \exists A\in M(n, 
\F) 
\text{ s.t. } \forall i\in[m], A^tC_i=C_iD\}$. The involution $*$ sends $D\in 
\Adj(\vecC)$ to $D^*$, which is the (unique) solution of $\forall i\in[m], 
A^tC_i=C_iD$. 
It can be verified that $E\in \cA$, and $E^*=E$. The 
important observation then is that, there exists such $F$ if and only if there 
exists $X\in \cA$ such that $E=X^*X$. See Section~\ref{subsec:III}.
%
\item {\it Solve the $*$-symmetric decomposition problem.} This is the main 
technical piece of this algorithm. This step relies on certain results about the 
structure of 
$*$-algebras, which has been summarized in Section~\ref{sec:prel}. The basic idea 
is to utilize 
the 
algebra 
structure of $\cA$, to reduce to the semisimple case, and then further to the 
simple case. To deal with the simple case turns out to be 
exactly the isometry problem for a \emph{single} (symmetric, skew-symmetric, or 
Hermitian\dots) form, which can be solved using existing algorithms. We now 
outline the main steps.
\begin{enumerate}
\item[4.a.] {\it Compute the algebra structure of $\cA$.} We start with computing 
the 
algebra structure of $\cA$, including the Jacobson radical $\Rad(\cA)$, the 
decomposition of 
the semisimple quotient into simple summands, and for each simple summand, an 
explicit isomorphism with a matrix ring over a division algebra. This can be  
achieved by resorting to known algorithms by R\'onyai \cite{Ron90} and Eberly 
\cite{Eber91a,Ebe91}. This step is the main bottleneck to extend this algorithm to 
number fields (without going to extension fields). See Section~\ref{subsubsec:I}. 
\item[4.b.] {\it Recognize the $*$-algebra structure.} We then take into account 
the 
$*$-algebra structure. The involution $*$ preserves the Jacobson radical, 
so it induces an involution on the semisimple quotient, denoted again by $*$. For 
a particular 
summand $S$ of the semisimple quotient, $*$ either switches $S$ with another 
summand, or preserves it. In the 
the latter case, by the 
structure theory of $*$-algebras in the simple case, $*$ has to be in a 
particular 
form, and this form can be computed explicitly by resorting to the module 
isomorphism problem. See Section~\ref{subsubsec:II}.
\item[4.c.] {\it Reduce to the semisimple case.} In this step, we show that any 
solution to the $*$-symmetric decomposition problem for $\cA/\Rad(\cA)$ and 
$E+\Rad(\cA)$ can be 
lifted efficiently to a solution to the $*$-symmetric decomposition problem for 
$\cA$ and 
$E$. This procedure crucially relies on that we work with fields of characteristic 
not 
$2$, and is the main bottleneck to extend this algorithm to fields of 
characteristic $2$. This means that we can reduce to work with semisimple 
$*$-algebra $\cA$ in 
the following. 
See Section~\ref{subsubsec:III}.
\item[4.d.] {\it Reduce to the $*$-simple and simple case.} In this step, we want 
to  
tackle the $*$-symmetric decomposition problem for a semisimple $*$-algebra $\cA$. 
Recall 
that a decomposition of $\cA$ as a sum of simple summands has been computed in 
Step (4.a). We present a reduction to the same problem for those simple summands 
that are preserved by 
$*$. This means that we can reduce to work with a simple $*$-algebra $\cA$. See 
Section~\ref{subsubsec:IV}.
\item[4.e.] {\it Tackle the simple case by reducing to the isometry problem for a 
single 
form.} In this step, we want to solve the $*$-symmetric decomposition problem for 
a 
simple $*$-algebra $\cA$. Recall that an explicit isomorphism of $\cA$ with a 
matrix ring over a division algebra has been computed in Step (4.a), and a 
particular form of $*$ on $\cA$ has been computed in Step (4.b). By these two 
pieces of information, we can reduce the $*$-symmetric decomposition problem for 
$\cA$ to 
the isometry problem for a \emph{single} classical (symmetric, skew-symmetric, 
Hermitian\dots) form. See 
Section~\ref{subsubsec:V}.
\item[4.f.] {\it Solve the isometry problem for a single form.} To solve the 
isometry 
problem for a single classical form is a classical algorithmic problem. One 
approach is to transform a given form into the standard form, by first block 
diagonalizing it, and then bringing the diagonal blocks to basic ones. Do this 
for both 
forms, compare whether the respective standard forms are the same, and if so, 
recover the isometry from the changes of bases in the standardizing procedures. 
See 
Section~\ref{subsubsec:VI}.
\end{enumerate}
\end{enumerate}

From Step (4.f) above, we may view the whole procedure as a reduction from 
isometry testing of an $\epsilon$-symmetric matrix tuple to isometry testing of 
classical forms. Over $\R$, these classical forms are exactly those ones 
that define the classical groups in the sense of Weyl \cite{Wey97} (see 
Section~\ref{sec:prel}). In particular, in principle all possible classical forms
-- symmetric, skew-symmetric, Hermitian, 
skew-Hermitian over $\R$, $\C$, and the quaternion algebra $\qH$ -- can arise, 
even when we deal with only a symmetric matrix tuple. It will be interesting 
to implement our algorithm and examine whether every classical form type indeed 
arises.

There is a tricky issue if we want to output an isometry over 
$\R$ and $\C$ as described in Theorem~\ref{thm:isometry} (2) and (3). Over 
$\R$ and $\C$, the simple summands of a semisimple algebra may be 
defined over different extension fields, and one needs to be careful not to mix 
these fields arbitrarily as that may lead to an extension field of exponential 
degree. To 
overcome this problem we need an alternative solution to the $*$-symmetric 
decomposition 
problem as described in Section~\ref{subsec:alt},
based on 
$*$-invariant Wedderburn-Malcev complements of the Jacobson ideal of a $*$-algebra 
\cite{Taf57}.

In the following subsections, from Section~\ref{subsec:I} to~\ref{subsec:decomp}, 
we give the detailed procedure, which solves completely the case of 
$\F_q$, 
as well as the decision version of the isometry problem for $\R$ and $\C$. The 
main algorithm fails to construct an explicit isometry as described in 
Theorem~\ref{thm:isometry} (2) and (3). We remedy this by providing an alternative 
algorithm in Section~\ref{subsec:alt}, which replaces some steps of the main 
algorithm.

\subsection{Main algorithm I: reduce to the non-degenerate case.}\label{subsec:I}

This step works over any field. The procedure is standard but we 
give details here for completeness. 

Recall that $\vecB\in S^\epsilon(n, \F)^m$, as an $\epsilon$-symmetric matrix 
tuple, is non-degenerate if $\ker(\vecB)=\vzero$
(Section~\ref{sec:prel}). Now suppose we are given $\vecB\in S^\epsilon(n, \F)^m$, 
and let 
$d=\dim(\ker(\vecB))$. Form a change of 
basis matrix $S=[v_1, \dots, v_n]$, $v_i\in \F^n$, such that $\{v_{n-d+1}, \dots, 
v_n\}$ is a basis of $\ker(\vecB)$, and $\langle v_1, \dots, v_{n-d}\rangle$ is a 
complement subspace of $\ker(\vecB)$. Then for every $i\in[m]$, 
$S^tB_iS=\begin{bmatrix}
B_i' & 0 \\
0 & 0
\end{bmatrix}$
where $B_i'\in S^\epsilon(n-d, \F)$. We call $\vecB'=(B_1', \dots, B_m')$ a 
non-degenerate tuple extracted from $\vecB$. It is easy to show the 
following.
\begin{proposition}
Given $\vecB, \vecC\in S^\epsilon(n, \F)^m$, let $\vecB'\in S^\epsilon(\ell_1, 
\F)^m$ (resp. $\vecC'\in S^\epsilon(\ell_2, \F)^m$) be a non-degenerate tuple 
extracted from $\vecB$ (resp. $\vecC$). Then $\vecB\sim \vecC$ if and only if 
$\ell_1=\ell_2$, and 
$\vecB'\sim \vecC'$.
\end{proposition}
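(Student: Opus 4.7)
The plan is to show each direction by carefully tracking how an isometry interacts with the kernels, and then using the block form produced by the extraction procedure.

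First I would handle the forward direction. Suppose $A \in \GL(n, \F)$ satisfies $A^t \vecB A = \vecC$. For any $v \in \F^n$ and any $i\in[m]$, the identity $C_i v = A^t B_i (Av)$ together with invertibility of $A^t$ shows that $v \in \ker(\vecC)$ if and only if $Av \in \ker(\vecB)$. Hence $A$ restricts to a linear isomorphism $\ker(\vecC) \to \ker(\vecB)$, so $\dim \ker(\vecB) = \dim \ker(\vecC)$, giving $\ell_1 = \ell_2 =: \ell$ and writing $d = n - \ell$ for the kernel dimension.

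Next I would produce the isometry between the non-degenerate parts. Let $S_\vecB, S_\vecC \in \GL(n,\F)$ be the change of basis matrices used to extract $\vecB'$ and $\vecC'$, so that
\[
S_\vecB^t B_i S_\vecB = \begin{pmatrix} B_i' & 0 \\ 0 & 0 \end{pmatrix}, \qquad
S_\vecC^t C_i S_\vecC = \begin{pmatrix} C_i' & 0 \\ 0 & 0 \end{pmatrix}
\]
with the trailing block corresponding to the kernel. Setting $A' := S_\vecB^{-1} A S_\vecC$, one immediately gets $(A')^t \operatorname{diag}(B_i', 0) A' = \operatorname{diag}(C_i', 0)$ for every $i$. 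By the kernel-preservation established above, $A'$ sends the span of the last $d$ standard basis vectors to itself, so in the block form
\[
A' = \begin{pmatrix} P & 0 \\ R & S \end{pmatrix},
\]
with $P \in M(\ell, \F)$ and $S \in M(d, \F)$, and both must be invertible since $A'$ is. Computing the block product
\[
(A')^t \begin{pmatrix} B_i' & 0 \\ 0 & 0 \end{pmatrix} A' = \begin{pmatrix} P^t B_i' P & 0 \\ 0 & 0 \end{pmatrix}
\]
and comparing with $\operatorname{diag}(C_i', 0)$ yields $P^t B_i' P = C_i'$ for all $i$, so $P$ is an isometry from $\vecB'$ to $\vecC'$.

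For the reverse direction, assume $\ell_1 = \ell_2 = \ell$ and that $P \in \GL(\ell, \F)$ satisfies $P^t B_i' P = C_i'$ for all $i$. Define $A' = \operatorname{diag}(P, I_d) \in \GL(n,\F)$; then $(A')^t \operatorname{diag}(B_i', 0) A' = \operatorname{diag}(C_i', 0)$. Setting $A := S_\vecB A' S_\vecC^{-1}$ and unwinding the conjugations gives $A^t B_i A = C_i$ for every $i\in[m]$, so $\vecB \sim \vecC$. There is no real obstacle here; the only thing worth keeping in mind is that the extraction depends on the chosen complement of the kernel, but different complements differ by an isometry of the non-degenerate part, so the statement is independent of this choice.
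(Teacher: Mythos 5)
Your proof is correct, and it is exactly the standard block-decomposition argument the paper has in mind when it states the proposition with only the remark ``It is easy to show the following.'' Both directions are verified cleanly: the kernel-preservation observation forces $\ell_1=\ell_2$ and the lower block-triangular shape of $A'$, and the block computation then extracts (resp.\ assembles) the isometry of the non-degenerate parts.
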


Since extracting a non-degenerate tuple from $\vecB$ involves only standard linear 
algebraic computations, this step can be performed in deterministic polynomial 
time. So in the following we can assume that $\vecB$ and $\vecC$ are both 
non-degenerate. 
%

\subsection{Main algorithm II: solve the twisted equivalence 
problem.}\label{subsec:II}

This step works over any field. $\vecB, \vecC\in M(n, \F)^m$ are twisted
equivalent, if there exist $A, D\in \GL(n, \F)$ such that $A^t\vecB=\vecC D$. This 
differs from the usual equivalence as in Definition~\ref{def:three} due to the 
transpose of $A$. But any solution $(A, D)$ to the equivalence problem clearly 
gives a solution to the twisted equivalence problem by $(A^t, D)$. The reason to 
introduce the twisted equivalence is because we want to be closer to the isometry 
concept. We now show how to test whether $\vecB$ and $\vecC$ are equivalent, by
a reduction to the module isomorphism problem.

\begin{proposition}
Given $\vecB, \vecC\in M(n, \F)^m$, there exists a deterministic algorithm that 
decides whether $\vecB$ and $\vecC$ are equivalent (and therefore twisted 
equivalent). The algorithm uses 
polynomially many arithmetic operations. Over number fields the bit 
complexity of the 
algorithm is also polynomial. 
\end{proposition}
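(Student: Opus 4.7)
The plan is to reduce the equivalence problem for $\vecB, \vecC \in M(n, \F)^m$ to the module isomorphism problem on an augmented tuple of $2n \times 2n$ matrices, and then invoke Theorem~\ref{thm:mi}. Since twisted equivalence and equivalence are interreducible in one matrix-transposition step (as already observed in the text: $(A,D)$ is an equivalence iff $(A^t,D)$ is a twisted equivalence), it suffices to test equivalence.

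The construction I have in mind is the standard ``embed as a block'' trick. For $X \in M(n, \F)$, set $\tilde X := \begin{pmatrix} 0 & X \\ 0 & 0 \end{pmatrix}$, and also introduce the projection $J := \begin{pmatrix} I_n & 0 \\ 0 & 0 \end{pmatrix}$. Define $\tilde\vecB := (J, \tilde B_1, \dots, \tilde B_m)$ and $\tilde\vecC := (J, \tilde C_1, \dots, \tilde C_m)$, both in $M(2n, \F)^{m+1}$. I would then prove the claim that $\vecB$ and $\vecC$ are equivalent if and only if $\tilde\vecB$ and $\tilde\vecC$ are conjugate.

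For the forward direction, given $A,D\in\GL(n,\F)$ with $AB_i = C_i D$, the block-diagonal matrix $Y := \diag(A,D) \in \GL(2n,\F)$ commutes with $J$, and the identity $Y \tilde B_i = \tilde C_i Y$ follows by a direct $2\times 2$ block computation (both sides equal $\widetilde{AB_i} = \widetilde{C_i D}$). For the converse, suppose $Y \in \GL(2n,\F)$ satisfies $YJ = JY$ and $Y\tilde B_i = \tilde C_i Y$ for all $i$. Writing $Y = \begin{pmatrix} A & P \\ Q & D \end{pmatrix}$, the relation $YJ = JY$ immediately forces $P = 0$ and $Q = 0$, so $Y = \diag(A,D)$; invertibility of $Y$ then gives $A, D \in \GL(n,\F)$, and the remaining relations collapse to $AB_i = C_i D$. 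Hence an equivalence of $\vecB,\vecC$ can be read off directly from any conjugator produced on the augmented tuples.

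The algorithm is therefore: build $\tilde\vecB, \tilde\vecC$ in deterministic linear time, run the module isomorphism algorithm of Theorem~\ref{thm:mi} on them, and, if a conjugator is returned, extract its diagonal blocks. The complexity bounds (polynomially many arithmetic operations, polynomial bit complexity over number fields) are inherited from Theorem~\ref{thm:mi}, since the reduction only at most doubles the matrix size and adds one matrix to the tuple. I do not anticipate a real obstacle; the only thing to verify carefully is that the presence of $J$ in the augmented tuples genuinely rigidifies any conjugator to be block-diagonal, which is the short computation above.
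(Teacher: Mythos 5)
Your proposal is correct and is essentially identical to the paper's own proof: the paper uses exactly the same augmentation with $B_0'=\left[\begin{smallmatrix} I_n & 0 \\ 0 & 0\end{smallmatrix}\right]$ and $B_i'=\left[\begin{smallmatrix} 0 & B_i \\ 0 & 0\end{smallmatrix}\right]$, the same block computation showing any conjugator must be block-diagonal, and the same appeal to Theorem~\ref{thm:mi}. No gaps.
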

\begin{proof}
From $\vecB=(B_1, \dots, B_m)$, construct a tuple 
of matrices 
$\vecB'=(B_0', B_1', \dots, B_m')$, where $B_i'\in M(2n, \F)$, as 
follows. Every $B_i$ is viewed as a $2\times 2$ block matrix with each block of 
size 
$n\times n$. $B_0'=\left[
\begin{array}{cc}
I_n & 0 \\
0 & 0
\end{array}
\right]$, and for $i\in[m]$, $B_i'=\left[
\begin{array}{cc}
0 & B_i \\
0 & 0
\end{array}
\right]$. Similarly construct $\vecC'$. 

We claim 
that there exist $A, D\in\GL(n, \F)$ satisfying 
$A\vecB=\vecC D$ if and only if there exists an invertible $E\in\GL(2n, \F)$ 
satisfying $E\vecB'=\vecC'E$. For the if direction, let $E=\begin{bmatrix}A & G 
\\ H 
& D\end{bmatrix}$. By $EB_0'=C_0'E$, we have $G=H=0$. Therefore, as $E\in\GL(2n, 
\F)$, $A, 
D\in\GL(n, \F)$. Furthermore, for 
$i\in[m]$, by $\begin{bmatrix}A & 0 \\ 0 
& D\end{bmatrix}\left[
\begin{array}{cc}
0 & B_i \\
0 & 0
\end{array}
\right]=\left[
\begin{array}{cc}
0 & C_i \\
0 & 0
\end{array}
\right]\begin{bmatrix}A & 0 \\ 0 
& D\end{bmatrix}$, we see that $AB_i=C_iD$. For the only if direction, if 
$AB_i=C_iD$ for all $i\in[m]$, then it is easy to see that $E=\begin{bmatrix}A 
& 0 \\ 0 
& D\end{bmatrix}$ satisfies that $E\vecB'=\vecC'E$. 

Therefore, the above construction gives an efficient reduction from the 
equivalence problem for $\vecB$ and $\vecC$ to the conjugacy problem for $\vecB'$ 
and $\vecC'$. We can then call the procedure in Theorem~\ref{thm:mi} to conclude. 
%
%
\end{proof}

Note that if $\vecB\sim \vecC$ then $\vecB$ and $\vecC$ are indeed twisted 
equivalent. 
In other words, if $\vecB$ and $\vecC$ are not twisted equivalent we conclude that 
they 
are not 
isometric either. Therefore, in the following we assume that we have computed $A, 
D\in \GL(n, \F)$ such that $A^t\vecB = \vecC D$.

\subsection{Main algorithm III: reduce to decomposing a $*$-symmetric element in a 
$*$-algebra.}\label{subsec:III}

This step works over any field. From previous steps, for the non-degenerate 
$\vecB, 
\vecC\in S^\epsilon(n, \F)$, we have computed $A, D\in\GL(n, \F)$ such that 
$A^t\vecB=\vecC D$. 

Let $\cA=\Adj(\vecC)$, with the natural 
involution $*$. Since $\vecC$ is non-degenerate, $\cA$ can be embedded as a 
subalgebra of $M(n, \F)$ (see Section~\ref{sec:prel}.)
Let $E=A^{-1}D^{-1}$. Note that $E$ is invertible. 

\begin{claim}
Let $E$ and $\cA$ be as above. $E$ is a $*$-symmetric element in $\cA$.
\end{claim}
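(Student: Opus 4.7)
The plan is to verify directly that the pair $(E,E) \in M(n,\F)^{op} \oplus M(n,\F)$ lies in $\Adj(\vecC)$. Under the identification of $\cA$ with the projection of $\Adj(\vecC)$ onto the second factor (legitimate since $\vecC$ is non-degenerate after Step~I), this is precisely the statement that $E \in \cA$ with $E^* = E$. Concretely, the goal is the matrix identity $E^t\vecC = \vecC E$.

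The key observation is that the hypothesis $A^t\vecB = \vecC D$ comes paired with a companion equation obtained from the $\epsilon$-symmetry of both tuples. Transposing $A^t\vecB = \vecC D$ and using $\vecB^t = \epsilon\vecB$ together with $\vecC^t = \epsilon\vecC$, the two factors of $\epsilon$ cancel and one obtains $\vecB A = D^t\vecC$. Eliminating $\vecB$ between the two equations (substitute $\vecB = D^t\vecC A^{-1}$ from the companion into the original) gives $(DA)^t\vecC = \vecC(DA)$. Setting $F = DA$, this exactly says that $F$ is a $*$-symmetric unit for $\vecC$.

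Finally, I would transfer the identity to $E = F^{-1} = A^{-1}D^{-1}$ by multiplying $(DA)^t\vecC = \vecC(DA)$ on the left by $(DA)^{-t}$ and on the right by $(DA)^{-1}$, which yields $E^t\vecC = \vecC E$. This is the desired relation witnessing that $(E,E) \in \Adj(\vecC)$, so $E \in \cA$ and $E^* = E$. There is essentially no obstacle here: the argument is a few lines of algebraic manipulation, and the only subtlety worth flagging is the sign cancellation enabled by the \emph{joint} $\epsilon$-symmetry of $\vecB$ and $\vecC$, which is why working within the class $S^\epsilon(n,\F)^m$ (rather than on just one side) is essential for the claim.
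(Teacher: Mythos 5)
Your proof is correct and follows essentially the same route as the paper's: both derive a companion equation by transposing $A^t\vecB=\vecC D$ and cancelling the two factors of $\epsilon$, then eliminate $\vecB$ to obtain $E^t\vecC=\vecC E$. The only cosmetic difference is that you first establish the identity for $F=DA$ and then invert, whereas the paper computes directly with $E=A^{-1}D^{-1}$.
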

\begin{proof}
Observe that $A^t\vecB=\vecC D\iff \vecB D^{-1}=A^{-t}\vecC\iff 
D^{-t}\vecB^t=\vecC^tA^{-1}\iff 
D^{-t}\vecB=\vecC A^{-1}$, where the last $\iff$ uses that $\vecB$ and $\vecC$ are 
from $S^\epsilon(n, \F)$.
Therefore $(A^{-1}D^{-1})^t\vecC=D^{-t}A^{-t}\vecC=D^{-t}\vecB D^{-1}=\vecC 
A^{-1}D^{-1}$.
\end{proof}

The following proposition is a conceptually crucial observation for the algorithm. 

\begin{proposition}\label{prop:decomposition}
Let $\vecB$, $\vecC$, $\cA$, and $E$ be as above. Then $\vecB\sim \vecC$ if and 
only if there exists $X\in 
\cA$ such that $X^*X=E$.
\end{proposition}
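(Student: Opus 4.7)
The plan is to exhibit explicit formulas converting between isometries $F \in \GL(n,\F)$ with $F^t\vecB F = \vecC$ and $*$-symmetric factorizations $X^*X = E$ in $\cA$, and to verify the correspondence using the twisted-equivalence identity $A^t\vecB = \vecC D$ together with the defining relation of $*$ on $\cA$. Throughout I will use two facts from the non-degenerate identification of $\cA$ as a subalgebra of $M(n,\F)$: for every $X \in \cA$ one has $(X^*)^t\vecC = \vecC X$, and $X^* \in \cA$ with $(X^*)^* = X$. The latter follows from $\epsilon$-symmetry of $\vecC$: transposing $(X^*)^t\vecC = \vecC X$ and using $\vecC^t = \epsilon\vecC$ gives $X^t\vecC = \vecC X^*$, which is precisely the defining relation for $X^*$ to lie in $\cA$ with $*$-partner $X$.

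For the direction ($\Leftarrow$), suppose $X \in \cA$ satisfies $X^*X = E$. Since $E$ is invertible, so are $X$ and $X^*$, and I set $F := A X^*$. Then
\[ F^t\vecB F \;=\; (X^*)^t A^t \vecB\, A X^* \;=\; (X^*)^t (\vecC D) A X^* \;=\; \vecC\, X D A\, X^*, \]
where the second equality is the twisted equivalence $A^t\vecB = \vecC D$ and the third is $(X^*)^t\vecC = \vecC X$. Now $X^*X = A^{-1}D^{-1}$ rearranges to $DA = X^{-1}(X^*)^{-1}$, so $X D A X^* = I_n$ and $F^t\vecB F = \vecC$.

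For the direction ($\Rightarrow$), suppose $F \in \GL(n,\F)$ satisfies $F^t\vecB F = \vecC$. I set $X := F^{-1}D^{-1}$ and claim $X \in \cA$ with $X^* = A^{-1}F$. To check this, note that $A^t\vecB = \vecC D$ gives $A^{-t}\vecC = \vecB D^{-1}$, and hence
\[ (A^{-1}F)^t \vecC \;=\; F^t A^{-t}\vecC \;=\; F^t\vecB D^{-1} \;=\; (F^t\vecB F)(F^{-1}D^{-1}) \;=\; \vecC X, \]
which is exactly the defining relation identifying $A^{-1}F$ as the $*$-partner of $X$ in $\cA$. Then $X^*X = (A^{-1}F)(F^{-1}D^{-1}) = A^{-1}D^{-1} = E$, as required.

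The only real obstacle is bookkeeping: choosing the correct guess $F = AX^*$ in one direction and $X = F^{-1}D^{-1}$ in the other, and tracking which factor sits on which side of the adjoint pair. The $\epsilon$-symmetry hypothesis enters precisely to ensure that $*$ is an involution on the identified $\cA \subseteq M(n,\F)$; without it, $X^*$ need not lie in $\cA$ and the equation $X^*X = E$ inside $\cA$ would not even be well-posed. No deeper structural results about $*$-algebras are invoked at this stage — they come in only at the next step, when one actually solves the resulting $*$-symmetric decomposition problem $X^*X = E$.
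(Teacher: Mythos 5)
Your proof is correct and follows essentially the same route as the paper's: in the forward direction you take $F = AX^*$ (the paper writes the same matrix as $D^{-1}X^{-1}$, using $AX^*=D^{-1}X^{-1}$), and in the converse you take $X = F^{-1}D^{-1}$ with adjoint $A^{-1}F$, exactly as the paper does with $Z=F$. The only difference is cosmetic — you verify via $A^t\vecB=\vecC D$ directly where the paper uses the rearranged identity $D^{-t}\vecB=\vecC A^{-1}$ — and your preliminary observation that $(X^*)^*=X$ via $\epsilon$-symmetry of $\vecC$ is a correct (and welcome) justification of a fact the paper leaves implicit.
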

\begin{proof}
For the if direction, by $X^*X=A^{-1}D^{-1}$, we have $AX^*=D^{-1}X^{-1}$. Also 
observe that $D^{-t}\vecB=\vecC A^{-1}$, and $(X^*)^t\vecC=\vecC X\iff \vecC 
X^*=X^t\vecC\iff X^{-t}\vecC=\vecC (X^*)^{-1}$. So 
$(D^{-1}X^{-1})^t\vecB=X^{-t}D^{-t}\vecB=X^{-t}\vecC A^{-1}=\vecC 
(X^*)^{-1}A^{-1}$, which gives $(D^{-1}X^{-1})^t\vecB(AX^*)=\vecC$. Now recall 
that $AX^*=D^{-1}X^{-1}$, so $D^{-1}X^{-1}$ is the desired isometry. 

For the only if direction, suppose $Z^t\vecB Z=\vecC$. Setting $X=Z^{-1}D^{-1}$ 
and $Y=A^{-1}Z$, we have $AY=D^{-1}X^{-1}=Z$. So $\vecC=Z^t\vecB Z=Y^tA^t\vecB 
D^{-1}X^{-1}=Y^t\vecC X^{-1}$, which gives $Y=X^*$. By $YX=A^{-1}D^{-1}$, 
$X^*X=A^{-1}D^{-1}$ follows.
\end{proof}

Proposition~\ref{prop:decomposition} then leads to the following question. 
\begin{problem}[$*$-symmetric decomposition problem]
Let $\cA$ be a matrix algebra in $M(n, \F)$ with an involution $*$, and $E\in \cA$ 
be an invertible $*$-symmetric element. Compute $X\in \cA$ such that $X^*X=E$, if 
there exists 
such an element. 
\end{problem}

\subsection{Main algorithm IV: solve the $*$-symmetric decomposition 
problem.}\label{subsec:decomp}
This is the main 
technical piece of this algorithm. The strategy is to utilize the algebra 
structure of $\cA$, and reduce the problem to the case when $\cA$ is a simple 
algebra. When $\cA$ is simple and can be explicitly represented as a full matrix 
ring 
over division algebras, the problem turns out to be equivalent to solving the 
isometry problem for a single classical (symmetric, skew-symmetric, 
Hermitian\dots) form, which then can be solved using existing algorithms. 

\subsubsection{Decomposition algorithm I: compute the algebra 
structure.}\label{subsubsec:I}
By 
resorting to known results, this 
step works over finite fields \cite{Ron90,Iva00,EG00}, the 
real field, and the complex field \cite{FR85,Eber91a,Ebe91}. We now cite these 
results as follows.

\begin{theorem}[\cite{Ron90}; see also \cite{Iva00,EG00}]\label{thm:algebra_finite}
Suppose we are given a linear basis of an algebra $\cA$ in $M(n, \F_q)$. There is 
a Las Vegas algorithm that computes 
\begin{enumerate}
\item a linear basis of the 
Jacobson radical $\Rad(\cA)$, and
\item an 
epimorphism $\pi: \cA\to M(n_1, \F_{q_1})\oplus \dots \oplus M(n_k, \F_{q_k})$ 
with 
kernel 
$\Rad(\cA)$, and $\F_{q_i}$ an extension field of $\F_q$. $\F_{q_i}$ is 
specified by 
a linear basis over $\F_q$.
\end{enumerate}
The algorithm runs in time 
$\poly(n, \log q)$, and can be derandomized at the price of running in time 
$\poly(n, \log q, p)$ where $p=\fdchar(\F_q)$. 

Furthermore, there are efficient 
deterministic algorithms that 
\begin{enumerate}
\item[i.] given $a\in \cA$, compute $\pi(a)$, and
\item[ii.] given 
$b\in M(n_1, \F_{q_1})\oplus \dots \oplus M(n_k, \F_{q_k})$, compute $a\in \cA$ 
such that 
$\pi(a)=b$.
\end{enumerate}
\end{theorem}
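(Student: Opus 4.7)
The plan is to compose three algorithmic ingredients, each essentially due to R\'onyai, wired together via the left regular representation $\ell:\cA \to \hom(V(\cA), V(\cA))$ described in the preliminaries.

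First, I would compute $\Rad(\cA)$ by an iterative trace-based procedure. Let $J_0 = \{a \in \cA : \Tr(\ell_{ab}) = 0 \text{ for all } b \in \cA\}$; this is cut out by a homogeneous linear system and contains $\Rad(\cA)$ since $\ell_a$ is nilpotent whenever $a$ is. In characteristic $0$ one has $J_0 = \Rad(\cA)$ and we are done. In characteristic $p$ the ideal $J_0$ is generally strictly larger: the quotient $J_0/\Rad(\cA)$ is a commutative semisimple algebra on which Frobenius acts nontrivially. One refines $J_0$ by iterating a kernel-of-Frobenius / $p$-th root step $O(\log_p \dim\cA)$ times, each step requiring the factorization of an $\F_q$-polynomial of the form $x^{p^k} - y$. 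Cantor--Zassenhaus gives Las Vegas $\poly(n, \log q)$ factorization, while Berlekamp's deterministic procedure gives $\poly(n, \log q, p)$; this is precisely the source of the $\poly(p)$ term in the derandomized bound.

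Second, I would decompose the semisimple quotient $\bar{\cA} = \cA/\Rad(\cA)$. Its center $Z(\bar{\cA})$ is computed as the solution of the linear system $[b_i, x] = 0$ for $b_i$ in a basis of $\bar{\cA}$. Since $Z(\bar{\cA})$ is a commutative semisimple $\F_q$-algebra, it is a product of finite fields $\F_{q_1} \times \cdots \times \F_{q_k}$. Pick a $z \in Z(\bar{\cA})$ whose minimal polynomial has degree $\dim Z(\bar{\cA})$ (a generic $\F_q$-linear combination of a center basis works; enlarge if one element does not generate), factor that polynomial over $\F_q$, and read off the primitive central idempotents $e_1,\dots,e_k$ via the Chinese Remainder Theorem. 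The simple summand $S_i = e_i \bar{\cA}$ then has center $\F_{q_i}$, represented as $\F_q[z]$ cut down by the corresponding factor.

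Third, I would realize each $S_i$ as $M(n_i, \F_{q_i})$. By Wedderburn's little theorem, the only division algebras over a finite field are fields, so abstractly $S_i \cong M(n_i, \F_{q_i})$. To produce the isomorphism, find a primitive idempotent $f \in S_i$ by starting from a non-nilpotent element (Fact~\ref{fact:non-nil}), extracting an idempotent from its powers as in the preliminaries, and refining $f$ by factoring minimal polynomials of elements of $f S_i f$ while this corner algebra properly contains the image of $\F_{q_i}$. Then $L = S_i f$ is a minimal left ideal of $\F_{q_i}$-dimension $n_i$, and left multiplication gives an $\F_{q_i}$-algebra embedding $S_i \hookrightarrow \mathrm{End}_{\F_{q_i}}(L) \cong M(n_i, \F_{q_i})$, which is surjective by dimension count. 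The forward direction of $\pi$ is now explicit; for the inverse, given $b$ in the target, lift to $a \in \cA$ by solving the linear system $\pi(a) = b$, whose solvability is guaranteed by surjectivity.

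The main obstacle is the radical computation in positive characteristic: the trace ideal $J_0$ overshoots, and the Frobenius-based refinement forces both the randomization (to factor over $\F_q$ efficiently) and the $\poly(p)$ overhead in the derandomized version. Every other step — idempotent lifting, center computation, the embedding $S_i \hookrightarrow M(n_i,\F_{q_i})$, and the inversion of $\pi$ — reduces to linear algebra over $\F_q$ together with univariate polynomial factorization.
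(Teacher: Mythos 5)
This theorem is not proved in the paper at all: it is quoted as a known result of R\'onyai \cite{Ron90} (see also \cite{Iva00,EG00}) and used as a black box. Your three-stage architecture --- radical, then splitting the semisimple quotient via the primitive central idempotents of its center, then realizing each simple summand as $\mathrm{End}_{\F_{q_i}}(L)$ for a minimal left ideal $L = S_i f$ --- is indeed the standard route taken in those references, and stages two and three, together with the linear-algebra treatment of $\pi$ and its preimages, are essentially right.

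The genuine problem is in your first stage. Your description of what goes wrong with the trace ideal $J_0$ in characteristic $p$ is incorrect: $J_0/\Rad(\cA)$ is a sum of \emph{entire simple components} on which the regular trace form degenerates, and these need not be commutative --- for $\cA = M(p,\F_p)$ one has $\Tr(\ell_{ab}) = p\cdot\dim$-type factors vanishing identically, so $J_0 = \cA$ while $\Rad(\cA)=0$. A ``kernel-of-Frobenius'' refinement acting on a supposedly commutative quotient therefore does not apply, and as written this step would fail on such inputs. The actual Friedl--R\'onyai fix is a descending chain of ideals $\cA = I_0 \supseteq I_1 \supseteq \cdots$ cut out by conditions on coefficients of characteristic polynomials of $\ell_{ab}$ at levels $p^i$; these conditions are $\F_q$-semilinear and are linearized by taking $p^i$-th roots in $\F_q$ (a deterministic field automorphism), not by factoring polynomials. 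Consequently your complexity attribution is also inverted: the radical is computable \emph{deterministically} in $\poly(n,\log q)$, and the Las Vegas nature of the theorem, as well as the $\poly(p)$ cost of derandomization, come entirely from the univariate factorizations in your stages two and three (splitting the center into fields and finding idempotents/zero divisors in the simple summands) --- exactly as Remark~\ref{rem:extension_degree}(2) of the paper indicates. A secondary, smaller gap: the center $\F_{q_1}\times\cdots\times\F_{q_k}$ need not admit a single $\F_q$-algebra generator when $q$ is small (e.g.\ $k>q$ copies of $\F_q$), so one must split it recursively by factoring minimal polynomials of individual central elements rather than of one primitive element.
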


\begin{theorem}[\cite{FR85,Eber91a,Ebe91,Ron94}]\label{thm:algebra_R_and_C}
Let $\E$ be a number field, and suppose we are given a linear basis of an algebra 
$\cA$ in $M(n, \E)$. Then there exists a deterministic polynomial-time algorithm 
that computes 
\begin{enumerate}
\item a linear basis of the Jacobson radical $\Rad(\cA)$ 
over $\E$, and
\item \begin{description}
\item[Over $\R$:] \begin{enumerate}
\item the number $k$ of simple components of $\cA\otimes_\E\R$, 
\item specifications of extension fields $\E\subseteq \E_1, \dots, \E_k\subseteq 
\R$, such that each $\E_i$ is of degree at most $\binom{\dim_\E \cA}{2}$ over 
$\E$, 
\item bases of simple algebras $B_1\subseteq A\otimes_\E \E_1$, \dots, 
$B_k\subseteq A\otimes_\E \E_k$, such that $B_i\otimes_{\E_i}\R$, $i\in[k]$, are 
all the simple components of $\cA\otimes_\E\R$, and 
\item for each $i\in[k]$, an extension field $\K_i\subseteq \R$ over $\E_i$ with 
extension 
degree at most $\dim_{\E_i}B_i$, the linear basis of a division algebra 
$D_i\subseteq 
B_i\otimes_{\E_i}\K_i$ over $\K_i$, and the linear 
basis of a subalgebra 
$M_i\subseteq B_i\otimes_{\E_i}\K_i$ over 
$\K_i$, such that $M_i\cong M(n_i, \K_i)$, and $B_i\otimes_{\E_i}\K_i\cong 
M_i\otimes_{\K_i}D_i\cong 
M(n_i, D_i)$. $\dim_{\K_i}D_i$ can be $1$, $2$, or $4$, and when 
$\dim_{\K_i}D_i=4$, $D_i$ is non-commutative. 
\end{enumerate}
\item[Over $\C$:] \begin{enumerate}
\item the number $k$ of simple components of $\cA\otimes_\E\C$, 
\item specifications of extension fields $\E\subseteq \E_1, \dots, \E_k$, such 
that each $\E_i$ is of degree at most $\dim_\E\cA$ over $\E$,
\item bases of simple algebras $B_1\subseteq A\otimes_\E \E_1$, \dots, 
$B_k\subseteq A\otimes_\E \E_k$, such that $B_i\otimes_{\E_i}\C$, $i\in[k]$, are 
all the simple components of $\cA\otimes_\E\C$, and
\item for each $i\in[k]$, an extension field $\K_i$ over $\E_i$ with extension 
degree at most $\sqrt{\dim_{\E_i}B_i}$, the linear basis of a subalgebra 
$M_i\subseteq B_i\otimes_{\E_i}\K_i$ over $\K_i$, such that $M_i\cong M(n_i, 
\K_i)$.
\end{enumerate}
\end{description}
\end{enumerate}
%

\end{theorem}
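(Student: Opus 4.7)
The plan is to assemble the theorem from classical algorithms for explicit computational algebra over number fields. I would need three main ingredients: (i) compute the Jacobson radical $\Rad(\cA)$, (ii) decompose the semisimple quotient $\cA/\Rad(\cA)$ into its simple components after tensoring with $\R$ or $\C$, and (iii) for each simple component, produce an explicit Wedderburn isomorphism with a matrix ring $M(n_i, D_i)$ over a division algebra.

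For step (i), since the characteristic is zero, I would use Dickson's description of the radical: $\Rad(\cA)$ equals the kernel of the bilinear trace form $(a,b)\mapsto \Tr(\ell_a \ell_b)$, where $\ell_a$ denotes left multiplication viewed on the regular representation. Computing this kernel is pure linear algebra over $\E$ and stays polynomial in both the number of arithmetic operations and the bit size.

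For step (ii), let $\cB=\cA/\Rad(\cA)$. I would compute its center $Z(\cB)$ by solving a linear system expressing commutativity with a basis of $\cB$. Since $Z(\cB)$ is a commutative semisimple $\E$-algebra, it is a product of number fields, which can be exposed by picking a generating element, computing its minimal polynomial, and factoring that polynomial over $\E$; factorization of polynomials over number fields is in deterministic polynomial time via LLL-based algorithms. The primitive idempotents of $Z(\cB)$ then split $\cB$ into $\E$-simple summands. To obtain the finer splitting that occurs upon tensoring with $\R$ or $\C$, I would further isolate the real (respectively complex) roots of the irreducible factors, producing the fields $\E_i$, the simple components $B_i$, and the count $k$; the bound $\binom{\dim_\E \cA}{2}$ for the degree $[\E_i:\E]$ comes from controlling the degree of the minimal polynomial of a generator of $Z(\cB)$.

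For step (iii), to explicitly identify each simple summand with $M(n_i,D_i)$ I would invoke the algorithms of Friedl--R\'onyai \cite{FR85} and Eberly \cite{Eber91a,Ebe91}: repeatedly produce zero divisors inside the simple algebra to peel off minimal one-sided ideals, yielding both an explicit matrix representation and, as endomorphisms of a minimal left ideal, the division algebra $D_i$. Frobenius's theorem restricts $D_i$ over $\R$ to $\R$, $\C$, or $\qH$, so $\dim_{\K_i} D_i\in\{1,2,4\}$, and over $\C$ only $D_i=\C$ can occur; distinguishing the cases reduces to inspecting the dimension and signature of the trace form on the centralizer of a primitive idempotent. The main obstacle, and the reason the statement spells out explicit extension-degree bounds, is controlling the auxiliary splitting field $\K_i$: such a field can always be taken of degree at most $n_i$ over the center, but producing one constructively forces the LLL-type arithmetic inside Eberly's routine, and careful bookkeeping through all three steps is required to guarantee that every intermediate object has bit size polynomial in the input.
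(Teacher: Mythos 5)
This theorem is not proved in the paper at all --- it is imported wholesale from Friedl--R\'onyai, Eberly, and R\'onyai \cite{FR85,Eber91a,Ebe91,Ron94}, and your three-stage reconstruction (Dickson's trace-form criterion for the radical, factorization of the center via LLL to split off the simple components and their real/complex places, then explicit Wedderburn isomorphisms over a constructed splitting field) is a faithful account of how those cited algorithms actually work. One caution on your step (iii): the phrase ``repeatedly produce zero divisors inside the simple algebra'' must not be read as operating over the base number field $\E_i$, since finding zero divisors in a simple algebra over a number field is at least as hard as factoring integers (the paper makes exactly this point in its open-problems discussion of why the method fails over $\Q$ without field extensions); the correct order, which your final sentence implicitly acknowledges, is to first construct the splitting field $\K_i$ as a maximal subfield (adjoining a root of the characteristic polynomial of a suitable element, which is where the degree bounds $\dim_{\E_i}B_i$ and $\sqrt{\dim_{\E_i}B_i}$ come from) and only then extract idempotents and minimal one-sided ideals over $\K_i$. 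You also silently assume determinism throughout; Eberly's routines are Las Vegas, and the deterministic claim rests specifically on R\'onyai's derandomization in \cite{Ron94}.
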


\begin{remark}\label{rem:extension_degree}
\begin{enumerate}
\item Comparing Theorem~\ref{thm:algebra_finite} and 
Theorem~\ref{thm:algebra_R_and_C}, 
we see that a statement corresponding to Theorem~\ref{thm:algebra_finite} (ii) 
was missing in Theorem~\ref{thm:algebra_R_and_C}. This is because a preimage of 
$b\in M(n_1, 
D_1)\oplus 
\dots\oplus M(n_k, D_k)$ may live in $\cA\otimes_\E \K$ for some field 
$\K$ with an exponential extension degree over $\E$. This suggests that 
representing the 
isometry in the settings of $\R$ and $\C$ as a \emph{single} matrix would be 
inefficient. 
\item The randomized version of Theorem~\ref{thm:algebra_R_and_C} is shown by 
Eberly in \cite{Eber91a,Ebe91}, and is subsequently derandomized by R\'onyai in 
\cite{Ron94}. To completely derandomize Theorem~\ref{thm:algebra_finite} is a 
difficult problem as this relies on algorithms for polynomial factorization over 
finite fields. 
\end{enumerate}
\end{remark}

\subsubsection{Decomposition algorithm II: recognize the $*$-algebra 
structure.}\label{subsubsec:II}
This step works over $\F_q$ 
with $q$ odd, $\R$, and $\C$. It may be possible 
to handle fields of even characteristics, but we leave it for further study. 
The case of finite fields of odd characteristics has been settled by Brooksbank 
and Wilson in \cite{BW12}. Here we provide a unified and somewhat 
simpler
treatment over those fields just mentioned. 

To start with, recall that from previous steps we have computed the algebra 
structure of $\cA\subseteq M(n, \F)$, including a linear basis of $\Rad(\cA)$ and 
an 
epimorphism $\pi: \cA\to S_1\oplus \dots\oplus S_k$ where $S_i$ is a simple 
algebra over the designated field (after some scalar extension when over $\R$ or 
$\C$). We have also computed explicit isomorphisms between $S_i$ and matrix rings 
over division rings. Since $\Rad(\cA)$ is a $*$-ideal, the involution $*$ induces 
an 
involution, 
which we denote again by $*$, on $\pi(\cA)$. Then for each $S_i$, either 
$S_i^*=S_i$, or $S_i^*=S_j$ for some $j\neq i$. The 
goal is that, in the former case, we want to express the involution $*$ explicitly 
in the forms presented in Section~\ref{sec:prel}.

\begin{proposition}\label{prop:star_recog_field_case}
Let $\E/\F$ be a field extension specified by a linear basis over $\F$. Given an 
involution 
$*$ of $M(n, \E)$ as an $\F$-algebra, there exists a 
deterministic polynomial-time algorithm that (1) decides whether $*$ induces a 
quadratic field involution of $\E$ over a subfield $\E'$, and (2) computes 
$A\in\GL(n, \E)$ such that for every $X\in M(n, \E)$, $X^*=A^{-1}X'^tA$, where 
$X'$ is either $X$ (when $*$ fixes $\E$) or $\overline{X}$ (when $*$ induces a 
quadratic field involution).
\end{proposition}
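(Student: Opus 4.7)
The plan is to reduce the problem to an algorithmic form of the Skolem--Noether theorem by first analyzing how $*$ acts on the center.

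Since $*$ is an $\F$-algebra antiautomorphism, it preserves the center $Z(M(n,\E))=\E\cdot I_n$. Thus $*$ restricts to an $\F$-algebra involution $\sigma:\E\to\E$ (the ``anti'' part is vacuous on the commutative center). I compute $\sigma$ by applying $*$ to the scalar matrices $\alpha I_n$ for $\alpha$ running over a fixed $\F$-basis of $\E$ and reading off the results. Either $\sigma=\mathrm{id}_\E$, in which case $*$ fixes $\E$ and no quadratic involution is induced; or $\sigma$ has order $2$, in which case its $\F$-subspace of fixed points is a subfield $\E'$ with $[\E:\E']=2$, and $\sigma$ is precisely the quadratic field involution $\overline{\,\cdot\,}$ of $\E/\E'$. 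This answers the decision part of the proposition, and $\E'$ is obtained as the eigenspace of $\sigma$ for the eigenvalue $1$.

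In both cases I reduce the construction of $A$ to a uniform antiautomorphism problem. If $\sigma=\mathrm{id}_\E$, then $*$ is already $\E$-linear. Otherwise define $\phi:M(n,\E)\to M(n,\E)$ by $\phi(X):=(\overline{X})^*$, where $\overline{X}$ denotes entrywise application of $\sigma$; using $\sigma^2=\mathrm{id}_\E$ one checks that $\phi(\alpha X)=\alpha\phi(X)$, so $\phi$ is an $\E$-linear antiautomorphism, and the target identity $X^*=A^{-1}\overline{X}^tA$ is equivalent to $\phi(Y)=A^{-1}Y^tA$ for all $Y$. Thus in either case the task becomes: given an $\E$-linear antiautomorphism $\phi$ of $M(n,\E)$, compute $A\in\GL(n,\E)$ with $\phi(X)=A^{-1}X^tA$.

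For this, compose with the transpose: $\beta(X):=\phi(X)^t$ is an $\E$-algebra automorphism of $M(n,\E)$. By Skolem--Noether it is inner, $\beta(X)=BXB^{-1}$ for some $B\in\GL(n,\E)$, and setting $A=B^t$ yields the desired identity. To compute $B$ algorithmically, solve the homogeneous $\E$-linear system $\beta(X_i)B=BX_i$ with $X_i$ running over the standard basis of $M(n,\E)$ (the entries of $B$ being the unknowns); since the standard simple $M(n,\E)$-module and its $\beta$-twist are both simple and of the same $\E$-dimension, their space of intertwiners is one-dimensional over $\E$, and any nonzero solution is automatically invertible. The only real bookkeeping lies in representing $M(n,\E)$ as an $\F$-algebra in the input model in order to interpret the given $*$; beyond that the whole procedure is polynomial-size linear algebra over $\E$, and no restriction on the characteristic of $\F$ is needed.
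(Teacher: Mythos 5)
Your proof is correct and follows the same overall architecture as the paper's: restrict $*$ to the center $\E I_n$, read off the induced automorphism $\sigma$ of $\E$ from its action on an $\F$-basis (the paper recovers $\E'$ as the span of the elements $b+b^*$, you as the fixed subspace of $\sigma$ -- the same thing in characteristic $\neq 2$), and then recover $A$ as an invertible solution of a linear intertwining system over the standard basis of $M(n,\E)$. The genuine difference is in how the last step is justified and executed. The paper sets up $YB_i^*=B_i'^tY$, cites Albert's theorem \cite{Alb39} for the existence of an invertible solution, and then calls the module isomorphism algorithm of Theorem~\ref{thm:mi} to actually find one. You instead first pass to the $\E$-linear antiautomorphism $\phi(X)=(\overline{X})^*$, compose with transpose to get an $\E$-algebra automorphism, and invoke Skolem--Noether together with Schur's lemma: the system $\beta(X_i)B=BX_i$ intertwines two simple $M(n,\E)$-modules of the same dimension, so its solution space is one-dimensional over $\E$ and every nonzero solution is automatically invertible. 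This buys a real (if modest) simplification -- a single Gaussian elimination replaces the call to the module isomorphism machinery -- and your one-dimensionality observation in fact applies equally to the paper's own system, showing that the appeal to Theorem~\ref{thm:mi} there is safe but not necessary. Both arguments are valid; yours is the more self-contained of the two.
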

\begin{proof}
For (1), we apply $*$ to every basis element $b$ in the linear basis of $\E$ over 
$\F$. If $*$ changes none of them, then $\E$ is also invariant under $*$. If $*$ 
changes some of them, the sums $b+b^*$ 
linearly span a subfield $\E'$ such 
that $\E/\E'$ is a quadratic field extension, and $*$ induces the quadratic field 
involution. For (2), for any $X\in M(n, \E)$ let $X'$ be as defined in the 
statement. We take a linear basis $\{B_1, \dots, B_{n^2}\}$ of $M(n, 
\E)$ (the standard basis will do), and set up $YB_i^*=B_i'^t Y$, for $i\in[n^2]$, 
and $Y$ is an $n\times n$ variable matrix. By \cite[Chap. X.4, Theorem 11]{Alb39}, 
there must exist some $A\in\GL(n, \E)$ as a valid solution to $Y$ in the 
above 
equations. 
From the algorithmic 
viewpoint, this is an instance of the module isomorphism problem, and we can apply 
the procedure in Theorem~\ref{thm:mi} to conclude. 
\end{proof}

Note that Proposition~\ref{prop:star_recog_field_case} covers all simple types 
over $\F_q$ with $q$ odd and $\C$, as well as those simple types over $\R$ except 
the two 
quaternion types. We now handle the two quaternion types in the real field 
setting. 
\begin{proposition}
Let $\qH$ be given by a linear basis over $\R$. Given an involution $*$ of $M(n, 
\qH)$ as an $\R$-algebra, there exists a 
deterministic polynomial-time algorithm that computes $A\in\GL(n, \qH)$ such that 
for every $X\in M(n, \qH)$, $X^*=A^{-1}\overline{X}^tA$.
\end{proposition}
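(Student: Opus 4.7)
The plan mirrors the proof of the preceding proposition, with Skolem--Noether playing the role of Albert's classification. Let $\tau:M(n,\qH)\to M(n,\qH)$ denote the standard involution $\tau(X)=\overline{X}^t$; a direct entrywise check using $\overline{\alpha\beta}=\overline{\beta}\,\overline{\alpha}$ in $\qH$ shows that $\tau$ is an $\R$-algebra anti-automorphism of order two. Therefore $* \circ \tau$ is an $\R$-algebra automorphism of $M(n,\qH)$. Since the center of $\qH$ equals $\R$, the center of $M(n,\qH)$ is $\R\cdot I_n$, i.e.\ $M(n,\qH)$ is a central simple $\R$-algebra, and Skolem--Noether forces $* \circ \tau$ to be inner: there exists $A\in\GL(n,\qH)$ with $(* \circ \tau)(Y)=A^{-1}YA$ for all $Y$. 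Substituting $Y=\tau(X)=\overline{X}^t$ and using $\tau^2=\mathrm{id}$ yields $X^*=A^{-1}\overline{X}^tA$ for every $X\in M(n,\qH)$.

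To compute such an $A$ algorithmically, fix a linear basis $\{B_1,\dots,B_{4n^2}\}$ of $M(n,\qH)$ as an $\R$-vector space (for instance, the matrix units multiplied by each of $1,i,j,k$), and set up the $\R$-linear system $Y B_i^*=\overline{B_i}^t Y$ for $i\in[4n^2]$, in an $n\times n$ matrix $Y$ of quaternion unknowns (equivalently $4n^2$ real unknowns). Exactly as in the proof of the preceding proposition, this is an instance of the module isomorphism problem (over $\R$, after restriction of scalars from $\qH$), so Theorem~\ref{thm:mi} finds a nonzero solution in deterministic polynomial time.

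Finally, any nonzero solution $Y$ is automatically invertible. Indeed, let $A$ be the invertible solution guaranteed by Skolem--Noether. Rewriting the equation for $A$ as $B^*=A^{-1}\overline{B}^tA$ and substituting into the equation for $Y$ gives $(YA^{-1})\overline{B}^t=\overline{B}^t(YA^{-1})$ for every $B\in M(n,\qH)$. Hence $YA^{-1}$ commutes with all of $M(n,\qH)$, so it lies in the center $\R\cdot I_n$, and $Y$ is therefore a nonzero real scalar multiple of $A$, in particular invertible. The only care needed is in checking that $\tau$ really is an anti-automorphism of $M(n,\qH)$ as an $\R$-algebra (not as a $\qH$-bimodule map); no substantive obstacle arises, and the invocation of Skolem--Noether together with the existing module isomorphism machinery does all the algorithmic work.
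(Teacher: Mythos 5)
Your argument is correct, but it takes a route that differs from the paper's in both halves. For existence, you invoke Skolem--Noether applied to the automorphism $*\circ\tau$ of the central simple $\R$-algebra $M(n,\qH)$, whereas the paper gets existence "by an analogous argument" to the field case, i.e.\ from Albert's classification of involutions; your version is cleaner and self-contained. For the computation, the paper embeds $M(n,\qH)$ into $M(4n,\R)$ via the regular representation, adds the commutation constraints $YC_j=C_jY$ with a basis of the centralizer of $f(\qH)$ to force the unknown back into $f(M(n,\qH))$ (via the double centralizer theorem), and then runs the module isomorphism algorithm of Theorem~\ref{thm:mi} to extract an \emph{invertible} solution. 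You instead keep the unknown as a genuine quaternion matrix (equivalently $4n^2$ real coordinates) and prove that the solution space of the constrained system is exactly $\R A$, so any nonzero solution is automatically invertible; this makes plain Gaussian elimination suffice and renders Theorem~\ref{thm:mi} unnecessary. One caveat: your parenthetical "instance of the module isomorphism problem over $\R$ after restriction of scalars" is not quite right if taken literally --- the unconstrained real intertwiner space is $4$-dimensional (a coset of the centralizer $\cong\qH^{\op}$ times $f(A)$), and Theorem~\ref{thm:mi} applied to the restricted-scalars tuples could return an invertible real matrix that is not of the form $f(Y)$ for any $Y\in M(n,\qH)$; this is precisely why the paper adds the $YC_j=C_jY$ constraints. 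Since you impose the quaternion-matrix shape on $Y$ from the outset and supply the invertibility argument, this does not affect the correctness of your algorithm, only the attribution of which tool does the work.
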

\begin{proof}
Let $f:\qH\to M(4, \R)$ be the regular representation of $\qH$ on $\R^4$. Let 
$\{C_1', C_2', 
C_3', C_4'\}$ be a linear basis of the centralizing algebra of $f(\qH)$ in $M(4, 
\R)$, which is isomorphic to $\qH^{op}$. Now 
think of matrices in $M(4n, \R)$ as $n\times n$ block matrices with each block of 
size $4\times 4$. For $i\in[4]$, let $C_i\in M(4n, \R)$ be the diagonal block 
matrix, with all diagonal blocks being $C_i'$. $f$ naturally embeds $M(n, \qH)$ to 
$M(4n, \R)$. By the double centralizer theorem, the centralizing algebra of 
$C_i$'s is $f(M(n, \qH))$. 

The above reasoning suggests the following construction. Take a basis $\{B_1, 
\dots, B_{n^2}\}$ of $M(n, \qH)$, and let $B_i'=\overline{B_i}^t$. Set up 
$Yf(B_i)=f(B_i')Y$, $i\in[n^2]$, $YC_j=C_jY$, $j\in[4]$, where $Y$ is a $4n\times 
4n$ variable matrix. By $YC_j=C_jY$, any valid solution to $Y$ lies in $f(M(n, 
\qH))$. By an analogous argument as in the proof of 
Proposition~\ref{prop:star_recog_field_case}, there must exist an invertible 
$A\in\GL(4n, \R)$ as a valid solution to $Y$, and can be solved as 
as an instance of the module isomorphism problem by Theorem~\ref{thm:mi}.
Finally, after getting such an $A\in \GL(4n, \R)$, it is 
straightforward 
to compute the preimage of $A$ in $M(n, \qH)$, concluding the proof. 
\end{proof}


\subsubsection{Decomposition algorithm III: reduce to the semisimple 
case.}\label{subsubsec:III}

This 
step works over fields of 
characteristic $\neq 2$, and is the main bottleneck for handling fields of 
characteristic $2$. 

\begin{proposition}\label{prop:III}
Let $\cA$ be a $*$-algebra over $\F$, $\fdchar(\F)\neq 2$. Let $E\in \cA$ be an 
invertible
$*$-symmetric element, and suppose there exists $Y\in \cA/\Rad(\cA)$, such that 
$Y^*Y+\Rad(\cA)=E+\Rad(\cA)$. Then there exists $X\in \cA$ such that $X^*X=E$, and 
there 
exists a deterministic polynomial-time algorithm that outputs such an $X$.
\end{proposition}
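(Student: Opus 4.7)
\medskip

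\noindent\textbf{Proof proposal.} The plan is a Hensel/Newton-style lifting inside the nilpotent ideal $\Rad(\cA)$, exploiting $\fdchar(\F)\neq 2$ in order to ``divide by $2$'' at each step. Lift $Y$ to an arbitrary preimage $X_0\in\cA$ and set $T_0:=X_0^*X_0$. By construction $T_0$ is $*$-symmetric and $T_0\equiv E\pmod{\Rad(\cA)}$, so $T_0-E$ is a $*$-symmetric element of $\Rad(\cA)$. The goal is to successively correct $X_0$ by a factor $1+U$ with $U\in\Rad(\cA)$, reducing the residual $T-E$ into deeper powers of $\Rad(\cA)$ until it vanishes.

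Inductively, suppose $X_k\in\cA$ has been constructed with $T_k:=X_k^*X_k$ satisfying $T_k-E\in\Rad(\cA)^{2^k}$. Put $X_{k+1}:=X_k(1+U_k)$ for a suitable $U_k\in\Rad(\cA)^{2^k}$ to be chosen; then
\[
T_{k+1} \;=\; (1+U_k)^*\,T_k\,(1+U_k) \;=\; T_k + (T_k U_k + U_k^* T_k) + U_k^* T_k U_k.
\]
To kill the linear term we need $T_k U_k + U_k^* T_k = E - T_k$. Since $T_k\equiv E\pmod{\Rad(\cA)}$ and $E$ is invertible, $T_k$ itself is invertible (write $T_k=E(1+r)$ with $r$ nilpotent), and since $T_k^*=T_k$ and $E^*=E$ the right-hand side $E-T_k$ is $*$-symmetric. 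Accordingly, set
\[
U_k \;:=\; \tfrac{1}{2}\,T_k^{-1}(E-T_k),
\]
which is where $\fdchar(\F)\neq 2$ enters. Then $T_k U_k = (E-T_k)/2$, and using $T_k^*=T_k$ we get $U_k^* T_k = (T_k U_k)^* = ((E-T_k)/2)^* = (E-T_k)/2$, so the linear equation is satisfied. Consequently
\[
T_{k+1}-E \;=\; U_k^* T_k U_k \;\in\; \Rad(\cA)^{2^{k+1}},
\]
since $U_k\in\cA\cdot\Rad(\cA)^{2^k}\subseteq\Rad(\cA)^{2^k}$, completing the induction.

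Because $\Rad(\cA)$ is nilpotent of index at most $\dim_\F\cA$, after $K=\lceil\log_2\dim_\F\cA\rceil+1$ iterations the ideal $\Rad(\cA)^{2^K}$ is zero, hence $T_K=E$ exactly and $X:=X_K$ satisfies $X^*X=E$. Each iteration requires only a constant number of products and involutions in $\cA$ together with one inversion of $T_k$; the latter may be done either by standard linear algebra, or by truncating the geometric series $(1+r)^{-1}=\sum_{j\ge 0}(-r)^j$ (a finite sum since $r$ is nilpotent). Thus the whole procedure runs in deterministic polynomial time in the dimension of $\cA$.

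The only place where anything nontrivial happens is the halving used to define $U_k$: without a $1/2$ the Sylvester-type equation $T_k U_k + U_k^* T_k = E-T_k$ for $U_k\in\Rad(\cA)^{2^k}$ need not be solvable, and the scheme collapses. This is precisely the reason the proposition is stated over fields of characteristic not $2$, and it is the genuine obstruction for extending the algorithm to $\fdchar(\F)=2$ advertised in Step~(4.c) of the outline. Once this division is allowed, the rest is bookkeeping of radical powers plus the identity $(T_k U_k)^*=U_k^* T_k$ coming from $T_k^*=T_k$.
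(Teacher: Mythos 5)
Your proposal is correct and follows essentially the same route as the paper: a Newton/Hensel lifting that squares the power of the radical at each step, with the correction term obtained by halving the $*$-symmetric residual (your multiplicative update $X_{k+1}=X_k(1+U_k)$ with $U_k=\tfrac12 T_k^{-1}(E-T_k)$ is literally the paper's additive update $Z=Y^{-*}\cdot\tfrac12(E-Y^*Y)$ in disguise, since $X_kT_k^{-1}=X_k^{-*}$). The paper phrases the iteration via the quotients $\cA/\Rad(\cA)^{2^i}$ rather than tracking the residual in $\cA$ itself, but the content, the use of $\fdchar(\F)\neq 2$, and the $O(\log\dim\cA)$ iteration count are identical.
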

\begin{proof}
To recover $X\in \cA$ such that $X^*X=E$, consider the 
following situation: suppose we have a $*$-ideal $J$ of $\cA$ with $J^2=0$, and an 
invertible $E\in \cA$ with $E^*=E$. Given $Y$ such that $Y^*Y+J=E+J$, the goal is 
to find $Z\in J$ such that $(Y+Z)^*(Y+Z)=E$. Expanding to $Y^*Y+Y^*Z+YZ^*+Z^*Z=E$, 
by $Z^*Z=0$ we need to satisfy $Y^*Z+Z^*Y=E-Y^*Y$. Note that $E-Y^*Y$ is 
$*$-symmetric. So setting $U=\frac{1}{2}(E-Y^*Y)$, $Z=Y^{-*}U$ is the desired, and 
$X=Y+Z$ satisfies $X^*X=E$. 

We now apply the above procedure to the setting of the proposition. For $i\in \N$, 
let $J_i=\Rad(\cA)^{2^i}$, so that $J_{i+1}=J_{i}^2$. Since the Jacobson radical 
$\Rad(\cA)$ is nilpotent, we 
know that for some $k\leq \lceil \log n\rceil$, $J_k=0$. Given $Y_i\in \cA/J_i$ 
satisfying $Y_i^*Y_i+J_i=E+J_i$, consider $\cA/J_{i+1}$, in which $J_i/J_{i+1}$ 
satisfies the assumption on $J$ in the last paragraph. We can then utilize the 
procedure there to get $Y_{i+1}\in \cA/J_{i+1}$, such that 
$Y_{i+1}^*Y_{i+1}+J_{i+1}=E+J_{i+1}$. Let the given $Y$ satisfying 
$Y^*Y+\Rad(\cA)=E+\Rad(\cA)$ be $Y_0$, and perform the above procedure iteratively 
for 
at most $k\leq \lceil \log n\rceil$ times. We then obtain the desired $X\in \cA$ 
such that  
$X^*X=E$.
\end{proof}

\subsubsection{Decomposition algorithm IV: reduce to the $*$-simple and simple 
case.}\label{subsubsec:IV} This step works for any 
field. Suppose we have a semi-simple algebra $\cA$ decomposed into a 
direct sum of simple summands $S_1\oplus \dots\oplus S_k$, and let $*$ be an 
involution on $\cA$. Without loss of generality, we can assume that there exists 
$j\leq \lfloor k/2\rfloor$, such that $*$ exchanges $S_{2i-1}$ and $S_{2i}$ for 
$i\in[j]$, and stabilizes $S_i$ for $i>2j$. 
Let $E\in \cA$ be an invertible $*$-symmetric 
element, and let $E_i$ be the projection of $E$ to $S_i$. Recall that our goal is 
to find $X\in \cA$ such that $X^*X=E$, if such an $X$ exists. 

\begin{proposition}\label{prop:reduce_to_simple}
Let $\cA$, $S_i$, $E$, and $E_i$ be as above. There exists $X\in \cA$ such that 
$X^*X=E$, if and only if for every $i>2j$, there exists $X_i\in S_i$ such that 
$X_i^*X_i=E_i$.
\end{proposition}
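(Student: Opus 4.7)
The plan is to prove both directions by writing $X = \sum_{l} X_l$ with $X_l \in S_l$ and expanding $X^*X$, exploiting two facts about the direct-sum decomposition $\cA = S_1 \oplus \cdots \oplus S_k$: first, that $S_l S_m = 0$ whenever $l \neq m$, since these are two-sided ideals; and second, that $X_l^*$ lies in $S_l$ if $l > 2j$ and in $S_{l'}$ if $\{l,l'\}$ is an exchange pair, since $*$ is an anti-automorphism. Combined, these collapse the double sum $X^*X = \sum_{l,m} X_l^* X_m$ into a short block-diagonal expression, in which the only surviving contribution to $S_i$ for $i > 2j$ is $X_i^* X_i$, while the contributions to an exchange pair $\{S_{2i-1}, S_{2i}\}$ come exclusively from the cross terms $X_{2i}^* X_{2i-1}$ and $X_{2i-1}^* X_{2i}$.

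For the ``only if'' direction, I would start from a given $X \in \cA$ with $X^*X = E$, project onto each $S_i$ with $i > 2j$ using the collapse described above, and conclude $X_i^* X_i = E_i$; the exchange pairs impose no constraint, which is precisely why the proposition asks nothing about them.

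For the ``if'' direction, I would assemble a global $X$ from the given $X_i \in S_i$ for $i > 2j$ together with an explicit choice in each exchange pair. A convenient choice is $X_{2i-1} := E_{2i-1}$ and $X_{2i} := 1_{S_{2i}}$, the identity element of $S_{2i}$. Since $*$ is an anti-isomorphism from $S_{2i}$ onto $S_{2i-1}$ it sends $1_{S_{2i}}$ to $1_{S_{2i-1}}$, and the assumption $E^* = E$ combined with the exchange action of $*$ forces $E_{2i-1}^* = E_{2i}$ (with $E_{2i-1}$ and $E_{2i}$ both invertible because $E$ is). A short verification then gives $X_{2i}^* X_{2i-1} = 1_{S_{2i-1}} E_{2i-1} = E_{2i-1}$ and $X_{2i-1}^* X_{2i} = E_{2i-1}^* \cdot 1_{S_{2i}} = E_{2i}$, so the constructed $X$ satisfies $X^*X = E$. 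I do not anticipate any real technical obstacle; the crux is simply the observation that exchange-type summands always admit a trivial solution, which localizes the obstruction entirely to the stabilized summands. No characteristic assumption is needed for this step.
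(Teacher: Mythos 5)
Your proposal is correct and follows essentially the same route as the paper: the same collapse of $X^*X$ along the direct-sum decomposition, the same explicit choice $(E_{2i-1}, 1_{S_{2i}})$ on each exchange pair (using $E_{2i-1}^*=E_{2i}$ from $E^*=E$), and the same projection argument for the converse. Your write-up just makes explicit the bookkeeping ($S_lS_m=0$ for $l\neq m$ and where $X_l^*$ lands) that the paper leaves implicit.
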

\begin{proof}
For the if direction, we claim that $X=E_1\oplus I\oplus E_3\oplus I\oplus \dots 
\oplus E_{2j-1}\oplus I\oplus X_{2j+1}\oplus \dots\oplus X_k$ is a solution, where 
$I$ denotes the identity element in the respective summand. To see this, let us 
suppose $*$ exchanges $S_1$ and $S_2$. Then by $(E_1, E_2)^*=(E_1, E_2)$, we have 
$(E_1, I)^*=(I, E_2)$. So $X^*X=(I\oplus E_2\oplus I\oplus E_4\oplus \dots 
\oplus I\oplus E_{2j}\oplus X_{2j+1}^*\oplus \dots\oplus X_k^*)(E_1\oplus I\oplus 
E_3\oplus I\oplus \dots 
\oplus E_{2j-1}\oplus I\oplus X_{2j+1}\oplus \dots\oplus X_k)=E$.

For the only if direction, suppose $X=X_1\oplus X_2\oplus \dots \oplus 
X_{2j-1}\oplus X_{2j}\oplus X_{2j+1}\oplus\dots\oplus X_k$ satisfies $X^*X=E$. 
Then it is straightforward to verify that for $i>2j$, $X_i^*X_i=E_i$.
\end{proof}

\subsubsection{Decomposition algorithm V: the simple case by reducing to the 
isometry 
problem for a single form.}\label{subsubsec:V} 
This step 
works over any field. From previous steps, we now have (1) $M(n, D)$ where $D$ is 
a 
field or a division algebra, (2) an involution $*$ on $M(n, D)$, which induces an 
involution $\overline{\cdot}:D\to D$ (possibly identity), such that 
$X^*=A^{-1}\overline{X}^t A$ and $\overline{A}^t=\epsilon A$ for some 
$\epsilon\in\{1, -1\}$, and (3) an 
invertible $*$-symmetric element $E$. 
%

Here is the other conceptually crucial observation.  
\begin{proposition}
Let notation be as above. Let $F=AE$. Then $F$ is a form of the same type as $A$, 
and there exists $X$ such that $X^*X=E$, if and only if $A$ and $F$ are isometric. 
\end{proposition}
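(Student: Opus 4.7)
The proof should be a direct unfolding of the definition of $*$ together with the given symmetry of $A$. I would split it into two very short verifications.

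First, to check that $F = AE$ is of the same type as $A$, the plan is to rewrite the condition $E^* = E$ in an unadorned form. Since $X^* = A^{-1}\overline{X}^t A$, the equation $E^* = E$ becomes $A^{-1}\overline{E}^t A = E$, equivalently $\overline{E}^t A = AE = F$. Now compute
\[
\overline{F}^t = \overline{AE}^t = \overline{E}^t\,\overline{A}^t = (\overline{E}^t A)(A^{-1}\overline{A}^t) = F \cdot (A^{-1} \cdot \epsilon A) = \epsilon F,
\]
using $\overline{A}^t = \epsilon A$. (On the division algebra $D$ we are using that $\overline{\,\cdot\,}$ is an anti-involution, so it reverses products and commutes past transpose in the expected way.) Thus $F$ is an $\epsilon$-Hermitian form of the same type as $A$.

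Second, for the equivalence, the plan is simply to unfold $X^*X = E$. We have
\[
X^*X = A^{-1}\overline{X}^t A X,
\]
so $X^*X = E$ is equivalent to $\overline{X}^t A X = A E = F$. But $\overline{X}^t A X = F$ with $X$ invertible is by definition the statement that $X$ is an isometry from $A$ to $F$ (as single forms of the given type). Hence $X \mapsto X$ gives a bijection between solutions of the decomposition equation $X^*X = E$ in $M(n,D)$ and isometries from $A$ to $F$. This proves both directions at once and establishes that the existence of the decomposition is equivalent to $A \sim F$.

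I do not anticipate a real obstacle here: once one writes down the explicit form of $*$, both claims are one-line computations. The only minor care needed is in handling the interaction between $\overline{\,\cdot\,}$ and transpose over a non-commutative $D$ (as in the quaternion case), where we must use that $\overline{\,\cdot\,}$ is an anti-automorphism of $D$, so that $\overline{AE}^t = \overline{E}^t\,\overline{A}^t$ continues to hold. This is standard and built into the setup of $\epsilon$-Hermitian forms over $(D,\overline{\,\cdot\,})$ recalled earlier in the paper.
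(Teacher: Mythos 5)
Your proof is correct and follows essentially the same computation as the paper's: unfold $E^*=E$ to get $\overline{E}^tA=AE=F$, deduce $\overline{F}^t=\epsilon F$ using $\overline{A}^t=\epsilon A$, and observe that $X^*X=E$ unfolds directly to the isometry equation between $A$ and $F$. The only cosmetic difference is the convention for a single-form isometry: the paper writes it as $Y^tA\overline{Y}=F$ and sets $X=\overline{Y}$, whereas you use $\overline{X}^tAX=F$ directly; the two agree under the substitution $X=\overline{Y}$.
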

\begin{proof}
To see that $F$ is a form of the same type as $A$, we have 
$E=E^*=A^{-1}\overline{E}^tA$ (by the $*$-symmetry of $E$) and 
$\overline{A}^t=\epsilon A$. It follows that $AE=\overline{E}^tA$, from which we 
get
$\overline{AE}^t=\overline{E}^t\overline{A}^t=\epsilon\overline{E}^tA=\epsilon AE$.

For the second statement, we consider the if direction first. If for some 
$Y\in\GL(n, D)$, $Y^tA\overline{Y}=F=AE$, then $A^{-1}Y^tA\overline{Y}=E$. Setting 
$X=\overline{Y}$, we have $A^{-1}\overline{X}^tA X=E$. Noting that 
$A^{-1}\overline{X}^tA=X^*$, we obtain the desired $X^*X=E$. The only if direction 
can be seen easily by inverting the above reasoning. 
\end{proof}

\subsubsection{Decomposition algorithm VI: solve the isometry problem for a 
single 
form.}\label{subsubsec:VI}
To solve the 
isometry problem for a single form over a division ring, we will in fact compute 
the canonical form for such a form. The isometry problem can then be solved by 
comparing the canonical forms. Over $\F_q$ with $q$ odd, a concrete isometry can 
be obtained by using 
the transformations to the canonical forms. To recover a concrete isometry 
(represented in some form) over 
$\R$ or $\C$ requires more technical machinery and we leave it to 
Section~\ref{subsec:alt}. The existence of canonical forms is 
well-known for $\F_q$ with $q$ odd (see e.g. \cite[Chap. 3.4]{Wilson_book}), for 
$\R$ 
(see e.g. \cite{Lew77}), and for $\C$. 

Computing the canonical form 
involves two steps. Let $E\in M(n, D)$ such that $\overline{E}^t=\epsilon E$, 
where $\overline{\cdot}:D\to D$ is an involution, and $\epsilon\in \{1, -1\}$. 

The 
first step is to compute an orthogonal basis for $E$, that is a linear basis of 
$D^n$ $\{e_1, \dots, e_n\}$, such that for every $i\in[n]$, 
$e_i^tE\overline{e_j}\neq 0$ for exactly one $e_j$. This 
is known as the Gram-Schmidt procedure, and an efficient algorithm in this general 
setting has been obtained by Wilson. 
\begin{theorem}[\cite{Wil13Gram}]
Let $E$ be as above. There exists a deterministic polynomial-time algorithm that 
computes an orthogonal basis for $E$.
\end{theorem}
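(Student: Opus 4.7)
The plan is to mimic the classical Gram--Schmidt procedure, peeling off at each step a non-degenerate summand of dimension one or two and recursing on its orthogonal complement. More precisely, I would maintain a current Hermitian form $E'$ on a subspace $V' \subseteq D^n$, locate a small non-degenerate subspace $W \subseteq V'$, and replace $V'$ by the orthogonal complement $W^\perp = \{v \in V' : v^t E' \overline{w} = 0 \text{ for all } w \in W\}$, computed as the solution space of a single linear system over $D$. Since each step strictly reduces $\dim_D V'$, at most $n$ rounds suffice, and pasting the chosen $W$'s together yields the desired orthogonal basis in which each $e_i$ is paired with exactly one $e_j$ (itself, in the non-isotropic summands; its hyperbolic partner, otherwise).

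The crucial subroutine is locating $W$. I would first scan a basis of $V'$ for some $e_i$ with $e_i^t E' \overline{e_i} \neq 0$; if such an $e_i$ exists, take $W = \langle e_i \rangle$. Otherwise every basis vector is isotropic but some off-diagonal entry $E'_{ij}$ is non-zero. A direct computation using $\overline{E'}^t = \epsilon E'$ then gives
\[
(e_i + \alpha e_j)^t E' \overline{(e_i + \alpha e_j)} = \delta + \epsilon \overline{\delta}, \qquad \delta = \alpha \cdot (e_j^t E' \overline{e_i}),
\]
so the question reduces to whether the trace map $\delta \mapsto \delta + \epsilon \overline{\delta}$ on $D$ is identically zero. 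This vanishes identically only in the pure skew-symmetric case ($\epsilon = -1$ and $\overline{\cdot}$ trivial on $D$), in which I instead set $W = \langle e_i, e_j \rangle$ as a hyperbolic plane. In all other cases, the hypothesis $\fdchar(\F) \neq 2$ together with the fact that the fixed and anti-fixed subspaces of a non-trivial involution $\overline{\cdot}$ are proper $\F$-subspaces of $D$ guarantees that some explicit $\alpha$ drawn from a fixed $\F$-basis of $D$ yields a non-isotropic vector; trying each basis element in turn locates a suitable $\alpha$ in polynomial time.

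The main obstacle is bookkeeping around the non-commutativity of $D$: scalings and projections must respect the left-versus-right $D$-module structures, and the linear system cutting out $W^\perp$ must be set up with $\overline{\cdot}$ applied on the correct side so that $W^\perp$ is genuinely $D$-linear rather than merely $\F$-linear. However, since in the settings of interest $D$ has dimension at most $4$ over its centre, each arithmetic step over $D$ reduces to polynomial-time arithmetic over a field, and the entire procedure runs in deterministic polynomial time.
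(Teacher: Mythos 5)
Your proposal is correct and is the standard generalized Gram--Schmidt argument: split off a non-isotropic vector when the ``trace'' map $\delta\mapsto\delta+\epsilon\overline{\delta}$ is non-trivial (which your case analysis rightly shows fails only for alternating forms over a field in characteristic $\neq 2$), otherwise split off a hyperbolic plane, and recurse on the orthogonal complement. The paper itself does not prove this statement but imports it from Wilson \cite{Wil13Gram}, whose algorithm proceeds in essentially this way, so there is no substantive divergence to report.
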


After the first step, by transforming to the orthogonal basis, $E$ can be assumed 
to be a diagonal block matrix, with each block is of size $1$ or $2$. The second 
step is to simplify these diagonal blocks as much as possible. We now need to 
handle each field separately. Recall that $E$ is non-degenerate.

\paragraph{Block diagonal forms over $\F_q$.} We distinguish among the three 
simple 
types over $\F_q$. 
\begin{description}
\item[Orthogonal type] In this case, each block is of size $1$, e.g. $E$ is a 
diagonal matrix. Fix a non-square $\omega$ in $\F_q$, which can be computed 
efficiently, by either using randomness, or in a deterministic way if we assume 
the Generalized Riemann Hypothesis or the characteristic of $\F_q$ is small. We 
can first simplify $E$ as $\diag(1, \dots, 1, \omega, \dots, 
\omega)$, by resorting to square root computations over finite fields. This can 
be done in randomized 
polynomial time by e.g. the Tonelli-Shanks algorithm. A deterministic 
polynomial-time algorithm exists, if we assume the Generalized Riemann hypothesis, 
or the characteristic of the finite field is small. Then, if the number of 
$\omega$'s is larger than $1$, then write $\omega$ as a sum of two squares 
$\alpha^2+\beta^2$, which is always possible over a finite field. Algorithmically, 
this can be done by solving the equation 
$x^2+y^2=\omega$ in deterministic polynomial time by an algorithm of van de 
Woestijne \cite[Theorem A.3]{Woe05}. Given such $\alpha, \beta$,
$\diag(\omega, \omega)$ can be transformed to $\diag(1, 1)$ by $\begin{bmatrix}
\alpha & \beta \\
\beta & -\alpha 
\end{bmatrix}
\begin{bmatrix}
1 & 0 \\
0 & 1 
\end{bmatrix}
\begin{bmatrix}
\alpha & \beta \\
\beta & -\alpha 
\end{bmatrix}^t
=
\begin{bmatrix}
\omega & 0 \\
0 & \omega
\end{bmatrix}.
$ 
Therefore the possible standard forms are $\diag(1, \dots, 
1)$ or $\diag(1, \dots, 1, \omega)$.
\item[Symplectic type] In this case, each block is of size $2$, so we examine one 
block $\begin{bmatrix}
0 & \alpha \\
-\alpha & 0
\end{bmatrix}$. Now by expressing $\alpha$ as a sum of squares, similar trick 
applies to bring it to $\begin{bmatrix}
0 & 1 \\
-1 & 0
\end{bmatrix}$. 
Indeed, the standard form for a 
non-degenerate alternating bilinear form of size $2k\times 2k$ is the block 
diagonal 
matrix, with each block on the diagonal being $\begin{bmatrix}
0 & 1 \\
-1 & 0
\end{bmatrix}$ (see e.g. \cite[Sec. 3.4.4]{Wilson_book}).
\item[Hermitian type] In this case, each block is of size $1$. Let the associated 
field extension be $\F_q/\F_{q'}$ where $q=q'^2$, and suppose 
$\F_q=\F_{q'}(\tau)$, where $\tau$ is a square root of a non-square $\omega\in 
\F_{q'}$. Then for $\alpha=a+b\tau$, 
$\overline{\alpha}=a-b\tau$. For a diagonal 
entry $\alpha\in \F_q$, $\alpha=\overline{\alpha}$, we need to compute $\beta\in 
\F_q$ such that $\beta\overline{\beta}=\alpha$, which always exists. Setting 
$\beta=x+y\tau$, we need to solve the equation 
$\beta\overline{\beta}=x^2-y^2\tau^2=\alpha$. Again 
this can be solved in deterministic polynomial time by \cite[Theorem A.3]{Woe05}. 
Indeed, there always exists an orthonormal basis for a 
non-degenerate Hermitian form, so the standard form is just the identity matrix 
(see e.g. \cite[Sec. 3.4.5]{Wilson_book}).
\end{description}

\paragraph{Block diagonal forms over $\R$.} By \cite{Lew77}, for the symplectic, 
complex 
orthogonal, 
complex symplectic, quaternion orthogonal types, we can always bring a given form 
to the identity matrix or the standard non-degenerate skew-symmetric matrix. For 
other types, we can bring a given form to $\diag(1, 
\dots, 1, 
-1, \dots, -1)$, where the number of $1$'s and the number of $-1$'s is called the 
signature of the canonical form. 

\paragraph{Block diagonal forms over $\C$.} By \cite{Lew77}, for the two types 
here we can always 
bring a given form to 
the identity matrix  or the standard non-degenerate skew-symmetric matrix. 
\begin{remark}
Starting from two $\epsilon$-symmetric matrix tuples $\vecB$ and $\vecC$, suppose 
they are twisted equivalent. We then perform the operations as above, so that for 
each simple component of the semisimple quotient of the $*$-algebra 
$\cA=\Adj(\vecC)$, we have a 
pair of forms. The question of whether $\vecB$ and $\vecC$ are isometric then 
reduces to test whether these pairs of forms are isometric, or in other words, to 
compare
whether they have the same standard form. In particular, over $\C$, since for each 
simple type there exists only one standard form, the twisted equivalence of 
$\vecB$ and $\vecC$ already implies that they are isometric. 
\end{remark}

\subsection{An alternative algorithm for the isometry problem}\label{subsec:alt}

In this section we work over $\R$ and $\C$. We now present an algorithm that, in 
the $\R$ and $\C$ 
settings, can output an 
explicit isometry, which is represented by a product of several matrices, where 
each matrix 
is over an extension field of the number field $\E$ with polynomial extension 
degree, and the 
entries are of polynomial bit sizes. 

We still follow the main algorithm, steps I to III, as described in 
Sections~\ref{subsec:I} 
to~\ref{subsec:III}, to reduce to solving the decomposition problem, 
Problem~\ref{prop:decomposition}, for a $*$-algebra $\cA$ in $M(n, \E)$. Then 
recall that, by the decomposition algorithm steps I to III, as described in 
Sections~\ref{subsubsec:I} to~\ref{subsubsec:III}, we 
can reduce to the semisimple setting and then further to the simple setting. In 
this simple setting, however, we need to work with different extension fields for 
different simple summands, and one cannot mix all those extension fields because 
that would result in an extension field with exponential extension degree (see 
Remark~\ref{rem:extension_degree}). Even within each summand, since we need to 
take square roots to bring the forms into canonical forms, these square roots 
cannot mix arbitrarily because of the same problem. 

\subsubsection{Alternative decomposition algorithm III}
Compare with with Section~\ref{subsubsec:III}.

To tackle these problems, we first devise another reduction to the semisimple 
case, 
based on the existence of $*$-invariant Wedderburn-Malcev complements over fields 
of characteristic $\neq 2$ \cite{Taf57}. The following constructive version of 
Taft's result \cite{Taf57} is by Brooksbank and Wilson \cite{BW12}, in conjunction 
with the algorithm from \cite{GIKR97} that computes a Wedderburn-Malcev complement 
over number fields.

\begin{proposition}[{\cite[Proposition 4.3, Remark 4.2]{BW12}}]
Let $\E$ be a number field, and $\cA\subseteq M(n, \E)$ a $*$-algebra. Then there 
exists a deterministic polynomial-time algorithm that computes a linear basis of 
$\Rad(\cA)$, and a linear basis of a subalgebra $S$, such that 
$\cA=\Rad(\cA)\oplus 
S$ and $S^*=S$.
\end{proposition}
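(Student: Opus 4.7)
The plan is to build on the Wedderburn--Malcev algorithm of \cite{GIKR97} and then twist the resulting complement into a $*$-invariant one via a constructive version of Taft's averaging argument \cite{Taf57}. First I would call \cite{GIKR97} to obtain a linear basis for $\Rad(\cA)$ together with some semisimple subalgebra $S_0 \subseteq \cA$ with $\cA = \Rad(\cA) \oplus S_0$. Because the Jacobson radical is preserved by every anti-automorphism, it is a $*$-ideal, and so $S_0^* := \{s^* : s \in S_0\}$ is automatically another Wedderburn--Malcev complement of $\Rad(\cA)$ inside $\cA$.

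By the uniqueness clause of Wedderburn--Malcev, there exists $u = 1 + r$ with $r \in \Rad(\cA)$ such that $u S_0 u^{-1} = S_0^*$. Such a $u$ can be built layer by layer along the nilpotent filtration $\Rad(\cA) \supseteq \Rad(\cA)^2 \supseteq \cdots \supseteq \Rad(\cA)^N = 0$: at each step one needs to cancel an obstruction that lives in a first Hochschild cohomology group of a semisimple $\E$-algebra and therefore vanishes, the correction being the solution of a linear system over $\E$. Applying $*$ to $u S_0 u^{-1} = S_0^*$ shows that $u^*$ conjugates $S_0$ to $S_0^*$ as well, so $u^{-1} u^*$ normalizes $S_0$; absorbing a normalizing factor (again a linear computation on each filtration level) lets me replace $u$ by a $*$-symmetric conjugator. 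It then suffices to find $v \in 1 + \Rad(\cA)$ with $v^* v = u$, because the subalgebra $S := v S_0 v^{-1}$ then satisfies
\[
S^* = v^{-*} S_0^* v^* = v^{-*}(v^* v) S_0 (v^* v)^{-1} v^* = v S_0 v^{-1} = S,
\]
is semisimple, and complements $\Rad(\cA)$, so it is the desired $*$-invariant Wedderburn--Malcev complement.

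Constructing such a square root $v$ again proceeds by induction along the filtration: given $v_i \in 1+\Rad(\cA)$ with $v_i^* v_i \equiv u \pmod{\Rad(\cA)^{i}}$, writing $v_{i+1} = v_i(1+s)$ with $s \in \Rad(\cA)^{i}$ reduces the improved equation, modulo $\Rad(\cA)^{i+1}$, to a linear condition of the form $s + s^* \equiv t$ for a known $*$-symmetric $t$, which has the explicit solution $s = t/2$; this is where $\mathrm{char}(\E) \neq 2$ enters decisively. Every intermediate step is a polynomial-size linear-algebra computation over $\E$, so the arithmetic complexity is polynomial. The main obstacle is bit complexity over a number field: one must verify that the Hochschild coboundary computations producing $u$, the symmetrization of $u$, and the square-root iterations for $v$ all keep the denominators polynomially bounded. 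Since each linear system in the construction is assembled from data of polynomial bit size over $\E$ and the only divisions introduced are by $2$, this bookkeeping goes through and yields a deterministic polynomial-time algorithm as claimed.
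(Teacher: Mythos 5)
Your strategy---compute a Wedderburn--Malcev complement $S_0$ via \cite{GIKR97}, note that $S_0^*$ is another complement, obtain a Malcev conjugator $u\in 1+\Rad(\cA)$ with $uS_0u^{-1}=S_0^*$, and then set $S=vS_0v^{-1}$ for a ``square root'' $v$ with $v^*v=u$---is a genuinely different route from the paper's, which follows \cite{BW12}: project onto $S_0$ along the radical, replace each generator $s_i$ by $\frac12(s_i+\pi(s_i^*)^*)$, and recurse on $\Rad(\cA)^2\oplus S''$ for $O(\log n)$ rounds, so that no conjugating element ever has to be produced. Two of your steps are sound: $S_0^*$ is indeed a complement because $\Rad(\cA)$ is a $*$-ideal, and once $u$ is $*$-symmetric the equation $v^*v=u$ is solvable in $1+\Rad(\cA)$ by exactly the doubling/lifting argument of Proposition~\ref{prop:III} (your $s=t/2$ correction is that computation), after which the identity $S^*=v^{-*}S_0^*v^*=vS_0v^{-1}=S$ checks out.

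The gap is the clause ``absorbing a normalizing factor \dots lets me replace $u$ by a $*$-symmetric conjugator.'' The elements of $1+\Rad(\cA)$ carrying $S_0$ to $S_0^*$ form not a linear space but the coset $uN$, where $N$ is the normalizer of $S_0$ in $1+\Rad(\cA)$; in particular the naive average $\frac12(u+u^*)$ need not be a conjugator, and what you must actually prove is that $uN$ contains a $*$-symmetric element, i.e.\ that $z^*u^*=uz$ has a solution $z\in N$. Setting $w=u^{-1}u^*\in N$ and $\sigma(z)=u^{-1}z^*u$ (an anti-automorphism of $N$ with $\sigma^2=\mathrm{conj}_w$), this is the twisted equation $\sigma(z)=zw^{-1}$; its consistency condition $\sigma(w^{-1})=w$ does hold by direct computation, and a successive-approximation argument along $1+\Rad(\cA)^{2^i}$ (linearize to $\zeta+\bar\sigma(\zeta)=\tau$, divide by $2$) is plausible, but it requires computing a filtration of $N$---which your algorithm never constructs---and verifying at each level that $\tau$ is $\bar\sigma$-symmetric. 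This is not ``again a linear computation'' that can be waved through: it is the crux of your approach, at least as involved as the lifting you do spell out, and without it the factorization $v^*v=u$ need not exist, since only $*$-symmetric elements admit one. Either carry out this normalizer argument in full, or adopt the direct symmetrization of the generators of $S_0$ as in \cite{BW12}, which bypasses the conjugator entirely.
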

\begin{proof}
The statement on computing $\Rad(\cA)$ is already in 
Theorem~\ref{thm:algebra_R_and_C}. The procedure to compute $S$ is in \cite{BW12}, 
and for completeness we include a sketch. We then note that the bit complexity is 
also polynomially bounded. 

We first resort to Theorem~\ref{thm:algebra_R_and_C} to compute a linear basis of 
$\Rad(\cA)$. We then use the algorithm in \cite[Theorem 3.1]{GIKR97} to compute a 
Wedderburn-Malcev complement $S'$ of $\cA$ in deterministic polynomial time. If 
$\Rad(\cA)=0$ then $\cA$ itself is 
what we want. If $\Rad(\cA)\neq 0$, let $\pi:\cA\to S'$ be the natural 
projection. The involution $*$ induces an involution $\circ$ on $S'$ by sending 
$s\in S'$ to $\pi(s^*)$. Suppose $S'$ is generated by $\{s_1, \dots, 
s_\ell\}$. Let $S''$ be the algebra generated by 
$\{1/2(s_1+s_1^{\circ*}), 
\dots, 1/2(s_\ell+s_\ell^{\circ*})\}$. We then can
reduce to compute a $*$-invariant Wedderburn-Malcev complement in 
$\Rad(\cA)^2\oplus S''$. The number of iterative calls is at most $\lceil\log 
n\rceil$. 

Finally, note that in each iteration the operations are $*$-maps and projections, 
which only increase the bit size by an additive factor of polynomial size. 
Therefore the bit complexity of the above procedure is also polynomial. 
\end{proof}

Based on the Taft decomposition, we can reduce to the semisimple case in a more 
transparent way as follows. 

\begin{proposition}\label{prop:III_alt}
Let $\Rad(\cA)\oplus S$ be a Taft decomposition of a $*$-algebra $\cA\subseteq 
M(n, \E)$. Given a $*$-symmetric element $a'\in\cA$, there is a deterministic 
polynomial-time algorithm 
that computes $u\in 1+\Rad(\cA)$, such that $u^*a'u=a\in S$.
\end{proposition}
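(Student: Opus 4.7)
The plan is to construct $u$ iteratively along the radical filtration $\Rad(\cA)\supseteq\Rad(\cA)^2\supseteq\cdots\supseteq\Rad(\cA)^{N+1}=0$, where $N\leq n$ is the nilpotency index. First I decompose $a'=a+r$ via the given Taft splitting, with $a\in S$ and $r\in\Rad(\cA)$; the $*$-invariance of both summands together with $(a')^*=a'$ forces $a^*=a$ and $r^*=r$. In the intended application $a'$ is invertible, so $a$, viewed through the isomorphism $S\cong\cA/\Rad(\cA)$, is invertible in $S$; this will be the one essential standing hypothesis.

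The heart of the argument is a single induction step: if $b\in\cA$ is $*$-symmetric with $b-a\in\Rad(\cA)^k$ for some $k\geq 1$, then setting
\[
z := -\tfrac{1}{2}\,a^{-1}(b-a)\in\Rad(\cA)^k,\qquad v := 1+z,
\]
produces a $*$-symmetric $v^*bv$ congruent to $a$ modulo $\Rad(\cA)^{k+1}$. Expanding,
\[
v^*bv \;=\; b + az + z^*a + \bigl(z^*(b-a)+(b-a)z+z^*az+z^*(b-a)z\bigr),
\]
every term inside the parenthesis contains two factors drawn from $\Rad(\cA)^k$, hence lies in $\Rad(\cA)^{2k}\subseteq\Rad(\cA)^{k+1}$ when $k\geq 1$. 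Using $a^*=a$ and $(b-a)^*=b-a$ one checks directly that $az+z^*a=-(b-a)$, so the whole expression collapses to $a$ modulo $\Rad(\cA)^{k+1}$. The hypothesis $\fdchar(\E)\neq 2$ enters only through the factor $\tfrac{1}{2}$.

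Iterating from $b_0:=a'$, at stage $k\geq 1$ I produce $v_k=1+z_k$ from $b_{k-1}$ as above and set $b_k:=v_k^*b_{k-1}v_k$, maintaining the invariant $b_k-a\in\Rad(\cA)^{k+1}$. Each round requires one projection onto $S$ along $\Rad(\cA)$ and one inversion in $S$, both of which are standard linear-algebraic operations over the number field $\E$ with polynomial bit complexity. After at most $N$ rounds the invariant forces $b_N=a$, and then $u:=v_1v_2\cdots v_N\in 1+\Rad(\cA)$ satisfies $u^*a'u=a\in S$, as required.

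The main obstacle is the invertibility of $a$ in $S$: without it the formula defining $z$ breaks down, and indeed a nonzero nilpotent $*$-symmetric $a'$ is a genuine counterexample to the bare statement. In the context where Proposition~\ref{prop:III_alt} is invoked---the decomposition $X^*X=E$ with $E$ invertible---this invertibility is automatic, and this is the only step at which that hypothesis actually enters the argument.
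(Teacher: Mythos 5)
Your construction is the same as the paper's: decompose $a'=a+r$ along the Taft splitting, note $a^*=a$ and $r^*=r$, and repeatedly conjugate by $1+z$ with $z=-\tfrac12 a^{-1}(\text{residue})$, using $z^*a=(az)^*$ to kill the first-order term. Your remark that invertibility of $a$ (equivalently of $a'$ modulo the radical) is a necessary hypothesis missing from the bare statement is correct and is a point the paper glosses over; in the application $a'=E$ is invertible, as you say.

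The one place your write-up falls short of the claimed conclusion is the bit complexity over a number field. You run $N\le n$ rounds with the invariant $b_k-a\in\Rad(\cA)^{k+1}$, and justify polynomiality by saying each single round is a polynomial-time linear-algebra step. That does not suffice: each round substitutes $b_{k-1}$ into an expression of degree up to $3$ in its entries, so after $n$ rounds the entries of $b_N$ are a priori polynomials of degree $3^{\Theta(n)}$ in the original data, and their bit sizes could grow exponentially. The paper avoids this by exploiting exactly the fact you already observed --- the new residue lies in $\Rad(\cA)^{2k}$, not merely $\Rad(\cA)^{k+1}$ --- so that the radical power doubles each round, only $O(\log n)$ rounds are needed, and the final residue is a sum of at most $16^{\log n}$ words of length at most $5^{\log n}$ in the alphabet $\{r,a^{-1}\}$, which is polynomial. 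Switching your invariant from $k\mapsto k+1$ to $k\mapsto 2k$ repairs this with no other change.
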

\begin{proof}
Let $a'$ be decomposed as $a+r$ with $a\in S$ and $r\in \Rad(\cA)$. We will show 
how to find $t\in \Rad(\cA)$ such that $(1+t)^*a'(1+t)=a+r'$ for $r'\in 
\Rad(\cA)^2$. Then by iterating such a procedure to get $r''\in \Rad(\cA)^4$, 
\dots, we would be done. 

To start with, by the $*$-symmetry of $a'$, $\Rad(\cA)$, and $S$, we have that $a$ 
and $r$ are both $*$-symmetric as well. We expand 
$(1+t)^*a'(1+t)=a+r+t^*a+at+t^*r+rt+t^*at+t^*rt$. Since
$t^*r+rt+t^*at+t^*rt\in \Rad(\cA)^2$, we need $r+t^*a+at=0$. This can be 
achieved by setting  $t=-\frac{1}{2}a^{-1}r$, noting that
$t^*a=t^*a^*=(at)^*$. We then have $(1+t)^*a'(1+t)=a+r'$ for $r'\in \Rad(\cA)^2$. 

To prove that the bit complexity is polynomial, we note that the number of 
iterations is at most $\log n$, and in the $\ell$th iteration we get at most 
$16^\ell$ 
words in the 
alphabet $\{r, a^{-1}\}$, with each word of length at most $5^\ell$. The latter is 
because, if we let $r_\ell$ be the residue in the $\ell$th step, then 
$r_{\ell+1}=t^*r_\ell+r_\ell t+t^*at+t^*r_\ell t=-\frac{3}{4}r_\ell 
a^{-1}r_\ell+\frac{1}{4}r_\ell a^{-1}r_\ell a^{-1}r_\ell$.
\end{proof}

Given $u\in 1+\Rad(\cA)$ such that $u^*a'u=a\in S$, if we can decompose 
$a=x^*x$, then $xu^{-1}$ is a solution for $a'$. The advantage over the procedure 
in Proposition~\ref{prop:III} is the following. If $x$ is represented as a product 
of matrices, each of which is over a different extension field, then the procedure 
in Proposition~\ref{prop:III} may mix these entries over different extension 
fields and cause an extension degree blow-up. On the other hand, the procedure in 
Proposition~\ref{prop:III_alt} takes $x$ and returns $xu^{-1}$, which is still a 
product of matrices, as the output, therefore avoiding the extension degree 
blow-up issue. 

\subsubsection{Alternative decomposition algorithm IV}
Compare with Section~\ref{subsubsec:IV}. 

We now reduce to work with a semisimple $*$-algebra $\cA$ in $M(n, \E)$ and a 
$*$-symmetric element $E$. By 
Theorem~\ref{thm:algebra_R_and_C}, we have extension fields $\E\subseteq\E_i$ and 
simple 
algebras $S_i\subseteq \cA\otimes_\E \E_i$, $i\in[k]$, such that the extension 
degree of $\E_i$ over $\E$ is upper bounded by $\binom{\dim_\E\cA}{2}$ in the real 
case and $\dim_\E\cA$ in 
the complex case. 
We reduce to the simple case by the 
following construction. Without loss of generality, we can assume that there 
exists 
$j\leq \lfloor k/2\rfloor$, such that $*$ exchanges $S_{2i-1}$ and $S_{2i}$ for 
$i\in[j]$, and stabilizes $S_i$ for $i>2j$. Let $E_i$ be the projection of $E$ to 
$S_i$. For $i\leq 2j$, $i$ odd, $(E_i, I)$ is the solution to the decomposition 
problem for $(E_i, E_{i+1})\in S_i\oplus S_{i+1}$. For $i>2j$, suppose $X_i\in 
S_i$ satisfies $X_i^*X_i=E_i$. We then embed $(E_i, I)$ into $\cA\otimes_\E\E_i$, 
and $X_i$ into $\cA\otimes_\E\E_i$, by adding identities in other summands. Let 
$X$ be the product of these matrices. It is easy to see that $X^*X=E$ over some 
extension field $\K$ ($\K$ needs to include all $\E_i$). Note that 
$X$ is then represented by a product of matrices, with each matrix over a 
possibly different extension field. 

\subsubsection{Alternative decomposition algorithm VI}
Compare with Section~\ref{subsubsec:VI}.

We then follow Section~\ref{subsubsec:V} to reduce to the isometry problem for 
a single form. Note that to solve the isometry problem, we need to take square 
roots, which, if not handled well, may lead to extension fields of exponential 
extension degree. Therefore, we also output a product of matrices as an isometry 
between two single forms, keeping those diagonal matrices with square roots on the 
diagonal intact. Specifically, when working with two forms $A$ and $F$ over $\K$, 
the isometry is represented as $T'D'D^{-1}T^{-1}$ where $T$ and $T'$ are the 
orthogonal transformations, and $D'$ and $D$ are diagonal matrices with entries 
being various square roots. We can also represent $D'D^{-1}$ as a single diagonal 
matrix with entries being from an extension field of degree at most $4$.

\section{Proof of Theorem~\ref{thm:sym}}\label{sec:sym}

Recall that in the $\epsilon$-symmetrization problem, we are given a 
matrix tuple $\vecB=(B_1, \dots, B_m)\in M(n, \F)^m$, and need to decide whether 
there exist $A, D\in\GL(n, \F)$ such that $\forall i\in[m]$, $AB_iD$ is 
$\epsilon$-symmetric. 
In Section~\ref{subsec:sym_outline}, we present an algorithm 
when (1) $\F$ is 
large enough, and (2) the Jacobson radical of a matrix algebra can be computed 
efficiently in a deterministic way. Note that (2) holds for 
fields of characteristic $0$ \cite{Dic23}, 
finite fields \cite{Ron90}, as well as many others 
of positive characteristic \cite{CIW}.
 This algorithm follows the strategy for module 
isomorphism problem as used in \cite{CIK97}, 
and relies crucially on 
Lemma~\ref{lem:key}.

We will deal with the remaining cases (a) $|\F|$ 
is large enough but we do not assume the ability to compute the Jacobson radical 
in Section~\ref{app:sub:large_field}, and 
(b) $|\F|$ is small in Section~\ref{app:sub:small_field}. The algorithm for (a) is 
obtained by associating certain projective modules to right ideals,
and adapting the algorithm in Section~\ref{subsec:sym_outline} to work with that 
concept. The algorithm for (b) 
follows the strategy for module isomorphism problem as used in \cite{BL08}, and 
relies crucially on another lemma about $*$-algebra, namely Lemma~\ref{lem:star1}.

To start, note that if 
$\dim(\cap_{i\in[m]}\ker(B_i))+
\dim(\langle \cup_{i\in[m]} \im(B_i)\rangle)\neq n$, then $\vecB$ cannot be 
$\epsilon$-symmetrizable. 
This is because, if $\vecB$ is $\epsilon$-symmetric, then 
$\cap_{i\in[m]}\ker(B_i)$ and $\langle \cup_{i\in[m]} \im(B_i)\rangle$ are 
orthogonal to each other with respect to the standard inner product of 
vectors, so
their dimensions sum up to $n$. Then observe that 
$\dim(\cap_{i\in[m]}\ker(B_i))=\dim(\cap_{i\in[m]}\ker(AB_iD))$, and $\dim(\langle 
\cup_{i\in[m]} \im(B_i)\rangle)=\dim(\langle \cup_{i\in[m]} \im(AB_iD)\rangle)$.
If 
$\dim(\cap_{i\in[m]}\ker(B_i))+\dim(\langle \cup_{i\in[m]} \im(B_i)\rangle)= n$ 
but $\cap_{i\in[m]}\ker(B_i)\neq \vzero$ then we can reduce to the 
$\cap_{i\in[m]}\ker(B_i)=\vzero$ analogously as it is done in Step (1) for the 
isometry problem (Section~\ref{subsec:I}). So in the following we assume 
$\cap_{i\in[m]}\ker(B_i)=\vzero$ and $\langle \cup_{i\in[m]} \im(B_i)\rangle=\F^n$.

\subsection{An algorithm for Theorem~\ref{thm:sym} under certain technical 
conditions}\label{subsec:sym_outline}


%


In this section we present an algorithm for Theorem~\ref{thm:sym} when (1) $\F$ is 
large enough, and (2) the Jacobson radical of a matrix algebra can be computed 
efficiently in a deterministic way.

Recall that, as explained at 
the beginning of Section~\ref{subsubsec:pit}, the $\epsilon$-symmetrization 
problem is equivalent to ask 
whether there exists $E\in\GL(n, \F)$ such that $E\vecB\in S^\epsilon(n, \F)^m$. 
That 
is, whether the matrix space $L^\epsilon(\vecB):=\{Z\in M(n, \F) : \forall 
i\in[m], 
ZB_i=\epsilon B_i^tZ^t\}$ contains a full-rank matrix. 
A linear basis 
$Z_1,\ldots,Z_\ell$ of $L^\epsilon(\vecB)$ can be computed efficiently. 

The remaining part of the algorithm is an iteration during which 
we maintain a matrix $Z\in L^\epsilon(\vecB)$. If $Z$ has full rank we are
done. Otherwise we try all basis elements $Z_i$ and scalars $\lambda$
from a sufficiently large subset $S\subseteq \F$, 
either to  obtain 
a 
matrix 
$Z'=Z+\lambda Z_i$ which is of higher rank 
than $Z$, 
or, if every such $Z'$ is of rank no more than that of 
$Z$, conclude that $Z$ is of the highest rank. We intend to
use the following well known fact. 
Let
$A=\begin{pmatrix}A_{11} & 0 \\ 0 & 0 \end{pmatrix}$ and
$B=\begin{pmatrix}B_{11} & B_{12} \\  B_{21} & B_{22} \end{pmatrix}$
be $(r'+r'')$ by $(r'+r'')$ block matrices, where $A_{11}$ is an $r'$ by $r'$ 
matrix
of rank $r'$ and $B_{22}$ is a nonzero $r''$ by $r''$ matrix.
Let $r=r'+r''$.
Then the matrix $B+\lambda A$ has rank larger than $r'$ for some 
$\lambda$ from a sufficiently large set of scalars. Formally
(see e.g. 
\cite[Lemma 2.2]{IKS10}), 
\begin{lemma}\label{lem:inc_rank}
Let $A, B\in M(r, \F)$ and let 
$S\subseteq \F$ such that $|S| > r$. If 
$B\ker(A)\not\subseteq \im(A)$ then $\rk(\lambda 
A+B)>\rk(A)$ for all but at most 
$r$ $\lambda\in S$.
\end{lemma}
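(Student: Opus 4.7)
The plan is to reduce to a convenient normal form for $A$ and then exhibit an $(r'+1) \times (r'+1)$ minor of $\lambda A + B$ whose determinant is a nonzero polynomial in $\lambda$ of degree at most $r$. Set $r' = \rk(A)$ and $r'' = r - r'$. First I would change bases on both sides: pick $P, Q \in \GL(r, \F)$ so that $P^{-1}$ sends the last $r''$ standard basis vectors onto a basis of $\ker(A)$ and $Q$ sends $\im(A)$ onto the span of the first $r'$ standard basis vectors. In these coordinates $QAP^{-1} = \bigl(\begin{smallmatrix} A' & 0 \\ 0 & 0\end{smallmatrix}\bigr)$ with $A' \in \GL(r', \F)$, and writing $QBP^{-1} = \bigl(\begin{smallmatrix} B_{11} & B_{12} \\ B_{21} & B_{22}\end{smallmatrix}\bigr)$ in matching blocks, the hypothesis $B\ker(A) \not\subseteq \im(A)$ translates directly into the statement $B_{22} \neq 0$.

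Next, fix a pair $(i,j)$ with $B_{22}(i,j) \neq 0$ and form the $(r'+1) \times (r'+1)$ submatrix of $\lambda\, QAP^{-1} + QBP^{-1}$ obtained by keeping the first $r'$ rows and columns together with row $r'+i$ and column $r'+j$:
\[
M(\lambda) = \begin{pmatrix} \lambda A' + B_{11} & v \\ u^t & B_{22}(i,j) \end{pmatrix},
\]
where $v$ and $u^t$ are the corresponding column of $B_{12}$ and row of $B_{21}$. Expanding along the last row (or using the Schur complement) I get $\det M(\lambda) = B_{22}(i,j)\,\det(\lambda A' + B_{11}) - u^t\,\mathrm{adj}(\lambda A' + B_{11})\,v$. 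The first term is a polynomial in $\lambda$ of degree exactly $r'$ with leading coefficient $B_{22}(i,j)\det(A') \neq 0$, while every entry of the adjugate has degree at most $r'-1$, so the correction term has degree at most $r'-1$. Hence $\det M(\lambda)$ is a nonzero polynomial in $\lambda$ of degree exactly $r'$, with at most $r' \leq r$ roots.

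Finally, for any $\lambda \in S$ with $\det M(\lambda) \neq 0$ the matrix $\lambda A + B$ has an $(r'+1)$-minor of nonzero determinant, so $\rk(\lambda A + B) \geq r' + 1 > \rk(A)$. The bad $\lambda$ are zeros of $\det M$, at most $r$ in number, which is the bound claimed. The only step that requires care is verifying the degree of the adjugate correction, but this is immediate since each minor of $\lambda A' + B_{11}$ omits one row and one column, leaving at most $r'-1$ factors of $\lambda$; I do not anticipate any real obstacle.
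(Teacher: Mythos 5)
Your proof is correct, and it follows essentially the same route the paper itself sets up: the paper does not prove this lemma but cites \cite[Lemma 2.2]{IKS10}, and the informal discussion immediately preceding the statement reduces to exactly your block normal form $A=\bigl(\begin{smallmatrix}A_{11}&0\\0&0\end{smallmatrix}\bigr)$ with $B_{22}\neq 0$ encoding the hypothesis $B\ker(A)\not\subseteq\im(A)$. Your bordered-minor/adjugate computation correctly supplies the missing determinant argument, giving a nonzero polynomial of degree exactly $\rk(A)\leq r$ and hence at most $r$ bad values of $\lambda$.
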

Unfortunately, we are unable to show -- and probably it is not true in general
---
that Lemma~\ref{lem:inc_rank} becomes applicable to $Z$ (as $A$)
and at least one of the basis elements $Z_i$ (as $B$), when we
consider $L^\epsilon(\vecB)$ as it is obviously given to us 
(i.e., a space of $n$ by $n$ matrices). However, there is another
representation
of $L^\epsilon(\vecB)$ as a matrix space in which it provably
does. And this
is the point where $*$-algebras enter the picture.

To see the details, assume that $\vecB=E\vecB'$ where 
$E\in\GL(n, \F)$ and $\vecB'\in 
S^\epsilon(n, \F)^m$. Since $\vecB'$ is non-degenerate, we can identify 
$\Adj(\vecB')\subseteq M(n, \F)^{op}\oplus M(n, \F)$ as a subalgebra of $M(n, \F)$ 
by projecting to the second component (see Section~\ref{sec:prel}). Then 
$L^\epsilon(\vecB')$ is the set 
of 
$*$-symmetric 
elements in $\Adj(\vecB')$. Moreover, it is not difficult to see
that $L^\epsilon(\vecB)=L^\epsilon(\vec B')E^{-1}$. 
Now for $Z\in L^\epsilon(\vecB)$, consider the following composite linear map, 
$Z\mapsto ZE\mapsto
\overline{ZE}\mapsto \ell_{\overline{ZE}}$, where 
$\overline{ZE}=ZE+\Rad(\Adj(\vecB'))$, and
$\ell_{\overline{ZE}}$ is the action of $\overline{ZE}$
on the factor $\Adj(\vecB')/\Rad(\Adj(\vecB'))$ (see also Section~\ref{sec:prel}). 
The following lemma ensures that this gives a representation of 
$L^\epsilon(\vecB)$ to which Lemma~\ref{lem:inc_rank} becomes applicable,
provided that we can compute it. Its proof is in 
Section~\ref{app:key}.
\begin{lemma}\label{lem:key}
Let $\cA$ be a semisimple $*$-algebra over a field $\F$, $\fdchar(\F)\neq 2$. Let 
$a\in \cA$ be a $*$-symmetric zero-divisor. Then there exists a $*$-symmetric 
element $b\in \cA$, such that $b\Ann_r(a)\not\subseteq a\cA$, where 
$\Ann_r(\cdot)$ denotes the set of right annihilators.
\end{lemma}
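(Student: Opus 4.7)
The plan is to decompose the semisimple algebra as $\cA = S_1 \oplus \dots \oplus S_k$ into simple summands; accordingly $a = (a_1, \ldots, a_k)$, $\Ann_r(a) = \bigoplus_i \Ann_r(a_i)$, and $a\cA = \bigoplus_i a_i S_i$, while $*$ either stabilizes each $S_i$ or swaps some pair $(S_i, S_j)$. Since $a$ is a zero-divisor, some component $a_i$ is a non-unit, and it suffices to produce $b$ supported in the relevant summand (together with its $*$-partner if any) such that the local condition $b_i \Ann_r(a_i) \not\subseteq a_i S_i$ holds. If the problematic $S_i$ belongs to a swapped pair $(S_i, S_j)$, then $b := (0, \ldots, b_i, b_i^*, \ldots, 0)$ is automatically $*$-symmetric for any $b_i \in S_i$; picking $v \in \ker a_i \setminus \{0\}$ and $w \notin \im a_i$ (both exist since $a_i$ is a non-unit) and choosing any $b_i$ with $b_i v = w$ settles this case.

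The substantive case is a $*$-stable simple summand $S = S_i$. Identify $S = M(n, D)$ with $X^* = F^{-1} \overline{X}^t F$, $\overline{F}^t = \epsilon F$, $\epsilon \in \{1, -1\}$, and equip $V = D^n$ with the nondegenerate $\epsilon$-Hermitian form $h$ encoded by $F$. Then $*$ coincides with adjunction for $h$, and self-adjointness of $a$ yields $\im a = (\ker a)^\perp$. Setting $W := \ker a$, the condition $b\ker a \not\subseteq \im a$ translates into: the form $h_b(u,v) := h(bu, v)$ does not vanish identically on $W \times W$. Under the standard bijection $b \mapsto h_b$ between self-adjoint operators on $V$ and $\epsilon$-Hermitian forms on $V$ (well-defined since $\fdchar \F \neq 2$), every $\epsilon$-Hermitian form on $V$ arises this way; hence it suffices to exhibit a nonzero $\epsilon$-Hermitian form $g$ on $W$ and extend it to $V$ by zero on a complement $V = W \oplus W''$ (the extended form remains $\epsilon$-Hermitian), and then take the $b$ corresponding to this extended form.

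The only point left is the existence of a nonzero $\epsilon$-Hermitian form on the nonzero module $W$, which I expect to be the main subtlety of the argument. This is immediate in every type --- orthogonal, Hermitian, skew-Hermitian, and both quaternionic types --- except in the symplectic one ($D = \F$ with trivial involution and $\epsilon = -1$) combined with $\dim_D W = 1$, since a nonzero skew-symmetric form requires dimension at least $2$. However, that offending configuration cannot arise: in the symplectic case $h_b$ is a skew-symmetric bilinear form whose rank equals $\rk b$, and skew-symmetric forms have even rank. Thus every $*$-symmetric element of $S$ automatically has even rank, forcing $\dim W = n - \rk a$ to be even and hence $\neq 1$. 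With this parity remark in place the construction goes through uniformly.
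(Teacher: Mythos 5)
Your proof is correct, but it follows a genuinely different route from the paper's. The paper never decomposes $\cA$ into simple summands up front: it reduces, by induction on proper $*$-symmetric idempotents $f$ commuting with $a$ (passing to $f\cA f$ or $(1-f)\cA(1-f)$), to the case where no such idempotent exists, and then invokes Claim~\ref{claim:centralizer} (whose proof rests on Osborn's theorem for $*$-algebras in which every symmetric element is nilpotent or invertible) to conclude that $\cA$ is $*$-simple and $a$ is nilpotent with, possibly after cutting by an idempotent satisfying $f^*=1-f$, a single Jordan block over a division algebra; the witness $b$ is then produced by an explicit corner computation, the delicate point being that $f\cA f^*$ must contain a nonzero $*$-symmetric element (shown by contradicting that $*$ is an involution on a four-dimensional subalgebra spanned by $f$, $f^*$, $a^{n-1}$ and $c$). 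You instead pass immediately to the simple summands, identify $*$ on a $*$-stable summand $M(n,D)$ with adjunction for a nondegenerate $\epsilon$-Hermitian form via Albert's theorem, and reduce the assertion to the existence of a nonzero $\epsilon$-Hermitian form on $\ker a\neq 0$; your parity argument correctly rules out the single obstruction (symplectic type with $\dim\ker a=1$, impossible because $*$-symmetric elements there have even rank). Your route is shorter and more geometric, avoiding Osborn's theorem and the Jordan-block analysis entirely; its cost is reliance on the classification of involutions of simple algebras over an arbitrary field of characteristic $\neq 2$ (the paper spells this out only for $\F_q$, $\R$ and $\C$, though the general form $X^*=F^{-1}\overline{X}^{t}F$ with $\overline{F}^{t}=\pm F$ is indeed available from Albert), whereas the paper's idempotent machinery is purely ring-theoretic and, through Claim~\ref{claim:centralizer}, is shared with the proof of the companion Lemma~\ref{lem:star1}.
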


Indeed, if $b$ is as in Lemma~\ref{lem:key} in a semisimple $\cA$, 
then viewing 
$a$ and $b$ as 
linear maps 
on $\cA$ (by multiplication from the left), Lemma~\ref{lem:inc_rank} 
gives that we have that for some $\lambda\in S\subseteq 
\F$, 
$|S| > \dim(\cA)$, $\dim((b+\lambda a)\cA)>\dim(a\cA)$.  (When working
with non-semisimple algebras, we 
also make use the simple fact that an element of an algebra is a unit
if and only if it is a unit modulo the radical.)

Thus we wish to work with $\Adj(\vecB')$ and the dimension of
the image of the left
multiplication of its symmetric elements, that is, dimension
of right ideals of 
the 
form $X\Adj(\vecB')$, $X\in L^\epsilon(\vecB')$
-- modulo the radical of $\Adj(\vecB')$. But as $\vecB'$ is not in our 
hand, 
$\Adj(\vecB')$ and $L^\epsilon(\vecB')$ are not either. In fact $\vecB'$ is not 
even 
uniquely determined by $\vecB$. These difficulties can be 
overcome as follows. 
\begin{itemize}
\item For $\Adj(\vecB')$, though $\vecB$ is not $\epsilon$-symmetric, 
we may still define the adjoint algebra 
of $\vecB$ as $\Adj(\vecB)=\{ A\oplus D \in M(n, \F)^{op}\oplus 
M(n, \F) \mid \forall i\in[m], A^tB_i=B_iD\}$. However, while $\Adj(\vecB')$ is 
naturally a $*$-algebra by $(A\oplus D)^*=D\oplus A$, $\Adj(\vecB)$ is not. But 
the 
following relation is easy to verify: $A\oplus D\in \Adj(E \vecB') \iff 
E^{t}AE^{-t}\oplus D\in \Adj(\vecB')$. 
This is because $A^tEB_i'=EB_i'D\iff E^{-1}A^tEB_i'=B_i'D\iff 
(E^tAE^{-t})^tB_i'=B_i'D$.
 So the projection of $\Adj(\vecB)$ to the 
second component coincides with the projection of $\Adj(\vecB')$ to the second 
component. 
\item To get around the lack of $L^\epsilon(\vecB')$ is trickier. We first observe 
that $L^\epsilon(E\vecB F)=F^tL^\epsilon(\vecB)E^{-1}$. Since $\vecB=E\vecB'$, 
$L^\epsilon(\vecB)=L^\epsilon(\vecB')E^{-1}$ so any $Z\in L^\epsilon(\vecB)$ 
equals 
$XE^{-1}$ 
for some $X\in L^\epsilon(\vecB')$. Then consider $XL^\epsilon(\vecB')$: we have 
$XL^\epsilon(\vecB')=XE^{-1}EL^\epsilon(\vecB')=ZL^\epsilon(\vecB'E^t)
=ZL^\epsilon(\epsilon\vecB'^tE^t)=ZL^\epsilon(\epsilon(E\vecB')^t)
=ZL^\epsilon(\epsilon\vecB^t)$. Here we use the assumption that $\vecB'\in 
S^\epsilon(n, \F)^m$.
\end{itemize}
As $L^\epsilon(\vecB')\subseteq \Adj(\vecB')$, 
$L^\epsilon(\vecB')\Adj(\vecB')=\Adj(\vecB')$. Therefore, for any $Z\in 
L^\epsilon(\vecB)$,
$ZL^\epsilon(\epsilon\vecB^t)\Adj(\vecB)=XL^\epsilon(\vecB')\Adj(\vecB')
=X\Adj(\vecB')$ for 
some $X\in L^\epsilon(\vecB')$. Noting that $L^\epsilon(\vecB)$, 
$L^\epsilon(\epsilon\vecB^t)$, and $\Adj(\vecB)$ are what we can compute, this 
allows us to
work with the right ideals of $\Adj(\vecB')$ generated by $X\in L^\epsilon(\vecB')$ without knowing 
the hidden $\vecB'$.

The arguments above lead to the 
following algorithm, assuming that $|\F|>n^2$ and 
$\Rad(\cA)$ can be computed efficiently over $\F$. Fix $S\subseteq \F$ of size 
$>n^2$, 
and perform the following: 
\begin{enumerate}
\item Compute a basis of $L^\epsilon(\vecB)=\langle Z_1, \dots, Z_s\rangle$, 
and 
choose some $Z\in L^\epsilon(\vecB)$. 
\item If $Z$ is full-rank, return $Z$. Otherwise, compute 
$R_Z=ZL^\epsilon(\epsilon\vecB^t)\Adj(\vecB)$.
\item If there exist $i\in[\ell]$ and $\lambda\in S$ such that 
$\dim(R_{\lambda Z+Z_i}+\Rad(\Adj(\vecB)))>\dim(R_Z+\Rad(\Adj(\vecB))$, 
let 
$Z\gets 
\lambda Z+Z_i$ 
and 
go to Step (2). Otherwise 
return ``Not $\epsilon$-symmetrizable''.
\end{enumerate} 
It is clear that the algorithm uses polynomially many 
arithmetic operations, and 
over 
number fields the bit sizes are controlled well. The correctness follows from 
Lemma~\ref{lem:key}, and we only need to exclude a false negative outcome.
Assume to this end that $\vecB$ is $\epsilon$-symmetrizable and
$Z\in L^\epsilon(\vecB)$ is not of full rank. 
With $X=ZE$ and $X_i=Z_iE$,
we have
$R_Z+\Rad(\Adj(\vecB))=
ZL^\epsilon(\epsilon \vecB^t)\Adj(\vecB)+\Rad(\Adj(\vecB))=
XL^\epsilon(\vecB')\Adj(\vecB')+\Rad(\Adj(\vecB'))$,
which is essentially (that is, modulo $\Rad(\Adj(\vecB))$) the image of 
$\ell_{\overline X}$,
where $\overline X$ stands for the residue class of $X$
modulo $\Rad(\Adj(\vecB))$.
As $Z$ is not of full rank, $X$ is not of full rank
either, and hence $X\Adj(\vecB')< \Adj(\vecB')$. Then, as $\Rad(\Adj(\vecB'))$
is the intersection of the maximal right ideals of $\Adj(\vecB')$,
$X\Adj(\vec B')+\Rad(\Adj(\vecB'))<\Adj(\vecB')$.
For $i=1,\ldots, s$, we have that 
$R_{\lambda Z+Z_i}+\Rad(\Adj(\vecB))$ is essentially 
the image of $\lambda \ell_{\overline X}+\ell_{\overline X_i}$. 
Now by Lemma~\ref{lem:key}, if $\ell_{\overline X}$ is not of full
rank, that is $XL^\epsilon(\vecB')+\Rad(\Adj(\vecB'))\neq \Adj(\vecB')$,
or, equivalently, $ZL^\epsilon(\vecB')+\Rad(\Adj(\vecB))\neq \Adj(\vecB)$,
then there exists an element $\lambda$ and a linear combination $b$
of the $\ell_{\overline X_i}$ such that $\lambda \ell_{\overline X}+b$
has larger rank that that of $\ell_{\overline X}$. By linearity, 
$b$ can be chosen from $\ell_{\overline X_i}$ ($i=1,\ldots,s$). 
But then $R_{\lambda Z+Z_i}$ will be, modulo $\Rad(\Adj(\vecB'))$, 
indeed bigger than $R_{Z}$.

\subsection{When $|\F|$ is large enough}\label{app:sub:large_field}

Suppose $|\F|=\Omega(n^4)$.
We 
shall extend the algorithm in 
Section~\ref{subsec:sym_outline} 
to work without relying on the presence of the radical of $\Adj(\vecB)$.
To that end we need some objects to measure ``progress'' modulo
the radical without actually having the radical at hand. These objects
are the right ideals which are, as modules, projective. 
We summarize here definition and the basic
facts known about them, see \cite{Pierce}, Chapter 6, in particular
Section~6.4 for details.

Let $\A$ be an algebra of dimension $d$ with identity.
Projective modules are direct summands of free modules. More specifically, free 
right $\A$ modules are just direct sums of copies of the right $\A$-module 
$\A$ itself, and we say 
that a submodule $M_1$ is a direct summand of the module $M$, if
$M$ is the direct sum of $M_1$ and another submodule $M_2$.
Right ideals that are projective modules are just the direct summands of
$\A$. A right ideal $P$ is projective if it is generated by an idempotent:
$P=e\A$ for some idempotent $e\in \cA$.
(This is equivalent 
to saying that
$e$ is a left identity element of $P$.) Every projective right $\A$-module
$P$ can be decomposed into the direct sum of indecomposable projective
modules. A projective module $P$ is indecomposable, if and only if its {\em head}
$P/P\Rad(\A)$ 
is a simple $\A/\Rad(\A)$-module. 
Two indecomposable 
projective modules
are isomorphic if and only if their heads are isomorphic. Projective
indecomposable modules are also called principal indecomposable, and they
appear as projective right ideals of $\A$. Every projective right $\A$-module
can be decomposed into a direct sum of principal indecomposable
modules. The decomposition is, up to the isomorphism types of the
principal indecomposable with multiplicities, is unique. This theory,
when specialized 
to right ideals of $\A$, gives
that right ideals that are projective as modules (for brevity, 
we will call them projective right ideals) can be decomposed into direct 
sums of indecomposable  right ideals, which are also projective.
The multiplicities of the various
principal indecomposable modules in $P$ are the same as the 
number of various simple components of the factor
$P/\Rad(\A)P$ as a right module over the semisimple
algebra $\A/\Rad(\A)$. Note that for a projective right ideal $P$,
we have $\Rad(\A)P=P\cap \Rad(\A)$, which can be shown easily using
a generating idempotent. It follows that $P/\Rad(\A)P\cong (P+\Rad(\A))/\Rad(\A)$.
Here we have that $(P+\Rad(\A))/\Rad(\A)$ is a right ideal
of $\A/\Rad(\A)$. If $\A/\Rad(\A)$ is the direct sum of simple
algebras $\A_1,\ldots,\A_\ell$, then $(P+\Rad(\A))/\Rad(\A)$ is decomposed into 
the 
direct sum $P_1,\ldots,P_\ell$ where $P_i=((P+\Rad(\A))/\Rad(\A))\cap 
\A_i$.
Now
each $P_i$ can be decomposed into a sum of minimal right ideals of $\A_i$, where
the number of such components are the multiplicities of the various
principal indecomposables in the decomposition of $P$.

Here we show that, given a right ideal $J$ of $\A$, a projective
right ideal $P$ can be computed with the property that 
$(P+\Rad(\A))/\Rad(\A)=(J+\Rad(\A))/\Rad(\A)$. By the discussion above,
the module structure of $P$ depends only on the factor $(J+\Rad(\A))/\Rad(\A)$,
and hence can be used to (partially) compare right ideals modulo the
radical. The key property is the equivalent characterization of
projective right ideals as those generated by idempotents of $\A$,
see \cite{Pierce}, Section~6.4.

\begin{proposition}\label{prop:jzero}
Let $\A$ be a finite dimensional algebra
with identity and let $J$ be a non-nilpotent right ideal of $\A$.
Then in deterministic polynomial time one can compute a
right ideal $J_0$ contained in $J$ generated by
an idempotent $e$ such that $e+\Rad(\A)$ is a left identity element of
$(J+\Rad(\A))/\Rad(\A)$. 
\end{proposition}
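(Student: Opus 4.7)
The plan is to construct the idempotent $e$ iteratively, as an orthogonal sum $e_1 + \dots + e_s$ of nonzero idempotents of $\A$ that all lie in $J$. At stage $k$, I maintain the running partial sum $e^{(k)} := e_1 + \dots + e_k \in J$ and the residual right ideal $J^{(k)} := (1-e^{(k)})J$ of $\A$, halting as soon as $J^{(k)}$ is nilpotent. The output is $J_0 := e^{(k)} \A$, which sits inside $J$ because each $e_i \in J$ and $J$ is a right ideal, so $e^{(k)} \A \subseteq J \cdot \A \subseteq J$.

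For a single iteration, assume $J^{(k)}$ is non-nilpotent. Since $J^{(k)} \cdot J^{(k)} \subseteq J^{(k)}$, the space $J^{(k)}$ is a non-unital subalgebra, so the proof of Fact~\ref{fact:non-nil} applies verbatim and guarantees that some basis element of $J^{(k)}$ is non-nilpotent; such a $y$ is located deterministically by computing powers up to exponent $\dim \A$. From $y$, compute an idempotent $\eta$ of the commutative subalgebra $\F[y^{\dim \A}]$ as described in the paragraph following Fact~\ref{fact:non-nil} (solve the linear system expressing that $\eta$ is an identity on $y^{\dim \A} \A$ inside $\F[y^{\dim \A}]$). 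Since $\eta$ is a polynomial in $y$ with zero constant term, $\eta \in J$; and because $y \in J^{(k)} = (1-e^{(k)})J$ forces $e^{(k)} y = 0$, we get $e^{(k)} \eta = 0$ as well. Orthogonalize by setting $e_{k+1} := \eta - \eta e^{(k)}$; a direct verification using $e^{(k)} \eta = 0$ and $(e^{(k)})^2 = e^{(k)}$ shows that $e_{k+1}$ is idempotent with $e^{(k)} e_{k+1} = e_{k+1} e^{(k)} = 0$, and $e_{k+1} \in J$ because $\eta e^{(k)} \in J \cdot \A \subseteq J$. Thus $e^{(k+1)} := e^{(k)} + e_{k+1}$ is a larger idempotent lying in $J$.

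Termination rests on the elementary fact that a nilpotent idempotent is zero: each $e_{k+1}$ is nonzero, so its image $\bar e_{k+1}$ in $\bar \A := \A/\Rad(\A)$ is nonzero, and orthogonality is preserved modulo $\Rad(\A)$. Therefore the dimensions $\dim(\bar e^{(k)} \bar \A) = \sum_{i=1}^{k} \dim(\bar e_i \bar \A)$ strictly increase and are bounded by $\dim \bar J \le \dim \A$, so the loop halts after polynomially many iterations; nilpotence of $J^{(k)}$ is itself decided in polynomial time by iterating powers of a basis up to degree $\dim \A$. For correctness, once $J^{(k)}$ is nilpotent it must lie in $\Rad(\A)$ (every nilpotent right ideal of a finite-dimensional algebra is contained in the radical), hence $(1-\bar e^{(k)}) \bar J = 0$, which gives $\bar J \subseteq \bar e^{(k)} \bar \A$; the reverse inclusion is immediate from $e^{(k)} \in J$, so $\bar e^{(k)} \bar \A = \bar J$ and $\bar e^{(k)}$ is the left identity of $(J + \Rad(\A))/\Rad(\A)$ required by the statement. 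The main delicate point will be the coupled requirement that $\eta$ simultaneously lies in $J$ and annihilates $e^{(k)}$ on the left; this is what permits the orthogonalization to stay inside $J$ and drives correctness, and it hinges on drawing $y$ from $J^{(k)}$ (not merely from $J$) so that every polynomial in $y$ without constant term automatically inherits both properties.
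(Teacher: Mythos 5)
Your construction is, at its core, the same as the paper's: grow an idempotent $e$ inside $J$ by repeatedly finding an idempotent in the residual right ideal $(1-e)J$ via Fact~\ref{fact:non-nil}, orthogonalizing it against $e$, and adding it on; terminate when the residual is nilpotent (hence contained in $\Rad(\A)$), at which point $e+\Rad(\A)$ is a left identity of $(J+\Rad(\A))/\Rad(\A)$. The algebraic verifications you sketch all go through; one small omission is why $e_{k+1}=\eta(1-e^{(k)})\neq 0$ --- if it were zero then $\eta=\eta e^{(k)}$, which together with $e^{(k)}\eta=0$ forces $\eta=\eta^2=\eta e^{(k)}\eta=0$, contradicting that $\eta$ is a nonzero idempotent. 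In the arithmetic model, and over finite fields, your argument is complete.

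The genuine gap is bit complexity over number fields, which the paper's notion of ``deterministic polynomial time'' includes (see the conventions in Section~\ref{sec:prel}) and which is the whole reason Proposition~\ref{prop:jzero} is deployed in Section~\ref{app:sub:large_field}. In your iteration, $y$ is drawn from $(1-e^{(k)})J$, whose basis entries depend on $e^{(k)}$; $\eta$ is then obtained from a degree-$\Theta(n^2)$ polynomial expression in $y$, and $e^{(k+1)}$ is assembled from $\eta$ and $e^{(k)}$. Each round can therefore multiply the bit length of the maintained data by a polynomial factor, and over up to $\dim\A$ rounds this compounds to exponentially long representations. The paper devotes the second half of its proof precisely to this issue: at each stage it re-selects an element $x\in J$ with small coordinates whose $n$-th power still has the current rank (perturbing one coordinate at a time over a set $\Lambda$ of size $n^2$, using that the relevant $r\times r$ minor of the matrix of $x^n$ is a polynomial of degree less than $n^2$ in the coordinates of $x$ that does not vanish at the current idempotent; cf.~\cite[Lemma 2.2]{dGIR96}), and then takes the idempotent of the subalgebra generated by that power. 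Without some such size-control step, your proof establishes only a polynomial bound on the number of arithmetic operations, not a polynomial running time over $\Q$ or a number field.
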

\begin{proof}
To compute
$J_0$, it is sufficient to find an idempotent $e$ of $J$ with the
property as in the statement. As $J$ is not nilpotent, one
can find a non-nilpotent element and even an idempotent $e$ 
in $J$, as shown in Fact~\ref{fact:non-nil} and the remark following its proof. 
Compute the right ideal 
$J''=\{x-ex:x\in J\}$. Obviously
$e\A\cap J''=0$. If $J''$ is nilpotent, then $e$ is as requested. 
Otherwise find an idempotent $f$ in $J''$. We have $ef=0$ and
$(e+fe)^2=ee+fefe+efe+fee=e+fe$. So if $fe\neq 0$, then
we can replace $e$ with $e+fe$ which generates a right ideal
larger than $e\A$. If $fe=0$, then $(e+f)^2=e+f$ whence we 
can proceed with $e+f$ in place of $e$. 

Over a number 
field, some care is needed 
to ensure that size of the data representing
the idempotent $e$ do not explode. In order to do this,
we fix a basis for $J$ and express $e$ in terms of that basis.
Let $n=\dim J$. 
We consider the matrix representation of $J$
on itself by action from the left. Then for
every element $x\in J$, let
$N_x=\{v\in J:x^nv=0\}$ be the generalized $0$-eigenspace
of $x$. We have $J=x^nJ\oplus N_x$, and there is an idempotent
$e_x$ in the subalgebra generated by $x^n$ with $e_xJ=x^nJ$.
Assume that we have an idempotent $e\in J$ at hand with $\dim(eJ)=r$.
Considering $e$ as an $n$ by $n$ matrix, we have that $e=e^n$ has
rank $r$: there is an $r$ by $r$ submatrix whose determinant of $e^n$ 
is nonzero. We consider this determinant for the $n$th power 
of the matrix of 
 a generic element $x$ form $J$. This determinant has degree at most
$nr<n^2$ in the coordinates of $x$. We have at hand $e$, that is
a specific assignment for the coordinates on which this polynomial
takes a nonzero value. Given a subset $\Lambda$ of size $n^2$ of $\F$,
we can replace the first coordinate of $e$ by
an element of $\Lambda$ such that the for the new element $x$,
its power $x^n$ has rank at least $r$. Then we can proceed with the second
coordinate, and so on, see~\cite[Lemma 2.2]{dGIR96} for a formal statement.
When finished, we have an element $x$ of small coordinates such that
$x^n$ has still rank at least $r$. Now the identity element $e_x$ of the
subalgebra of $J$ generated by $x^n$ will have the same rank
and still moderate coordinates. (This algebra
is spanned by $x^n,x^{2n},\ldots,x^{n^2}$.)
We replace $e$ with $e_x$ and continue increasing its rank 
if $(1-e)J$ is not nilpotent.
\end{proof}

We call the right ideal $J_0$ as in Proposition~\ref{prop:jzero} 
the projective module associated to $J$, and denote it by $P(J)$. 
For a nilpotent right ideal $J$ we set $P(J)=0$. As the decomposition
of $P(J)$ into principal indecomposables reflects faithfully the decomposition
of $(J+\Rad(\A))/\Rad(\A)$ into simple modules, the map $J\mapsto
P(J)$ is \emph{monotone} in $J$ modulo the radical, in the following sense: if 
$(J+\Rad(\A))/\Rad(\A)$
is isomorphic to a proper submodule of $(J'+\Rad(\A))/\Rad(\A)$, then
$P(J)$ is isomorphic to a proper submodule of $P(J')$.

\begin{fact}\label{fact:conj} 
Let $\A$ be a finite dimensional semisimple algebra with identity,
let $\A_1,\ldots,\A_\ell$
be the simple components of $\A$, and let $\pi_j:\A\rightarrow \A_j$, 
$j\in[\ell]$, 
be the corresponding projections. Suppose that 
$e$ and $f$ are idempotents in $\A$ such that the rank of $\pi_j(e)$
is the same as that of $\pi_j(f)$ for $j=1,\ldots,\ell$. Then
$e\A$ and $f\A$ are isomorphic as right $\A$-modules.
\end{fact}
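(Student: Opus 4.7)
The plan is to reduce to the simple case via the canonical decomposition of $\A$, and then invoke the classical fact that in a simple algebra two idempotents of equal rank generate isomorphic right ideals.

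First I would use the identification $\A=\A_1\oplus\cdots\oplus\A_\ell$. Under this identification, any idempotent $e\in\A$ decomposes as $e=e_1+\cdots+e_\ell$ with $e_j=\pi_j(e)$, each $e_j$ being an idempotent of $\A_j$ (possibly zero), and similarly $f=f_1+\cdots+f_\ell$. Since the $\A_j$ are two-sided ideals annihilating each other, $e\A=\bigoplus_{j=1}^{\ell} e_j\A_j$ and likewise $f\A=\bigoplus_{j=1}^{\ell} f_j\A_j$, as right $\A$-modules. Hence it suffices to show that, for each $j$, $e_j\A_j\cong f_j\A_j$ as right $\A_j$-modules.

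Next, I would fix a $j$ and work inside the simple algebra $\A_j\cong M(n_j,D_j)$ for some division algebra $D_j$. Here I would use either of two routes. The slick route: $\A_j$ has, up to isomorphism, a unique simple right module $V_j$ (of $D_j$-dimension $n_j$), and any finitely generated right $\A_j$-module is a direct sum of copies of $V_j$, with the number of copies determined by the $D_j$-dimension. A direct computation (or the standard trace/rank identity for idempotent matrices) gives that $e_j\A_j$, viewed as a right $M(n_j,D_j)$-module, is isomorphic to $V_j^{\oplus r_j}$ where $r_j$ is exactly the matrix rank of $e_j$. Thus $e_j\A_j\cong V_j^{\oplus r_j}\cong f_j\A_j$ by the hypothesis $\rk(e_j)=\rk(f_j)$. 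The alternative route, which is slightly more explicit and which I would use if I want an isomorphism in hand: two idempotents of equal rank in $M(n_j,D_j)$ are conjugate by some unit $u_j\in\GL(n_j,D_j)$, i.e.\ $u_j^{-1}e_ju_j=f_j$; then right multiplication $x\mapsto u_j^{-1}x$ sends $e_j\A_j$ bijectively to $f_j\A_j$ and commutes with the right $\A_j$-action, giving the desired module isomorphism.

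Assembling the component isomorphisms $\varphi_j\colon e_j\A_j\to f_j\A_j$ into $\varphi=\bigoplus_j\varphi_j$ yields the required right $\A$-module isomorphism $e\A\to f\A$. There is no real obstacle: the only point that needs care is to confirm that the notion of ``rank of $\pi_j(e)$'' used in the statement is the module-theoretic invariant (equivalently, the number of simple summands in the decomposition of $e_j\A_j$, equivalently the $D_j$-dimension of $e_j\A_j$ divided by $n_j$), which is the standard matrix rank under the isomorphism $\A_j\cong M(n_j,D_j)$. Once that is fixed, the proof is essentially a one-line invocation of Wedderburn structure in each simple component.
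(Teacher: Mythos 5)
Your proposal is correct and follows essentially the same route as the paper: decompose $e\A$ and $f\A$ along the simple components, and in each component use that the unique simple right module occurs with multiplicity equal to the rank of the idempotent divided by the dimension of that simple module. The alternative conjugacy argument you sketch is a fine variant but adds nothing needed here.
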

\begin{proof}
Indeed, for each individual $j$, $\pi_j(e\A)$ and $\pi_j(f\A$), respectively,
 are 
direct sums of minimal right ideals of $\A_j$, which are, as modules,
isomorphic copies of the same simple right $\A$-module $S_j$. Here we used
the fact that, for the simple algebra $\A_j$, up to isomorphism there is one
simple right module, which is present in $\A_j$ as a minimal right ideal. 
The multiplicity of $S_j$ in the decomposition of $e\A$ (resp. $f\A$)
is then just the quotient of the rank of $\pi_j(e)$ (resp. $\pi_j(f)$) by
the dimension of $S_j$.
\end{proof}

Now we are ready to upgrade the algorithm in Section~\ref{subsec:sym_outline} to 
work 
without knowing the radical of 
$\Adj(\vecB)$. 
\begin{proposition}\label{prop:large_field}
Let $\A$ be an $m$-dimensional algebra with identity,
let $a$ be a zero-divisor in $\A$ and let $b\in \A$
such that $a+\Rad(\A)$ and $b+\Rad(\A)$ behave like
$a$ and $b$ in Lemma~\ref{lem:key}. If $S$ is a subset 
of the base
field of size $\Omega(m^2$), then for at least one $\lambda\in S$ we have
$\dim P((\lambda a+b)\A)>\dim P(a\A)$.
\end{proposition}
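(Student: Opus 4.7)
The plan is to reduce to a component-by-component analysis in the semisimple quotient $\bar\A = \A/\Rad(\A)$, exploiting the fact (recorded just before the proposition) that $\dim P(J)$ is determined by the $\bar\A$-module isomorphism type of $(J+\Rad(\A))/\Rad(\A)$. Decompose $\bar\A = \bar\A_1 \oplus \cdots \oplus \bar\A_\ell$ into simple summands, and write $\bar a = \sum_j \bar a_j$, $\bar b = \sum_j \bar b_j$ with $\bar a_j, \bar b_j \in \bar\A_j$. Since $\Ann_r(\bar a) = \bigoplus_j \Ann_r(\bar a_j)$ and $\bar a\bar\A = \bigoplus_j \bar a_j\bar\A_j$, the hypothesis $\bar b\,\Ann_r(\bar a) \not\subseteq \bar a\bar\A$ yields at least one index $i$ for which $\bar b_i\,\Ann_r(\bar a_i) \not\subseteq \bar a_i\bar\A_i$ inside $\bar\A_i$.

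Next I would produce, for each simple component, a rank estimate. For the distinguished index $i$, Lemma~\ref{lem:inc_rank} with $A = \ell_{\bar a_i}$ and $B = \ell_{\bar b_i}$ acting on $\bar\A_i$ by left multiplication gives $\dim((\lambda\bar a_i + \bar b_i)\bar\A_i) > \dim(\bar a_i\bar\A_i)$ for all but at most $\dim\bar\A_i$ values of $\lambda\in S$. For each other component $j\ne i$, I need the weaker assertion $\dim((\lambda\bar a_j + \bar b_j)\bar\A_j) \ge \dim(\bar a_j\bar\A_j)$ generically, since the hypothesis of Lemma~\ref{lem:inc_rank} may well fail in these components. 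This should follow from a rank-semicontinuity argument: pick a $\rk(\ell_{\bar a_j}) \times \rk(\ell_{\bar a_j})$ submatrix of $\ell_{\bar a_j}$ with nonzero determinant; the determinant of the corresponding submatrix of $\lambda\ell_{\bar a_j} + \ell_{\bar b_j}$ is then a polynomial in $\lambda$ of degree at most $\rk(\ell_{\bar a_j})$ whose leading coefficient is the original minor, so it vanishes at most $\rk(\ell_{\bar a_j}) \le \dim\bar\A_j$ times.

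Finally, I would translate these rank estimates into the conclusion about $P$. Because each $\bar\A_j$ is simple, any two of its right ideals of the same dimension are isomorphic as right $\bar\A_j$-modules (every minimal right ideal of $\bar\A_j$ is a copy of the unique simple right $\bar\A_j$-module). Consequently, for any $\lambda$ surviving all the above exclusions, $\bar a\bar\A$ is isomorphic to a proper $\bar\A$-submodule of $(\lambda\bar a + \bar b)\bar\A$, with properness coming from the strict inequality in component $i$. The monotonicity of the map $J \mapsto P(J)$ recalled in the text then delivers $\dim P((\lambda a+b)\A) > \dim P(a\A)$. Counting exclusions: at most $\sum_j \dim\bar\A_j = \dim\bar\A \le m$ values of $\lambda$ are bad, so any $S$ with $|S|>m$ (in particular any $S$ of size $\Omega(m^2)$) contains a good $\lambda$. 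The only nontrivial step is the rank-semicontinuity argument for the components $j\ne i$: one must ensure that these components do not conceal a rank drop which would destroy the submodule comparison, and the nonzero-minor calculation above is the cleanest way to rule this out.
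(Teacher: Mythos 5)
Your proof is correct and follows essentially the same route as the paper's: reduce to the simple components of $\A/\Rad(\A)$, apply Lemma~\ref{lem:inc_rank} in the one component where its hypothesis must hold, argue that the rank cannot drop in the remaining components, and conclude via Fact~\ref{fact:conj} together with the monotonicity of $J\mapsto P(J)$. Your explicit nonzero-minor argument for the non-distinguished components makes precise a step the paper leaves implicit (``applying Lemma~\ref{lem:inc_rank} iteratively''), and your exclusion count of at most $\sum_j\dim\bar\A_j\le m$ bad values even sharpens the paper's $\Omega(m^2)$ requirement to $|S|>m$.
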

\begin{proof}
We have that, modulo $\Rad(\A)$,
$\lambda a+b$ generates a right ideal that has
dimension higher than the one generated by $a$
for at least one field element $\lambda$ from $S$ if 
$|S|=\Omega(m)$. If $S$ is even larger, say $|S|=\Omega(m^2)$, then $S$ 
will contain
such a field element $\lambda$ with the additional property that, the projection 
of $(\lambda 
a+b)\A$ to any of
the simple components of $\A/\Rad(\A)$ has dimension
at least as high as that for the projection of
$a\A$. 
The existence of such a $\lambda$ is 
ensured by applying Lemma~\ref{lem:inc_rank} iteratively to the projections to the 
simple components.
Then, by Fact~\ref{fact:conj}, the right
$\A$-module $a\A+\Rad(\A)$
can be embedded  into $(\lambda a+b)\A+\Rad(\A)$ 
as a proper submodule.
By monotonicity as explained in the paragraph before Fact~\ref{fact:conj}, 
$P(a\A)$ is isomorphic to a proper submodule
of $P((\lambda a+b)\A)$.
\end{proof}

\subsection{When $|\F|$ is small}\label{app:sub:small_field}

The algorithm in Section~\ref{subsec:sym_outline}, upgraded in 
Section~\ref{app:sub:large_field}, runs in polynomial time even over a number 
field, but 
has the disadvantage of relying on the field to be large enough. In this 
subsection, we present an algorithm that works even for small fields. However, the 
disadvantage of this algorithm is that, over a number field it seems difficult to 
bound the 
bit sizes of intermediate data. Still, combining these two algorithms together we 
are able to cover all fields, so this proves Theorem~\ref{thm:sym}. 

As explained in Section~\ref{subsec:sym_outline}, w.l.o.g. we can assume $\vecB$ 
to be 
non-degenerate. The following Lemma~\ref{lem:star1} is the key to this algorithm. 
Its proof is put in Section~\ref{app:key}. 

\begin{lemma}\label{lem:star1}
Let $\F$ be a field of characteristic not $2$. Let $\A$ be a finite dimensional
$*$-algebra over $\F$ with
an identity element. Let $a$ be a $*$-symmetric
element of $\A$ such that the right ideal $a\A$ has a left 
identity element. Then the right annihilator 
$\Ann_r(a)=\{b\in \A:ab=0\}$ of $a$
is generated, as a right ideal, by a
$*$-symmetric
 element of $\A$.
\end{lemma}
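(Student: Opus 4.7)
The plan is to construct an explicit $*$-symmetric generator of $\Ann_r(a)$ by symmetrizing the non-symmetric generator that naturally arises from the given left identity.

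Let $f$ be the given idempotent left identity of $a\A$, and write $f = ax_0$ for some $x_0 \in \A$. Then $fa = a$ and $f^2 = f$, and applying $*$ (using $a = a^*$) gives $af^* = a$, $(f^*)^2 = f^*$, and $f^* = x_0^* a \in \A a$. The key observation is that every $b \in \Ann_r(a)$ satisfies $f^* b = x_0^*(ab) = 0$; combined with $a(1 - f^*) = 0$, this yields the identification $\Ann_r(a) = (1 - f^*)\A$, which exhibits $\Ann_r(a)$ as the principal right ideal generated by the (generally non-$*$-symmetric) idempotent $1 - f^*$.

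I then propose the $*$-symmetric candidate $h := (1 - f^*)(1 - f)$. A direct computation gives $h^* = (1-f)^*(1-f^*)^* = (1-f^*)(1-f) = h$, and $ah = a(1-f^*)(1-f) = 0 \cdot (1-f) = 0$, so $h$ is $*$-symmetric and $h\A \subseteq \Ann_r(a)$. For the reverse inclusion, unwinding the equation $(1-f^*)(1-f)z = (1-f^*)b$ shows that $\Ann_r(a) \subseteq h\A$ reduces to the right-ideal identity
\[
(1 - f)\A + f^*\A = \A.
\]
A dimension count, built on $\dim(1-f^*)\A = \dim \A - \dim f\A$ (a consequence of the identification $\Ann_r(a) = (1-f^*)\A$ together with $\dim \Ann_r(a) = \dim\A - \dim a\A$), forces $\dim f^*\A = \dim f\A$, and the sum identity is then further reduced to the triviality of the intersection $(1-f)\A \cap f^*\A$.

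The hard part will be establishing this intersection triviality, and the proof depends on a judicious choice of $f$. An element $y$ of the intersection satisfies $fy = 0$ and $y = f^* y_2$, which combine to $ff^* y_2 = a(x_0 x_0^*) a\, y_2 = 0$, and the task is to deduce $y = 0$. My strategy is to first pass to the semisimple quotient $\A/\Rad(\A)$ via a $*$-invariant Wedderburn--Malcev decomposition, available in characteristic $\neq 2$ by Taft's theorem (as already invoked in Proposition~\ref{prop:III_alt}), and then analyze each simple $*$-summand using the classification of Section~\ref{sec:prel}. In each such component the freedom in the choice of left identity of $a\A$ allows me to arrange that $\ker f$ is a non-degenerate subspace with respect to the bilinear form associated to the involution --- this is possible in characteristic $\neq 2$ by a standard non-degenerate-complement construction --- and this non-degeneracy forces the intersection to vanish. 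Lifting the triviality back to $\A$ itself then proceeds by a Nakayama-style successive approximation through the nilpotent radical, again exploiting the characteristic-not-$2$ hypothesis. The main technical obstacle is the simple-case analysis and ensuring that the choice of $f$ is compatible with the lifting through the radical.
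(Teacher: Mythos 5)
Your opening reductions are correct and your candidate generator is genuinely different from the paper's. Like the paper, you identify $\Ann_r(a)=(1-f^*)\A$; but where the paper then reduces to the semisimple and $*$-simple cases and, in the simple case, inducts on proper $*$-symmetric idempotents commuting with $a$ until a centralizer/Osborn-type analysis forces $a$ to be nilpotent with a single Jordan block (so that $a^{n-1}$, a power of the $*$-symmetric $a$, generates), you propose the closed-form element $h=(1-f^*)(1-f)$ and correctly reduce its validity to the right-ideal identity $(1-f)\A+f^*\A=\A$, equivalently $(1-f)\A\cap f^*\A=0$. Your dimension count $\dim f^*\A=\dim f\A$ is also right, since both equal $\dim a\A$. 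If the existence of a suitable $f$ could be established in all cases, this would be an attractive, more explicit argument; and the passage through the radical is actually easier than you suggest, since $(1-f)\A+f^*\A$ is a right ideal and Nakayama reduces the identity $(1-f)\A+f^*\A=\A$ directly to the semisimple quotient.

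The genuine gap is that the existence of a left identity $f$ with $(1-f)\A\cap f^*\A=0$ is exactly where all the difficulty of the lemma lives, and your proposal replaces it with an unproved appeal to a ``standard non-degenerate-complement construction.'' First, the choice of $f$ is not a technicality: the candidate can collapse entirely. Take $\A=M(2,\F)$ with $X^*=X^t$, $\F$ containing $i$ with $i^2=-1$ and $\fdchar\F\neq 2$, and $a=\left(\begin{smallmatrix}1 & i\\ i & -1\end{smallmatrix}\right)$; then $f=\frac12\left(\begin{smallmatrix}1 & -i\\ i & 1\end{smallmatrix}\right)$ is a left identity of $a\A$ with $f^*=1-f$, so $h=(1-f^*)(1-f)=ff^*=0$ while $\Ann_r(a)$ is $2$-dimensional. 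Second, your criterion ``$\ker f$ non-degenerate for the form associated to the involution'' only makes sense when the $*$-simple component is simple and $*$ is adjunction with respect to a form; the exchange type $\B\oplus\B^{\mathrm{op}}$ carries no such form and needs a separate (if easier) argument, which the paper supplies by an explicit construction $(\gamma,\gamma)$. Third, even in the form case the claim is delicate: non-degenerate subspaces of an alternating (or skew-Hermitian) space are constrained in dimension/type, so one must prove that $\im a$ admits a complement of the admissible type; this requires structural facts about $*$-symmetric elements (e.g.\ in the symplectic type $Aa$ is alternating, so $\rk a$ and hence $\dim\ker f$ are even) that you do not establish, and the analogous verification over the quaternionic division algebra is also missing. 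Until these points are supplied — together with the lifting of a suitable idempotent left identity from $\bar a\bar\A$ to $a\A$ — the proof is incomplete, although I believe the strategy is salvageable.
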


We remark that the condition that $a\A$ has a left identity element is
just equivalent to that $a\A$ is projective as a right $\A$-module;
see Section~\ref{app:sub:large_field} for this concept.

We shall only sketch the idea behind the algorithm in the following; a rigorous 
algorithm can be extracted without much difficulty. 

Suppose $\vecB=E\vecB'$ where $E\in\GL(n, 
F)$ and $\vecB'\leq S^\epsilon(n, \F)$. We claim that
$L^\epsilon(\vecB')$ cannot be spanned by nilpotent
elements. Indeed, assume the contrary. Let
$\A=\Adj(\vecB')$, which is a $*$-algebra as $\vecB'$ is $\epsilon$-symmetric. 
Then 
$I\otimes *$ is
an involution of $\overline \A={\overline \F}\otimes_\F \A$,
where $\overline \F$ is an algebraic closure
of $\F$. We identify $\A$ with the subalgebra $1\otimes \A$
and use $*$ for $I\otimes *$. The $*$-symmetric elements of $\overline \A$
are $\overline \F$-linear combinations of $*$-symmetric
elements of $\A$. Using this, we may assume that $\F$ is
algebraically closed.  
Then the $*$-simple components of the factor
of $\A/\Rad(\A)$ contain $*$-symmetric idempotents whose images
are rank one or two matrices 
under some irreducible representation of $\A$.
It follows that any basis for $L^\epsilon(\vecB')$ contains
an element whose image under a matrix representation
of $\A$ has nonzero trace. Such an element cannot be 
nilpotent. 

Thus any basis of $L^\epsilon(\vecB)=L^\epsilon(\vecB')E^{-1}$ contains
an element of the form $Z=XE^{-1}$ where $X$ is a non-nilpotent
element of $L^\epsilon(\vecB')$. Now consider the subspace 
$XL^\epsilon(\vecB')X$. This set equals the set of the 
$*$-symmetric elements of the subalgebra $X\A X$. This subalgebra is not 
nilpotent, as it contains the 
non-nilpotent
element $X$. 
Therefore, just like above,
an arbitrary basis for $XL^\epsilon(\vecB')X$ contains
a non-nilpotent element. 
It follows that
an arbitrary basis 
for $L^\epsilon(\vecB')$
(which may differ from the basis which $X$ is chosen from) 
contains an element $Y$ such that $XYX$ is not nilpotent.
In particular, a basis for $L^\epsilon(\epsilon \vecB^t)=EL^\epsilon(\vecB')$
contains an element $Z'$ of the form $Z'=EY$ where $XYX$ is
not nilpotent. Now consider the sequences $X_k=X(YX)^k$
and $Y_k=Y(XY)^k$, $k\geq 0$. We have $X_0=X$, $Y_0=Y$,
$X_{k+1}=XY_kX$, and $Y_{k+1}=YX_kY$. Furthermore $X_{k+1}E^{-1}=
(XE^{-1})(EY_k)(XE^{-1})$ and $EY_{k+1}=(EY)(X_kE^{-1})(EY)$,
which gives an efficient method for computing $X_kE^{-1}$
and $EY_k$. The kernels of $X_k$ form a nondecreasing chain
of linear spaces. Therefore if $k$ is large enough, then
$\ker X_\ell=\ker X_k$ for $\ell>k$. The sequences consisting
of the kernels of $Y_k$, as well as those consisting
of the images of $X_k$ and the images of $Y_k$, stabilize
as well. From
$Y_{2k+1}=YX_kY_k$, 
we infer that
for sufficiently large
$k$ the kernel of $X_kY_k$ is the same as
that of $Y_k$, 
and the image of $X_kY_k$ is the same as
that of $X_k$. 
Analogous equalities hold for the kernel
and for the image of $Y_kX_k$. These properties of the pair
$X_k,Y_k$ imply that the image of $Y_k$ is 
a direct complement of the kernel of $X_k$, and the image
of $X_k$ is a direct complement of the kernel of $Y_k$.

As $X_kY_k=X_kE^{-1}EY_k$, we can efficiently compute the
product $X_kY_k\in \Adj(\vecB')$, which cannot be zero. Note that if $X_kY_k$ 
is 
invertible, then $X$ is also invertible, and the $XE^{-1}$ in our hand sends 
$\vecB$ 
to $\vecB'$, which solves the problem. So in the following we assume $X_kY_k$ has 
a 
non-trivial kernel. 

Similarly to the stabilization
argument above, we may assume that $k$ is large enough so that the kernel of 
$X_kY_k$ in the left regular representation $\Adj(\vecB')$
is a direct complement of the image. This mean that the right
annihilator of $X_kY_k$ in $\Adj(\vecB')$ (which is the same as
that of $Y_k$) 
 and the right 
ideal
generated by $X_kY_k$ (which is also generated by 
$X_k$)
are complementary to each other and the same holds for
the product $Y_kX_k$.

We claim that there exists 
$\vecB''\leq S^\epsilon(n, \F)$ such that
$\vecB=E'\vecB''$ for some invertible $E'$ and $X_kY_k\in L^\epsilon(\vecB'')$.
To see this, consider an element $Z\in L^\epsilon(\vecB')$ 
which is a generator of the right annihilator
of $X_k$ as a right ideal in $\Adj(\vecB')$. Such $Z$ 
exists by Lemma~\ref{lem:star1}. Put $W=Y_k+Z$. 
Then $W\in L^\epsilon(\vecB')$, and $W$ is invertible since 
$Y_k$ and $Z$ are generators
of right ideals of $\Adj(\vecB')$ complementary to each other. We also
have $X_kW=X_k(Y_k+Z)=X_kY_k$. Let $\vecB''=W^{-1}\vecB'$. Then,
$W^{-1}$ is an invertible element of $L^\epsilon(\vecB')$, so we have
$\vecB''\leq S^\epsilon(n, \F)$. Furthermore, $L^\epsilon(\vecB'')=
L^\epsilon(W^{-1}\vecB')=L^\epsilon(\vecB')W$. In particular, $X_kY_k=X_kW\in
L^\epsilon(\vecB')W=L^\epsilon(\vecB'')$.

Let $J$ (resp.~$K$) be the image (resp.~the kernel)
of $X_kY_k$. From $X_kY_k\in L^\epsilon(\vecB'')$
we infer $J=K^{\perp_{\vecB''}}$. Let
$J'=K^{\perp_\vecB}$ and $K'=J^{\perp_\vecB}$.
These subspaces can be computed efficiently.
Let $U_0$ be an invertible linear map
that maps $J$ to $J'$ and $K$ to $K'$.
Then by replacing $\vecB$ with $U_0^t\vecB$ we can arrange that
$J=K^{\perp_\vecB}$ as well. Then the problem
can be reduced to the subspaces $J$ and $K$.

\subsection{Two lemmas about $*$-algebras}\label{app:key}

For the next two lemmas, we depend crucially on the structure of $*$-algebras as 
described in the first paragraph of ``Structure 
of $*$-algebras'' in Section~\ref{sec:prel}. We begin with a claim
which is used in the proofs of both.

\begin{claim}
\label{claim:centralizer}
Let $\A$ be a semisimple $*$-algebra, and let $a\in \A$ be 
a $*$-symmetric zero divisor, such that there is no proper $*$-symmetric
idempotent $e$ with $ae=ea$. Then $a$ is nilpotent and $\A$ is $*$-simple.
Furthermore, either 
\begin{enumerate}
\item[(i)]
$\A\cong M(n, \D)$ for a division algebra $\D$ and
$a$, as an $n$ by $n$ matrix over $\D$ has just one
Jordan block;
\item[(ii)] There exists a proper idempotent $f\in\A$ with
$af=fa$ and $f^*=1-f$, $f\A f\cong M(n, \D)$ for some
division algebra $\D$ such that $faf$,  
as an $n$ by $n$ matrix over $\D$ has just one
Jordan block.
\end{enumerate}
\end{claim}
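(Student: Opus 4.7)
The plan is to peel off structure in three stages, using throughout that $*$ preserves $C_\A(a)$ since $a^*=a$.

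First, I would show $\A$ is $*$-simple. If $\A = \A_1 \oplus \cdots \oplus \A_k$ is its decomposition into $*$-simple components with $k \ge 2$, the identity $e_1$ of $\A_1$ is a central (hence commuting with $a$), $*$-symmetric, proper idempotent, contradicting the hypothesis. Second, I would show $a$ is nilpotent by examining $\F[a]$, which is $*$-invariant with every element $*$-symmetric (since $a^*=a$ and $*$ fixes $\F$). As $a$ is a zero-divisor, its minimal polynomial factors as $x^k g(x)$ with $k\ge 1$. If $g$ had positive degree, B\'ezout applied to the coprime $x^k$ and $g$ would yield a proper idempotent $e \in \F[a]$, automatically $*$-symmetric and commuting with $a$ --- contradiction. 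Hence $a$ is nilpotent.

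Third, using that $\A$ is $*$-simple, I would split on whether $\A$ is simple or of exchange type. In the exchange case $\A \cong M(n,\D) \oplus M(n,\D)^{\op}$ with $*$ swapping summands via an automorphism $\phi$, the $*$-symmetry of $a$ forces $a=(a_1,\phi(a_1))$. The element $f=(1,0)$ is then a proper idempotent with $f^*=1-f$, commutes with $a$, and $f\A f \cong M(n,\D)$ with $faf$ corresponding to $a_1$. Any proper idempotent $p\in M(n,\D)$ commuting with $a_1$ would lift to $(p,\phi(p))$, a proper $*$-symmetric idempotent of $\A$ commuting with $a$; so the hypothesis forces $a_1$ to be a single Jordan block --- case (ii). When $\A$ is itself simple, $\A \cong M(n,\D)$ with $*$ of the form $X\mapsto A^{-1}\overline{X}^t A$ for some $\epsilon$-Hermitian $A$, so $V = \D^n$ carries a non-degenerate sesquilinear form for which $a$ is self-adjoint nilpotent. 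I would invoke the classical orthogonal normal form for such operators: $V$ decomposes as a form-orthogonal direct sum of indecomposable $a$-invariant pieces, each being either a single Jordan block non-degenerate for the form, or a hyperbolic pair of totally isotropic dual Jordan blocks of equal size. If there were $\ge 2$ such pieces, the projection onto any proper union of them is self-adjoint (because the pieces are orthogonal), commutes with $a$, and is proper --- contradicting the hypothesis. Hence there is one piece: a single non-degenerate Jordan block gives case (i); a hyperbolic pair $V=V_1\oplus V_2$ gives case (ii) with $f$ the projection onto $V_1$ along $V_2$, so that $f^*=1-f$ (the adjoint interchanges the two isotropic summands), $f\A f \cong M(n/2,\D)$, and $faf = a|_{V_1}$ is a single Jordan block.

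The hardest ingredient is the orthogonal normal form invoked in the simple case: establishing that every $*$-symmetric nilpotent operator over a division ring with involution decomposes as an orthogonal sum of non-degenerate Jordan blocks and hyperbolic pairs. The statement is classical but case-heavy across the involution types (orthogonal, symplectic, unitary, quaternion-unitary, quaternion-orthogonal) listed in Section~\ref{sec:prel}. In the write-up I would argue uniformly by induction on $\dim V$: pick a cyclic vector generating a Jordan block of maximal size, test whether the restriction of the form to the subspace it spans (together with its $a$-orbit) is non-degenerate, and either split off that block as a non-degenerate orthogonal summand, or, when it is totally isotropic, pair it with a dual cyclic subspace to split off a hyperbolic piece --- mirroring the standard Witt-style decomposition used for bilinear and sesquilinear forms.
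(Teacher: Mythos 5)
Your proposal is correct in outline, and it reaches the conclusion by a genuinely different route from the paper in the crucial simple case. The paper works entirely inside the centralizer $\Calg=\{x\in\A: xa=ax\}$: the hypothesis makes every $*$-symmetric element of $\Calg$ nilpotent or invertible, so $\Calg/\Rad(\Calg)$ is a division algebra in Osborn's sense; Osborn's theorem then says $\Calg/\Rad(\Calg)$ has no three pairwise orthogonal idempotents and, if it has a proper idempotent at all, has one with $\overline f^*=1-\overline f$, which the paper lifts to $\Calg$ by an explicit iteration (this single argument absorbs both the exchange type and the ``two equal Jordan blocks'' sub-case of (ii)); the ``one Jordan block'' conclusions then follow by producing $e+e^*$ from a block-diagonal idempotent. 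You instead realize the involution on $M(n,\D)$ as the adjoint involution of an $\epsilon$-Hermitian form and invoke the orthogonal normal form of a self-adjoint nilpotent operator, so that the hypothesis forces a single form-indecomposable piece. Your treatment of $*$-simplicity, nilpotency (via $\F[a]$ and B\'ezout — essentially the paper's argument), and the exchange case is clean and arguably more transparent than the paper's; the trade-off is that the entire difficulty is relocated into the normal-form lemma, whose proof is comparable in weight to the Osborn-plus-lifting argument it replaces, and which you only sketch.

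One point in that sketch needs repair. For a maximal cyclic subspace $W=\D[a]u$ of block size $k$, the Gram matrix in the basis $u,au,\dots,a^{k-1}u$ is the Hankel matrix of $c_l=b(u,a^lu)$ with $c_l=0$ for $l\ge k$, so the dichotomy is ``non-degenerate ($c_{k-1}\neq0$)'' versus ``degenerate ($c_{k-1}=0$)'' — not ``non-degenerate versus totally isotropic.'' In the degenerate case you must first find $v$ with $b(a^{k-1}u,v)\neq0$ (so $W'=\D[a]v$ is a block of the same size and the $W$--$W'$ pairing is anti-triangular invertible), check that $W\oplus W'$ is non-degenerate, and then perform the hyperbolic adjustment — replacing $u$ and $v$ by suitable $\D[a]$-combinations — to make both summands totally isotropic. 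That last step is what you actually need, since $f^*=1-f$ for the projection onto $V_1$ along $V_2$ holds only when both summands are isotropic; for merely degenerate summands the adjoint of the projection is not $1-f$. This is standard (it is the classification of $\epsilon$-Hermitian forms over $\D[t]/(t^k)$ into unimodular ranks and hyperbolic planes), but it should be stated and proved as such rather than folded into the induction as written.
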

We remark that case (ii) captures two sub-cases:
either $f$ is in the center of $\A$, and $\A$ is
of exchange type; or $\A\cong M_{2n}(\D)$ and
$a$, as an $2n$ by $2n$ matrix, has two Jordan blocks 
of size $n$.

\begin{proof}
Let $\Calg$ be the centralizer of $a$ in $\A$, that is, $\Calg=\{x\in \A:xa=ax\}$.
Then it is straightforward to see that $\Calg$ is a $*$-subalgebra of
$\A$ containing $a$. Then, as subalgebras generated
by non-nilpotent zero divisors do contain nontrivial idempotents,
$a$ is nilpotent. Furthermore, as the center of $\A$ is contained
in $\Calg$, there are no $*$-invariant central idempotents in $\A$. In other 
words, $\A$ is $*$-simple: it is either simple or
of exchange type consisting of two simple components. 

Furthermore, every $*$-symmetric element of $\Calg$ is either
nilpotent or invertible, that is, $\Calg$ is Osborn-local, 
whence $\Calg/\Rad(\Calg)$ is an Osborn-division algebra. By
Osborn's theorem \cite[Theorem 2]{Osb70}, $\Calg/\Rad(\Calg)$ cannot contain 
three or more
pairwise orthogonal idempotents, and if it contains 
any proper idempotent then it also contains a proper idempotent
${\overline f}$ with ${\overline f}^*=1-{\overline f}$.
We claim that in the latter case there exists a proper
idempotent $f$ in $\Calg$ such that $f^*=1-f$. Such
an $f$ can be constructed using the following iteration.
Let $J$ be an ideal of $\Calg$ contained in $\Rad(\Calg)$,
and $f$ be an element of $\Calg$ such that $f^2-f\in J$
and $f+f^*-1\in J$. Initially, $J=\Rad(\Calg)$ and $f$
is an arbitrary element of the coset ${\overline f}$. 
The iterative step starts with arranging that $f^2-f\in J^2$:
this can be done by any standard lifting technique,
e.g., by replacing $f$ with $3f^2-2f^3\in f+J$
(see \cite{DrozdKirichenko}, proof of Lemma 3.2.1).
Then, as this new $f$ is from the same residue class modulo $J$
as the old one, the property $f+f^*-1\in J$ is preserved.
Next we put
$r=\frac{1}{2}(f+f^*-1)\in J$ 
and 
$f'=f-r$. Then we have $f'+f'^*-1=f+f^*-2r-1=0\in J^2$.
Furthermore, ${f^*}^2-f^*\in J^2$ can be rewritten as
$((1-f)+2r)^2-(1-f)-2r\in J^2$, which 
implies $2r-2fr-2rf+(f^2-f)+4r^2\in J^2$,
whence $-r+rf+fr\in J^2$.
It follows that ${f'}^2-f'=f^2-rf-fr+r^2-(f-r)= 
(f^2-f)-(rf+fr-r)+r^2\in J^2$. 
Therefore, we can proceed
with $f'$ in place of $f$ and $J^2$ in place of $J$.
Note that if $\A$ is of exchange type, then $f$ can be
even chosen as the identity element of one of the simple
components of $\A$.  In any case, the subalgebra $f\A f$ 
must be simple, so isomorphic to $M(n, \D)$ for some $n$ and 
$\D$. Assume that $faf$ has more than one Jordan blocks.
Let $e\in f\A f$ be the block diagonal
matrix which is the identity in one of the Jordan blocks
of $a$ and zero elsewhere. Then $e$ is an idempotent commuting with
$a$ with $ef=e\neq f$. Then $e^*\in f^*\A f^*$ and
$e+e^*$ is a proper idempotent commuting with $a$. So
$faf$, as an $n$ by $n$ matrix over $\D$, must consist of 
single (nilpotent) Jordan block of size $n$.

If $\Calg$ does not contain proper idempotents then
$\A$ is itself simple, isomorphic to $M(n, \D)$ and
again, $a$ has just one (nilpotent) Jordan block of
size $n$.
\end{proof}

%
\vskip .5em
\noindent{\bf Lemma~\ref{lem:star1}, restated.} Let $\F$ be a field of 
characteristic not $2$. Let $\A$ be a finite dimensional
$*$-algebra over $\F$ with
an identity element. Let $a$ be a $*$-symmetric
element of $\A$ such that the right ideal $a\A$ has a left 
identity element. Then the right annihilator 
$\Ann_r(a)=\{b\in \A:ab=0\}$ of $a$
is generated, as a right ideal, by a
$*$-symmetric
 element of $\A$.
 
\begin{proof}
Note that $e\in\A$ is a left identity element of the right ideal
$a\A$ if and only if $ea=a$ and there exists $d\in \A$ such that 
$e=ad$. Let $e$ be such an element. 
Then $e^*=d^*a$ is a right
identity element of the left ideal $\A a$. We claim
$\Ann_r(a)$ is the right ideal of $\A$ generated by 
the idempotent $1-e^*$. Indeed, assume that $ab=0$.
Then $e^*b=d^*ab=0$. Conversely, if $e^*b=0$ then
$ab=ae^*b=0$. Thus $b\in \Ann_r(a)$ if and only if
$e^*b=0$. The latter equality  is equivalent to that 
$b=(1-e^*)b'$ for some $b'\in \A$.

Next we show that we may assume that $\A$ is semisimple.
To see this, let $e$ be an idempotent as above. Then
$\Ann_r(a)=(1-e^*)\A$. Let $\phi$ be the projection
$\A\rightarrow {\overline \A}:=\A/\Rad(\A)$. We denote
the involution of $\overline \A$ induced 
also by $*$. 
Obviously, $\phi(e)$ is an idempotent in 
$\phi(a){\overline \A}$ with $\phi(e)\phi(a)=\phi(a)$.
It follows that the right annihilator of $\phi(a)$
is generated by $1-\phi(e^*)$, whence it coincides
with $\phi(\Ann_r(a))$. Similarly, the left annihilator
of $\phi(a)$ is the left ideal of $\Phi(\A)$
generated by $(1-\phi(e))$ and it coincides with
$\phi(\Ann_l(a))$. It follows that 
\begin{equation}\label{eq:two_side}
\Ann_r(\phi(a))\cap \Ann_l(\phi(a))=
\phi(\Ann_r(a)\cap \Ann_l(a)).
\end{equation}
(The annihilators on the
left hand side are understood as inside $\overline \A$.)
Assume that the assertion holds in ${\overline \A}$. 
Then there is an element $\overline b$ of $\overline \A$
such that ${\overline b}^*={\overline b}$, and 
${\overline b}$ generates the right annihilator of $\phi(\A)$
in ${\overline \A}$. Notice that ${\overline b}$ annihilates
$\phi(a)$ from the left as well. Therefore, by Equation~\ref{eq:two_side}, 
${\overline b}$ has 
a preimage $b$ in $\Ann_r(a)\cap \Ann_l(a)$. We have
$b-b^*\in \Ann_r(a)\cap \Ann_l(a)\cap\Rad(\A)$. Therefore,
by replacing $b$ with $b-\frac{1}{2}(b-b^*)$ we can arrange
that $b^*=b$. We have $b\A+\Rad(\A)=\Ann_r(a)+\Rad(\A)$.
The isomorphism theorem, applied to the linear spaces 
$b\A+(\Ann_r(\A)\cap \Rad(\A))$, $\Ann_r(a)$, and
$\Rad(\A)$, gives
$b\A+(\Ann_r(\A)\cap \Rad(\A))/(\Ann_r(a)\cap \Rad(\A)\cong
\Ann_r(a))/(\Ann_r(a)\cap \Rad(\A))$. It follows that 
$b\A+(\Ann_r(a)\cap \Rad(\A))=\Ann_r(a)$. {}From
$\Ann_r(a)=(1-e^*)\A$, we infer that the radical
of $\Ann_r(a)$ as a right $\A$-module is
$\Ann_r(a)\Rad(\A)=(1-e^*)\Rad(\A)=\Ann_r(a)\cap\Rad(\A)$. Thus
$b$ generates the right $\A$-module $\Ann_r(a)$ modulo its radical,
whence $b\A=\Ann_r(a)$. Therefore we may indeed 
assume that $\A$ is semisimple. Furthermore, by going over 
the $*$-simple components, we can assume that
$\A$ is even $*$-simple, that is, 
$\A$ is either a simple algebra or a direct sum $\B\oplus \B^{\tiny\rm op}$ where
$\B$ is a simple algebra and $(\beta,\beta')^*=(\beta',\beta)$. 

Assume that $\A=\B\oplus \B^{\tiny\rm op}$. Then $a$ is of the form 
$(\alpha,\alpha)$ and $\Ann_r(a)$ consists of pairs $(\beta,\beta')$
where $\beta\in \Ann_r(\alpha)$ and $\beta'\in\Ann_l(\alpha)$. Let
$\delta$ be an invertible element of $\B$ such that $\alpha\delta$
is an idempotent. Then $1-\alpha\delta$ is 
a generator for the right annihilator and for the left annihilator of 
$\alpha\delta$ inside $\B$ at the same time. Note that the latter is the same as 
the 
left annihilator
of $\alpha$. Put $\gamma=\delta(1-\alpha\delta)$. Then $\alpha\gamma=0$
and $\gamma\alpha=0$. Also, the dimensions of the one-sided ideals generated 
by $\gamma$ are the same as those generated by 
$1-\alpha\delta$. 
Therefore $\gamma$ generates as one
sided ideals both the left and the right annihilators of $\alpha$ inside
$\B$. It follows that $(\gamma,\gamma)$ is a generator for $\Ann_r(a)$
as a right ideal of $\A$.

The rest of the proof is for the case where $\A$ is a simple algebra:
$\A\cong M(n, \D)$, where $\D$ is a division algebra. Note that in 
the (semi-)simple case every one-sided ideal is generated by an
idempotent. 

We first consider the following case: suppose we have $f\in\A$ which is a proper 
idempotent in  
$\A$ such that $f^*=f$ and $fa=af$. If $b\in \Ann_r(a)$ 
then $afb=fab=0$ and $a(1-f)b=(1-f)ab=0$, whence 
$\Ann_r(a)$ is decomposed into the direct sum 
of $f\Ann_r(a)=f\A\cap \Ann_r(a)$ and 
$(1-f)\Ann_r(a)=(1-f)\A\cap \Ann_r(a)$.
From the fact that,
 in a simple algebra, an arbitrary right ideal $J$ is generated 
by the subspace $Jg$ for any nonzero idempotent element $g$, we infer
that $f\Ann_r(a)=f\Ann_r(a)f\A$. We claim that $f\Ann_r(a)f$
is the right annihilator of $faf$ in the subalgebra $f\A f$.
Indeed, if $ab=0$ then $faffbf=fabf=0$, demonstrating 
$f\Ann_r(a)f\subseteq f\A f\cap \Ann_r(faf)$. To see the reverse
inclusion let $fbf\in \Ann_r(faf)$. Then $0=fafbf=afbf$, whence
$fbf\in \Ann_r(a)$ and $fbf=f^2bf^2\in f\Ann_r(a)f$. Assume by
induction that the statement of the lemma holds in the simple $*$-invariant
subalgebra $f\A f$. Then there exists an element $fb_1f$ with
$fb_1^*f=fb_1f$ generating $f\Ann_r(a)f$ as a right ideal
of $f\A f$. Then by the discussion above, the right ideal
of $\A$ generated by $fb_1f$ is $f\Ann_r(a)$. Similarly,
we can use induction to show the existence of $b_2$ with
$(1-f)b_2(1-f)=(1-f)b_2^*(1-f)$ such that $(1-f)b_2(1-f)$
generates the right ideal $(1-f)\Ann_r(a)$. Then
the element $b=fb_1f+(1-f)b_2(1-f)$  is $*$-symmetric
generator for the right ideal $\Ann_r(a)$.

We then consider the case
when there are no proper 
$*$-symmetric idempotents in the simple algebra $\A\cong M(n, \D)$
commuting with $a$.
Then, by Claim~\ref{claim:centralizer} $a$ is nilpotent,
and, as an $n$ by $n$ matrix over $\D$, 
either has one Jordan block of size $n$, or has two
Jordan blocks of $\frac{n}{2}$. Then
the annihilator of $a$ is generated by $a^{n-1}$ or by
$a^{\frac{n}{2}-1}$, respectively.
\end{proof}

\noindent{\bf Lemma~\ref{lem:key}, restated.} Let $\cA$ be a semisimple 
$*$-algebra over a field $\F$, $\fdchar(\F)\neq 2$. Let 
$a\in \cA$ be a $*$-symmetric zero-divisor. Then there exists a $*$-symmetric 
element $b\in \cA$, such that $b\Ann_r(a)\not\subseteq a\cA$, where 
$\Ann_r(\cdot)$ denotes the set of right annihilators.

\begin{proof}
Assume that there exists a proper idempotent $f$ in $\A$
such that $f^*=f$ and $fa=af$. Then either $fa=af=faf$ is a zero-divisor
in $f\A f$, or $(1-f)a(1-f)$ is a zero-divisor in $(1-f)\A(1-f)$.
(For, if $c_1\in f\A f$ such that $f=fafc_1$, and  $c_2\in (1-f)\A(1-f)$
such that $(1-f)=(1-f)a(1-f)c_2$, then from $fc_1=c_1$ and $fc_2=0$ we
infer that $a(c_1+c_2)=
((faf)+(1-f)a(1-f))(c_1+c_2)=fafc_1+(1-f)a(1-f)c_2=f+(1-f)=1$.)
Assume that $faf$ is a zero-divisor in $f\A f$. Then by induction,
there exist $b_1,c_1\in f\A f$
such that $b_1^*=b_1$, $faf c_1=0$ and $b_1c_1\not \in faf f\A f$.
We have $c_1\in \Ann_r(a)$ because 
$0=fafc_1=af^2c_1=afc_1=ac_1$.
We claim
that $b_1c_1\not \in a\A$. Indeed, assume that $b_1c_1=ad$ for some
$d\in \A$. Then $fb_1=b_1$ and $c_1f=c_1$ imply $b_1c_1=fadf$ and
$fadf=faffdf\in
faf f\A f$, a contradiction with the assumption. 

Based on the above, it is sufficient to prove the assertion
when $\A$ has no proper $*$-symmetric idempotents commuting
with $a$. Then, 
by Claim~\ref{claim:centralizer},
$\A$ is $*$-simple and $a$ is nilpotent. Furthermore,
either $\A\cong M(n, \D)$ for some division algebra
$\D$ and $a$ has a single Jordan block of size $n$,
or there exists an
idempotent $f$ of $\A$ such that $f^*=1-f$, $fa=af$,
$f\A f\cong M(n, \D)$ for some division algebra $\D$, 
and $fa=faf$ has just one Jordan block. The latter case
covers two cases from the point of view of the structure
of $\A$: it can be either simple or a sum of two simple
components, corresponding to the case whether or not $f$ is
central in $\A$.

In the latter case we consider an isomorphism 
$\phi:f\A f\rightarrow M(n, \D)$ such that $\phi(fa)$ 
is in Jordan normal form. Then let $b_1\in f \A f$ such
that $\phi(b_1)$ is everywhere zero except in the lower
left corner. Then $b=b_1+b_1^*$ will do.

When $\A\cong M(n, \D)$ and 
$a$ has a single Jordan block of size $n$,
to prove the lemma, it is enough to show the following: there exists an 
appropriate basis for 
$\D^n$ such that the following holds. First, $a$, as an $n$ by $n$ matrix, is of 
Jordan normal form. Second, there 
is a matrix that is everywhere zero,
except at the lower left corner corresponding to 
a $*$-symmetric element of $\A$.
To this end, let $f$ be a right identity element
of the left ideal $\A a^{n-1}$. Then $f$ is a primitive
idempotent in $\A$, and $f^*$ is a left identity
element of the right ideal $a^{n-1} \A$. As $\A a^n=a^n \A=0$,
we have $fa=af^*=0$ and $ff^*=0$. We claim that we can arrange that
$f^*f=0$ as well. Indeed, setting $f'=f-f^*f$,
we have ${f'}^*f'=f^*f-f^*f f^*-f^*f f+f^*f f^*f=f^* f-0-f^*
f-0=0$, $f'=(1-f^*)f\in \A f$, and $ff'=f$. The latter two properties
show that $f'$ is an identity element of $\A f$. We replace $f$ with $f'$.
Then $f$ and $f^*$ are orthogonal primitive idempotents.

The subspace $f\A f^*$ is $*$-invariant. Assume that
$f\A f^*$ does not contain nonzero $*$-symmetric elements.
Then for every $c\in f^*\A f$ we have $c^*=-c$ (other wise $c^*+c$ would be 
nonzero and 
$*$-symmetric for some nonzero $c$.) Put $g=f+f^*$. Then the subalgebra $\A'=g\A 
g$ 
is isomorphic to $M(2, \D)$. Let $\phi:g\A g \mapsto M(2, \D)$
be an isomorphism that maps $f$ and $f^*$ to the block diagonal
idempotent matrices $\mbox{diag}(0,1)$
and to $\mbox{diag}(1,0)$, respectively. Then $\phi(a^{n-1})$
is a matrix which is nonzero exactly at the upper right corner, and
$\phi(f \A f^*)$ consist of matrices whose entries are all zero
except possibly that at the lower left corner. It
follows that there exists element $c\in f^*\A f$ such that $\phi(c)$
is nonzero only at the lower left corner, where the entry is
the inverse (in $\D$) of the upper right entry of $\phi(a^{n-1})$.
For this $c$ we have $a^{n-1}c=f^*$ and $ca^{n-1}=f$. It follows
that the subspace spanned by $f^*$, $f$, $a^{n-1}$ and $c$
form a subalgebra. (It is actually isomorphic to the algebra
of the $2$ by $2$ matrices over the base field.) The assumption
$c^*=-c$ implies that this subalgebra is $*$-invariant.
However, it is straightforward to verify that the restriction
of $*$ does not give an involution on this subalgebra: $(ca^{n-1})^*=f^*\neq 
-f^*=-a^{n-1}c$.

Thus there exists a nonzero $*$-symmetric element $c\in f\A f^*$.
We have $a^{n-1}\in \Ann_r(a)$ and $ca^{n-1}\in f\A f\setminus \{0\}$.
Then $ca^{n-1}\not\in a\A$.
\end{proof}

\appendix

\section{Comparison with the result of Berthomieu et al. 
\cite{BFP15}}\label{app:compare}

Recall that in \cite{BFP15}, the algorithm works under the two conditions: (1) 
there exists a non-degenerate form in the linear span of the given form, and (2) 
the underlying field is large enough. The algorithm needs to find a solution 
possibly from an extension field. To compare our algorithm with theirs, we first 
present an algorithm that works under conditions (1) and (2) but does not require 
going over an extension field. It follows the general principle of our algorithm 
in 
Section~\ref{sec:iso} and suggests the role of a hidden $*$-algebra. This allows 
us to explain what the algorithm of \cite{BFP15} is like, and why our algorithm 
avoids using extension fields. 

Suppose we are given two tuples of symmetric matrices $\vecB=(B_1, \dots, B_m)$ 
and $\vecC=(C_1, \dots, C_m)$, $B_i, C_j\in M(n, \F_q)$ where $q$ is an odd prime 
power. We aim to find $X\in\GL(n, \F_q)$ such that $\forall i\in[m]$, 
$X^tB_iX=C_i$. 
The regularity condition (1) as well as the field size condition (2) in 
\cite{BFP15} imply that we can compute and 
therefore assume w.l.o.g. $B_1$ is non-singular. Therefore $B_1$ and $C_1$ must be 
isometric and we can transform $C_1$ to $B_1$ using techniques from 
Section~\ref{subsubsec:VI},
so in the 
following we assume $B_1=C_1$. 

Now define an involution $*$ on $M(n, \F_q)$ by $A^*=B_1^{-1}A^tB_1$. Note that 
$A^t=B_1A^*B_1^{-1}$. Then the equation $X^tB_iX=C_i$ is equivalent to 
$B_1X^*B_1^{-1}B_iX=C_i$. For $i=1$ this is just $X^*X=I_n$, or, $X^*=X^{-1}$; in 
other words, $X$ is a $*$-unitary element. 
For $i=2, \dots, m$, let $B_i'=B_1^{-1}B_i$ and $C_i'=B_1^{-1}C_i$, and consider 
the system of equations $B_i'X=XC_i'$. If a $*$-unitary element $X$ satisfies  
$B_i'X=XC_i'$ for $i\in\{2, \dots, m\}$, then it also satisfies 
$B_i'^*X=XC_i'^*$ for $i\in\{2, \dots, m\}$. We use the algorithm for module 
isomorphism problem to compute a solution $A\in\GL(n, \F_q)$ to the $2m-2$ 
equations 
$B_i'X=XC_i'$ and $B_i'^*X=XC_i'^*$, $i=2, \dots, m$. Let $D=\{Y\in M(n, \F_q) 
\mid \forall i\in\{2, \dots, m\}, B_i'Y=YB_i', 
B_i'^*Y=YB_i'^*\}$. It is easy to verify that $D$ is a $*$-subalgebra of $M(n, 
\F_q)$. The set of all solutions to 
these equations is just $\{YA : Y\in D\}$. 
The question then becomes to find some such $Y$ so that $(YA)^*YA=I_n$, that is, 
$Y^*Y=(AA^*)^{-1}$. Note that $A^*A$ is a $*$-symmetric element of $D$. The 
problem then becomes to solve the decomposition problem for this $*$-symmetric 
element in the $*$-symmetric algebra $D$, which can be solved using the method from
Section~\ref{sec:iso}. 

The above procedure just differs from the main algorithm in \cite{BFP15} in that 
the latter algorithm does not solve the decomposition problem 
in $D$, but in the smaller (commutative) algebra generated by $A^*A$. 

Now we explain why using extension fields is necessary in \cite{BFP15}. 
Consider the following instance: $m=2$, $B_1=C_1=I_n$, and $B_2=C_2=\diag(\omega, 
\dots, \omega, 1, \dots, 1)$ where $\omega\in\F$ is a non-square and appears $2k$ 
times in $B_2$. Then by techniques from 
Section~\ref{subsubsec:VI}, $B_2=A^tA$ for some $A\in M(n, \F)$. 
As $(B_1, B_2)=(C_1, C_2)$ it is trivial that these two tuples are isometric. 
Following the above procedure, we see that $B_2'=C_2'=B_2=C_2$ as 
$B_1=I_n$ and $*$ is just the transposition. Suppose the algorithm for the module 
isomorphism returns to us $A$ as 
the solution. Note that $A$ is a valid solution to $B_2X=XB_2$, as 
$A^tAA=AA^tA\iff A^tA=AA^t\iff B=B^t$. 
Then we need to solve $Y^tY=(AA^t)^{-1}=B_2^{-1}$ where $Y$ is from the 
centralizing algebra of $B_2$, which is possible by taking $Y=A^{-1}$.
On the other hand, if we insist $Y$ to be from the algebra 
generated by $B_2$, this would not possible over $\F$, by noting that there is no 
diagonal square root of $B_2$ over $\F$. Therefore \cite{BFP15} would need to go 
over an extension field to locate a solution.

\paragraph{Acknowledgements.} 
Part of this research was accomplished while
the first author was visiting the Centre for Quantum Technologies,
National University of Singapore. His research was also partially 
supported by the  Hungarian 
National Research, Development and Innovation Office -- NKFIH, Grant
K115288. 
Y. Q.'s research was supported by the Australian Research Council DECRA 
DE150100720.



\bibliographystyle{alpha}
\bibliography{references}
\end{document}